\numberwithin{equation}{section}
\numberwithin{figure}{section}
\numberwithin{table}{section}
\newtheorem{lemma}{Lemma}[section]
\newtheorem{proposition}{Proposition}[section]
\newtheorem{remark}{Remark}[section]
\newtheorem{definition}{Definition}[section]
\title{Statistical Arbitrage for Multiple Co-Integrated Stocks}
\author{T. N. L\textsc{\v{i}}\thanks{Department of Mathematics, New York University, 251 Mercer Street, New York, NY 10012. {\em nl747@cims.nyu.edu}}~~and
A. Papanicolaou\thanks{Department of Mathematics, North Carolina State University, 2311 Stinson Drive, Raleigh, NC 27695. {\em apapani@ncsu.edu}. This work was partially supported by NSF grant DMS-1907518.}}
\begin{document}
\maketitle
\setstcolor{red}

\begin{abstract}
In this article, we analyse optimal statistical arbitrage strategies from stochastic optimal control problems with multiple co-integrated stocks with eigenportfolios being factors. Optimal portfolio weights are found by solving Hamilton-Jacobi-Bellman (HJB) partial differential equations (PDEs), to which we can introduce a constraint for market neutrality. Our analyses demonstrate sufficient conditions on the model parameters to ensure long-term stability of the HJB solutions and stable growth rates for the optimal portfolios. To gauge how these optimal portfolios perform in practice, we perform backtests on historical prices of the S\&P 500 constituents from year 2000 through year 2021. These backtests suggest two key conclusions: that the optimal portfolios are sensitive to parameter estimation, and that the statistical arbitrage strategies are more profitable in periods when overall market volatility is high.
\newline

\textbf{Keywords:} co-integrated stocks, eigenportfolio, factor model, market-neutral portfolio, matrix Riccati equation, statistical arbitrage, stochastic optimal control.
\newline

\textbf{AMS Subject Codes:} 62P05, 91B28, 93E20. 
\end{abstract}

\section{Introduction}

Statistical arbitrage strategies involve trading among pairs of assets having co-integration. The essential idea is that a pair of co-integrated asset prices have a difference that is mean reverting. This mean-reverting difference is referred to as a spread. For a trader with the ability to sell short and utilise leverage, a possible strategy is to long the cheaper asset, short the expensive asset, and then wait for the spread to converge, at which point the positions can be closed for a profit. This is an example of statistical arbitrage because, while it may seem like a sure profit, there is not any finite time by when a spread would be almost surely converged. Instead there is only a high probability of the spread converging before reaching a fixed and finite investment horizon. 

The model we consider is proposed in \citet{AvellanedaLee2010} and is implemented in \citet{YeoPapanicolaou2017}. The model considers stocks whose total returns have co-integration with the total returns of a set of factors. The factors can be any selection of explanatory variables, traded or untraded, which have explanatory power in the cross sections of stock returns. The spreads are defined as the residuals that are obtained after regressing a stock's total returns onto the factor returns. A stock has co-integration if its spread is a stationary process. We determine whether a spread has stationarity or not by checking that its time series rejects a unit-root hypothesis, which is the same idea as that utilised in the test of \citet{EngleGranger1987}. 

The factors that we utilise are eigenportfolios, which are the orthogonal portfolios constructed from the correlation matrix of stocks returns. Eigenportfolios are an effective factor construction because they are orthogonal, and so the addition of each factor adds another independent variable to the regression. An alternative factor choice is to regress onto exchange traded funds (ETFs), but ETFs were not as prevalent in the early 2000s, which means that long-term backtesting requires synthesising of ETFs. In \citet{AvellanedaLee2010} they find better performance utilising eigenportfolios instead of historical sector ETFs data or synthetic sector ETFs. Trading of eigenportfolios can incur heavy transaction costs because they contain hundreds of stocks but only a few dozen that have co-integration. Therefore, trading signals with lower transaction costs treat the eigenportfolios as untradeable factors and only take positions in the co-integrated stocks. This is a feature of the model that adds some generality, as there are other factors such as the illiquidity returns of \cite{PastorStambaugh2003} that lack tradeability. 

Untradeable factors cause the market to be incomplete, which is in contrast to the complete markets considered in \citet{chiu2013optimal} and \citet{ma2019optimal}. An advantage to using the factor model of \citet{AvellanedaLee2010} is the simplification of parameter estimation, as the factor model allows for drift parameters to be estimated marginally for each stock. Comparatively, it is more challenging to estimate the matrices of drift parameters for generalised models that search for co-integrated linear combinations among the space of traded stocks.

The contribution of this article is the analyses and implementation of solutions to the Hamilton-Jacobi-Bellman equations arising from the model of \citet{AvellanedaLee2010} with non-tradeability of factors. We formulate stochastic optimal control problems where the state variables are the values of the trading portfolios and the spreads
\begin{align*}
Z_{t}^{i}=\int_0^t\frac{dS_u^i}{S_u^i}-\sum_{j=1}^m\beta^{ij}\int_0^t\frac{dF_u^j}{F_u^j}-\alpha^{i}t\,,
\end{align*}
where $S_{t}^{i}$ is the price of the $i^{\mathrm{th}}$ stock for $i=1,\:2,\:\cdots,\:d\in\mathbb{N}$, $F_{t}^{j}$ is the $j^{\mathrm{th}}$ eigenportfolio for $j=1,\:2,\:\cdots,\:m\in\mathbb{N}$, and the pair $\left(\alpha^{i},\:\beta^{ij}\right)$ are the regression coefficients returned by the statistical test for making $Z_{t}^{i}$ stationary. As is the case in \citet{MudchanatongsukPW2008} and \citet{LiTourin2016}, these spreads form a stationary vector Ornstein-Uhlenbeck (OU) process.

The optimal portfolios are the solutions to HJB equations, which in the case of power utility we are able to reduce to systems of ordinary differential equations (ODEs) that include a matrix Riccati equation and a pair of linear equations. We perform long-term stability analyses of these ODEs, which give us indications of soundness of the models for financial applications. In particular, finiteness of the ODEs for any finite-time investment horizon indicates that the models have an absence of arbitrage; if there was an arbitrage then the ODEs would have singularities. In addition, if the solutions of the HJB equations converge to steady states as the investment horizon tends toward infinity, then there are long-term statistical arbitrage portfolios that earn positive profits with probability close to one. 

We analyse HJB equations for an unconstrained portfolio and for another portfolio constrained for market neutrality. Market neutrality is important when doing statistical arbitrage with a factor model, as it immunises the portfolio against market fluctuations. Basic pairs trading strategies where the spreads are directly tradeable have this immunisation built in, see \citet{angoshtari2014}. However, in our case, because the factors are not tradeable, the optimisation needs to be constrained in order to have a market-neutral portfolio.

We implement the optimal portfolios in backtests on historical stock-price data. These studies give us a practical sense for profitability of the optimal strategies. We take the S\&P 500 constituents from year 2000 through year 2021, and then look at the profits, expected returns, volatilities, Sharpe ratios, and maximum drawdowns for out-of-sample portfolios computed with varying estimation windows, both with and without a market neutrality constraint. From these studies, we arrive at two main conclusions about statistical arbitrage strategies: first, these strategies are sensitive to parameter estimation, and second, these strategies have greater potential to out-perform the benchmark during periods of higher overall market volatility. Sensitivity to parameter estimation is in line with the backtesting studies in \citet{YeoPapanicolaou2017} where they show the variation in Sharpe ratios relative to estimation windows and stock selections.

\subsection{Overview of Related Work\label{subsec:1.1}}

A formal definition for statistical arbitrage is given in \citet{HoganJTW2004}. A test for co-integration of financial time series is designed in \citet{EngleGranger1987}, namely, the Engle-Granger test. An application of \citet{EngleGranger1987} for co-integration based trading strategies are shown in \citet{Vidyamurthy2004}, trading of co-integrated pairs alongside methods for filtering and parameter estimation to handle latency is studied in \citet{ElliottHoekMalcolm2005}, and an in-depth statistical analysis of the performance of pairs trading strategies is done in \citet{GatevGR2006}. Principal component analysis of large number of stocks co-integrated through common factors is the topic in \citet{AvellanedaLee2010}, and is the basis for the models in this article; empirical testing of pairs trading, including out-of-sample experiments with changing parameters, is done in \citet{GalenkoPopova2012} and in \citet{YeoPapanicolaou2017}. Analysis showing significance of short-term reversal and momentum factors on returns of pairs trading portfolios is presented in \citet{ChenCCL2017}.

Stochastic optimal control for pairs trading with OU spreads is studied in \citet{MudchanatongsukPW2008}. A stochastic control for optimal trading of co-integrated pairs is proposed and solved in \citet{tourinYan2013} and \citet{angoshtari2014}, and stochastic optimal control for pairs trading with a local-volatility model is analysed by \citet{LiTourin2016}. A multiple variate version of the stochastic control problem with power utility is the topic in \citet{chiu2013optimal} and \citet{ma2019optimal}, with analysis of the matrix Riccati equation being presented. Additionally, there is the long-term stability analysis for matrix Riccati equations of multiple-asset models done in \citet{DavisLleo2014}, and the matrix Riccati equations analysis for a single co-integrated pair with partial information in \citet{LeePapanicolaou2016}. Related work includes the optimal trading of spreads with transaction costs and stop-loss criterion, which are analysed in \citet{LeiXu2015} and \citet{LeungLi2015}. There are also the machine learning approaches to statistical arbitrage, such as reinforcement learning and boosting applied to co-integrated constituents in the S\&P 500 done by \citet{FallahpourHTR2016}. An HJB equation for an optimal portfolio constrained to be 100\% long is presented in \citet{jaimungal2018} with application toward comparing active and passive fund management, and is similar to the market neutral constraint we presented in this article.

\subsection{Structure of This Article\label{subsec:1.2}}

In this article, we propose and solve stochastic optimal control problems, and then analyse the solutions for the unconstrained portfolio and the market-neutral constrained portfolio. Later on in the article we perform some empirical studies on historical data. The organisation of the article is as follows: Section \ref{sec:model_construction} contains the definitions for the models along with analyses of the HJB equations, with Section \ref{subsec:unconstrained_hjb} presenting the solution of the HJB for the unconstrained portfolio via an exponential ansatz, with Section \ref{subsec:unconstrained_stability} presenting the stability analysis, and then with Section \ref{subsec:constrained_model} presenting the HJB and stability analysis for optimisation with market-neutral constraint; the empirical analyses of historical data come in Section \ref{sec:numerical_empirical} with construction of factors through eigenportfolios presented in Section \ref{subsec:eigenportfolio}, preliminary data analyses and parameter estimation presented in Section \ref{subsec:parameter_estimation}, and analyses of portfolio performance, for example Sharpe ratio and maximum drawndown, presented in Section \ref{subsec:portfolio_performance}; Section \ref{sec:conclusion} is the conclusion.

\section{Model Constructions and Optimisations\label{sec:model_construction}}

This section introduces the stochastic optimal control problems for multiple co-integrated stocks. We first build the stochastic system for stock prices, factors, and spreads in Section \ref{subsec:unconstrained_sde}. Then in Section \ref{subsec:unconstrained_hjb}, we propose a stochastic optimal control problem for unconstrained portfolio, and analyse the stability for its solution in Section \ref{subsec:unconstrained_stability}. Lastly, in Section \ref{subsec:constrained_model} we formulate a constrained portfolio with respect to market factor neutrality and complete the stability analysis for its solution.

\subsection{Model with Co-Integration\label{subsec:unconstrained_sde}}

Suppose $S_{t}^{i}$, $i=1,\:2,\:\cdots,\:d\in\mathbb{N}$, is the stock price of the $i^{\mathrm{th}}$ individual firm in the financial market or the industrial sector. Let $F_{t}^{j}$ denote the value of the $j^{\mathrm{th}}$ factor for $j=1,\:2,\:\cdots,\:m\in\mathbb{N}$. The stocks are co-integrated with the factors if there is a stationary process $Z_{t}^{i}$ given by
\begin{align}
\frac{dS_t^i}{S_t^i}=\alpha^{i}dt+\sum_{j=1}^{m}\beta^{ij}\frac{dF_t^j}{F_t^j}+dZ_{t}^{i}\,,\label{eq:factor}
\end{align}
where $\alpha^{j}$ is a systemic return after controlling for the factor returns, and $\beta^{ij}$ is the $i^{\mathrm{th}}$ loading of stock on the $j^{\mathrm{th}}$ factor and is recorded in the matrix $\boldsymbol{\beta}\in\mathbb{R}^{d\times m}$. The total returns $\int_0^t\frac{dS_u^i}{S_u^i}$ and $\int_0^t\frac{dF_u^j}{F_u^j}$ are non-stationary, but there will be a statistical arbitrage strategy if $S^i$ is co-integrated with the factors. A stock $S^i$ is determined to be co-integrated with the factors if the process $Z_t^i$ rejects a unit-root hypothesis. If we know the stock is co-integrated, then we can further specify $Z_t^i$ to be a stationary Ornstein-Uhlenbeck process if we model the dynamics of $S_t^i$ in the same way as \citet{tourinYan2013}.

Let $\boldsymbol{B}_{t}=\left[B_{t}^{1},\:B_{t}^{2},\:\cdots,\:B_{t}^{d+m}\right]^{\top}\in\mathbb{R}^{d+m}$ denote a vector of independent Standard Brownian Motions (SBMs). The stochastic differential equation for the dynamics of a factor is
\begin{align}
\frac{dF_{t}^{j}}{F_{t}^{j}}=\eta^{j}dt+\sum_{k=1}^{m+d}\psi_{0}^{jk}dB_{t}^{k}\,,\label{eq:dF}
\end{align}
where $B_{t}^{k}$ is a component of the SBM vector $\boldsymbol{B}_{t}$, $\eta^{j}$ is a component of the factor drift coefficient vector $\boldsymbol{\eta}=\left[\eta^{1},\:\eta^{2},\:\cdots,\:\eta^{m}\right]^{\top}\in\mathbb{R}^{m}$, and $\psi_{0}^{jk}$ is a component of matrix $\mathbf{\Psi}_{0}\in\mathbb{R}^{m\times\left(d+m\right)}$, which forms a symmetric positive definite (SPD) diffusion matrix $\mathbf{\Sigma}_{0}=\mathbf{\Psi}_{0}\mathbf{\Psi}_{0}^{\top}$. The dynamics of an individual stock is
\begin{align}
\frac{dS_{t}^{i}}{S_{t}^{i}}=\left(\mu^{i}-\delta^{i}Z_{t}^{i}\right)dt+\sum_{k=1}^{m+d}\psi_{1}^{ik}dB_{t}^{k}\,,\label{eq:dSi}
\end{align}
where $\mu^{i}$ is the stock price drift coefficient, $\delta^{i}$ is the mean reversion speed of the process $Z_{t}^{i}$, and $\psi_{1}^{ik}$ is a component of matrix $\mathbf{\Psi}_{1}\in\mathbb{R}^{d\times\left(d+m\right)}$, which forms a SPD diffusion matrix $\mathbf{\Sigma}_{1}=\mathbf{\Psi}_{1}\mathbf{\Psi}_{1}^{\top}$. Utilising equation $\eqref{eq:dF}$ and equation $\eqref{eq:dSi}$, the co-integrated process $Z_{t}^{i}$ has the following stochastic differential equation (SDE):
\begin{align}
dZ_{t}^{i}=\delta^{i}\left(\theta^{i}-Z_{t}\right)dt-\sum_{j=1}^{m}\beta^{ij}\left(\mathbf{\Psi}_{0}d\boldsymbol{B}_{t}\right)^{j}+\left(\mathbf{\Psi}_{1}d\boldsymbol{B}_{t}\right)^i\,,\label{eq:dZ}
\end{align}
where $\theta^{i}=\left(-\alpha^{i}-\sum_{j=1}^{m}\beta^{ij}\eta^{j}+\mu^{i}\right)/\delta^{i}$ is the stationary mean, $\left(\mathbf{\Psi}_{0}d\boldsymbol{B}_{t}\right)^{j}$ is the $j^{\mathrm{th}}$ element of vector $\mathbf{\Psi}_{0}d\boldsymbol{B}_{t}$, and $\left(\mathbf{\Psi}_{1}d\boldsymbol{B}_{t}\right)^{i}$ is the $i^{\mathrm{th}}$ element of vector $\mathbf{\Psi}_{1}d\boldsymbol{B}_{t}$. Each and every $Z_{t}^{i}$ is a one dimensional stationary OU process if $\delta^{i}>0$. We also assume that there is a risk-free asset, such as a money market account, with interest rate $r\geq0$. The factor loadings in equation \eqref{eq:factor} are
\begin{align}
\label{eq:beta}
\boldsymbol{\beta}=
\mathbf{\Psi}_1\mathbf{\Psi}_0^\top \mathbf{\Sigma}_0^{-1}\,,
\end{align}
which can be confirmed by looking at the quadratic cross variation between $\frac{dS_t^i}{S_t^i}$ and $\frac{dF_t^j}{F_t^j}$ for all $i\leq d$ and $j\leq m$. 

\begin{remark}
Please note that from equation \eqref{eq:beta}, it follows that the cross variation $\frac{dF_t^j}{F_t^j}dZ_t^i$ equals zero for any $i$ and $j$. This will be relevant in Section \ref{subsec:constrained_model} when we equate market neutrality, or factor neutrality, to portfolios being adapted to the filtration generated by the $Z_t^i$'s.  
\end{remark}

We consider a self-financing portfolio process $W_{t}$ with portfolio weight in the $i^{\mathrm{th}}$ risky asset at time $t$ denoted by $\pi_{t}^{i}$ for $i=1,\:2,\:\cdots,\:d$, and $\left(1-\sum_{i=1}^{d}\pi_{t}^{i}\right)$ being the proportion of wealth invested in the risk-free asset with interest rate $r$. The dynamics of $W_t$ are given by the following SDE:
\begin{align}
dW_{t}=\sum_{i=1}^{d}\pi_{t}^{i}W_{t}\frac{dS_{t}^{i}}{S_{t}^{i}}+r\left(1-\sum_{i=1}^{d}\pi_{t}^{i}\right)W_{t}dt \,.\label{eq:dW}
\end{align}
Please note that none of the factors $F_{t}^{j}$ are contained in the portfolio, as you can observe from equation \eqref{eq:dW}. 

The value function for this stochastic optimal control problem is denoted by $u\left(t,\:w,\:\boldsymbol{z}\right)$. A control process $\left(\boldsymbol{\pi}_t\right)_{t\leq T}$ is at each time the portfolio allocation $\boldsymbol{\pi}_{t}=\left[\pi_{t}^{1},\:\pi_{t}^{2},\:\cdots,\:\pi_{t}^{d}\right]^{\top}\in\mathbb{R}^{d}$, which is sought to maximise the expectation of utility function $U\left(\cdot\right)$ with respect to the terminal wealth $W_T$,
\begin{align}
u\left(t,\:w,\:\boldsymbol{z}\right)=\sup_{\boldsymbol{\pi}\in\cal A}\mathbb{E}\left[U\left(W_{T}\right)\bigl|W_{t}=w,\:\boldsymbol{Z}_{t}=\boldsymbol{z}\right]\,,\label{eq:valueFunction}
\end{align}
where $t\in\left[0,\:T\right]$, $\boldsymbol{Z}_{t}=\left[Z_{t}^{1},\:Z_{t}^{2},\:\cdots,\:Z_{t}^{d}\right]^{\top}\in\mathbb{R}^{d}$, $w\in\mathbb R^+$ and $\boldsymbol{z}=\left[z^{1},\:z^{2},\:\cdots,\:z^{d}\right]^{\top}\in\mathbb{R}^{d}$ are the state variables, and $\left(\boldsymbol{\pi}_t\right)_{t\leq T}$ is selected from a set of admissible controls ${\cal A}$ defined as
\begin{align}
\mathcal{A}=\left\{\left(\boldsymbol{\pi}_{t}\right)_{t\leq T}\;\mathrm{with }\;\boldsymbol{\pi}_{t}\in\mathbb{R}^{d}\:\Bigl|\:\boldsymbol{\pi}_{t}\;\text{is non-anticipative and}\:\int_{0}^{T}\left\Vert \boldsymbol{\pi}_{t}W_{t}\right\Vert ^{2}dt<\infty\;\text{ a.s.}\right\}\label{eq:admissible_control}\,,
\end{align}
see chapter four of \citet{FlemingSoner2006} for more mathematical details. 

In this paper we assume a concave utility function $U\left(w\right)$ of power type:
\begin{align}
\label{eq:utility_function}
U\left(w\right)=\frac{1}{\gamma}w^{\gamma}\,,
\end{align}
where $\gamma<1$ $\left(\gamma\neq0\right)$ is the risk aversion coefficient. Risk aversion coefficient is utilised to measure risk preference of the trader. If $\gamma$ approaches to one, it indicates that the trader is more risk loving, and if $\gamma$ approaches to $-\infty$, it represents that the trader is more risk averse; $\gamma$ tending toward zero is the case of logarithmic utility.

\subsection{Hamilton-Jacobi-Bellman Equation\label{subsec:unconstrained_hjb}}

In addition to $\boldsymbol{\beta}$, $\mathbf{\Sigma}_1$, and $\mathbf{\Sigma}_2$ that are defined in Section \ref{subsec:unconstrained_sde}, we denote the following vectors and matrices for mathematical convenience for the rest sections of this article,
\begin{align}
\boldsymbol{\mu}
&=\left[\mu^{i}-r\right]\in\mathbb{R}^{d}\,,\label{eq:notations}\\
\boldsymbol{\theta}
&=\left[\theta^{1},\:\theta^{2},\cdots,\:\theta^{d}\right]^{\top}\in\mathbb{R}^{d}\,, \nonumber\\
\boldsymbol{\delta}
&=\mathrm{diag}\left(\left[\delta^{1},\:\delta^{2},\:\cdots,\:\delta^{d}\right]\right)\in\mathbb{R}^{d\times d}\,, \nonumber\\
\mathbf{\Sigma}_{2} 
&=\mathbf{\Sigma}_{1}-\mathbf{\Psi}_1\mathbf{\Psi}_0^\top\boldsymbol{\beta}^\top\in\mathbb{R}^{d\times d}\,, \nonumber\\
\mathbf{\Sigma}_{3} 
&=\mathbf{\Sigma}_1-\mathbf{\Psi}_1\mathbf{\Psi}_0^\top \boldsymbol{\beta}^\top
- \boldsymbol{\beta}\mathbf{\Psi}_0\mathbf{\Psi}_1^\top+\boldsymbol{\beta}\mathbf{\Sigma}_0\boldsymbol{\beta}^\top\in\mathbb{R}^{d\times d}\,,\nonumber
\end{align}
Please note that we assume that $\mathbf{\Sigma}_{1}$ is positive definite and invertible in this article. We can also observe that $\mathbf{\Sigma}_{1}$ and $\mathbf{\Sigma}_{3}$ are symmetric matrices, however, $\mathbf{\Sigma}_{2}$ is not. Subsequently, we apply the standard stochastic control and optimisation techniques and expect the value function $u\left(t,\:w,\:\boldsymbol{z}\right)$ defined in equation \eqref{eq:valueFunction} to satisfy the following HJB equation:
\begin{align}
-u_{t} & -\left(\boldsymbol{\theta}-\boldsymbol{z}\right)^{\top}\boldsymbol{\delta}\nabla_{\boldsymbol{z}}u-\frac{1}{2}\mathrm{tr}\left(\mathbf{\Sigma}_{3}\nabla_{\boldsymbol{z}}^{2}u\right)-rwu_{w}\label{eq:unconstrainedHJB}\\
& -\sup_{\boldsymbol{\pi}\in\mathbb R^d}\left(\boldsymbol{\pi}^{\top}\left(\boldsymbol{\mu}-\boldsymbol{\delta}\boldsymbol{z}\right)wu_{w}+\boldsymbol{\pi}^{\top}\mathbf{\Sigma}_{2}\nabla_{\boldsymbol{z}}\left(\nabla_{w}u\right)w+\frac{1}{2}\boldsymbol{\pi}^{\top}\mathbf{\Sigma}_{1}\boldsymbol{\pi}w^{2}u_{ww}\right)=0\,,\nonumber\\
u\big|_T & =\frac{1}{\gamma}w^{\gamma}\,.\nonumber 
\end{align}
In equation \eqref{eq:unconstrainedHJB}, we can observe that the control variable $\boldsymbol{\pi}$ is not subject to any condition, so we call it the unconstrained stochastic optimal control problem, and the optimal portfolio $W_{t}$ that is given by its solution is called the unconstrained portfolio. The wealth variable $w$ can be factored out of the solution by utilising the following ansatz:
\begin{align}
u\left(t,\:w,\:\boldsymbol{z}\right)=\frac{1}{\gamma}w^{\gamma}g\left(t,\:\boldsymbol{z}\right),\label{eq:ansatz_u}
\end{align}
and the derivative with respect to this ansatz function are 
\begin{align*}
u_{t} & =\frac{1}{\gamma}w^{\gamma}g_{t}\,,\quad\quad u_{ww}=\left(\gamma-1\right)w^{\gamma-2}g\,,\quad\quad\nabla_{\boldsymbol{z}}u=\frac{1}{\gamma}w^{\gamma}\nabla_{\boldsymbol{z}}g\,,\\
u_{w} & =w^{\gamma-1}g\,,\quad\quad\nabla_{\boldsymbol{z}}^{2}u=\frac{1}{\gamma}w^{\gamma}\nabla_{\boldsymbol{z}}^{2}g\,,\quad\quad\nabla_{\boldsymbol{z}}\left(\nabla_{w}u\right)=w^{\gamma-1}\nabla_{\boldsymbol{z}}g\,.
\end{align*}
Therefore, the HJB equation \eqref{eq:unconstrainedHJB} can be transformed into
\begin{align}
-g_{t} & -\left(\boldsymbol{\theta}-\boldsymbol{z}\right)^{\top}\boldsymbol{\delta}\nabla_{\boldsymbol{z}}g-\frac{1}{2}\mathrm{tr}\left(\mathbf{\Sigma}_{3}\nabla_{\boldsymbol{z}}^{2}g\right)-r\gamma g\label{eq:hjb_g_u}\\
& -\inf_{\boldsymbol{\pi}\in\mathbb R^d}\left(\boldsymbol{\pi}^{\top}\left(\boldsymbol{\mu}-\boldsymbol{\delta}\boldsymbol{z}\right)\gamma g+\boldsymbol{\pi}^{\top}\gamma\mathbf{\Sigma}_{2}\nabla_{\boldsymbol{z}}g+\frac{1}{2}\boldsymbol{\pi}^{\top}\mathbf{\Sigma}_{1}\boldsymbol{\pi}g\gamma\left(\gamma-1\right)\right)=0\,,\nonumber \\
g\big|_T & =1\,.\nonumber 
\end{align}
We then can compute the optimal control variable $\boldsymbol{\pi}^{*}$ by solving the stochastic optimal control problem described by equation \eqref{eq:hjb_g_u} in terms of function $g\left(t,\:\mathbf{z}\right)$ and its partial derivatives
\begin{align}
\boldsymbol{\pi}^{*}
& =\frac{1}{1-\gamma}\mathbf{\Sigma}_{1}^{-1}\left(\boldsymbol{\mu}-\boldsymbol{\delta}\boldsymbol{z}\right)+\frac{1}{1-\gamma}\mathbf{\Sigma}_{1}^{-1}\mathbf{\Sigma}_{2}\nabla_{\boldsymbol{z}}\left(\ln g\right)\,.\label{eq:control_pi_u}
\end{align}
Inserting the optimal $\boldsymbol{\pi}^{*}$ of equation \eqref{eq:control_pi_u} back into equation \eqref{eq:hjb_g_u} results in the following non-linear partial differential equation:
\begin{align}
g_{t} & +\left(\boldsymbol{\theta}-\boldsymbol{z}\right)^{\top}\boldsymbol{\delta}\nabla_{\boldsymbol{z}}g+\frac{1}{2}\mathrm{tr}\left(\mathbf{\Sigma}_{3}\nabla_{\boldsymbol{z}}^{2}g\right)+r\gamma g\label{eq:pde_g_u}\\
& +\frac{\gamma g}{2\left(1-\gamma\right)}\left(\boldsymbol{\mu}-\boldsymbol{\delta}\boldsymbol{z}+\mathbf{\Sigma}_{2}\nabla_{\boldsymbol{z}}\left(\ln g\right)\right)^{\top}\mathbf{\Sigma}_{1}^{-1}\left(\boldsymbol{\mu}-\boldsymbol{\delta}\boldsymbol{z}+\mathbf{\Sigma}_{2}\nabla_{\boldsymbol{z}}\left(\ln g\right)\right)=0\nonumber\,, \\
g\big|_T & =1\,.\nonumber 
\end{align}

We approach solving PDE \eqref{eq:pde_g_u} with an exponential ansatz for $g\left(t,\:\boldsymbol{z}\right)$,
\begin{align}
g\left(t,\:\boldsymbol{z}\right)=\exp\left(a\left(t\right)+\boldsymbol{b}^{\top}\left(t\right)\boldsymbol{z}+\boldsymbol{z}^{\top}\boldsymbol{C}\left(t\right)\boldsymbol{z}\right),\label{eq:ansztz_g}
\end{align}
where $a\left(t\right)\in\mathbb{R}$ is a scalar, $\boldsymbol{b}\left(t\right)=\left[b^{i}\left(t\right)\right]\in\mathbb{R}^{d}$ is a column vector, and $\boldsymbol{C}\left(t\right)=\left[c^{ij}\left(t\right)\right]\in\mathbb{R}^{d\times d}$ is a symmetric matrix, for $i,\:j=1,\:2,\:\cdots,\:d$. By utilising the ansatz \eqref{eq:ansztz_g}, PDE \eqref{eq:pde_g_u} can be transformed into a system of ODEs.

\begin{proposition} 
\label{prop:pde_solution_u}
The PDE \eqref{eq:pde_g_u} for the unconstrained stochastic optimal control problem \eqref{eq:unconstrainedHJB} is solved by utilising the exponential ansatz \eqref{eq:ansztz_g}, where for any $t\leq T$ the functions $a\left(t\right)\in\mathbb{R}$, $\boldsymbol{b}\left(t\right)\in\mathbb{R}^{d}$, and $\boldsymbol{C}\left(t\right)\in\mathbb{R}^{d\times d}$ satisfy the following system of ODEs:
\begin{align}
\frac{d a\left(t\right)}{d t} & =-\boldsymbol{b}^{\top}\left(t\right)\left(\frac{\gamma}{2\left(1-\gamma\right)}\mathbf{\Sigma}_{2}^{\top}\mathbf{\Sigma}_{1}^{-1}\mathbf{\Sigma}_{2}+\frac{1}{2}\mathbf{\Sigma}_{3}\right)\boldsymbol{b}\left(t\right)\label{eq:ode_a_u}\\
& \quad\;-\frac{\gamma}{2\left(1-\gamma\right)}\boldsymbol{b}^{\top}\left(t\right)\mathbf{\Sigma}_{2}^{\top}\mathbf{\Sigma}_{1}^{-1}\boldsymbol{\mu}-\frac{\gamma}{2\left(1-\gamma\right)}\boldsymbol{\mu}^{\top}\mathbf{\Sigma}_{1}^{-1}\mathbf{\Sigma}_{2}\boldsymbol{b}\left(t\right)-\boldsymbol{\theta}^{\top}\boldsymbol{\delta}\boldsymbol{b}\left(t\right)\nonumber \\
& \quad\;-\frac{\gamma}{2\left(1-\gamma\right)}\boldsymbol{\mu}^{\top}\mathbf{\Sigma}_{1}^{-1}\boldsymbol{\mu}-\mathrm{tr}\left[\mathbf{\Sigma}_{3}\boldsymbol{C}\left(t\right)\right]-r\gamma\,,\nonumber \\
a\left(T\right) & =0\,;\nonumber 
\end{align}
\begin{align}
\frac{d\boldsymbol{b}\left(t\right)}{d t} & =-\boldsymbol{C}\left(t\right)\left(\frac{2\gamma}{1-\gamma}\mathbf{\Sigma}_{2}^{\top}\mathbf{\Sigma}_{1}^{-1}\mathbf{\Sigma}_{2}+\mathbf{2\Sigma}_{3}\right)\boldsymbol{b}\left(t\right)\label{eq:ode_b_u}\\
& \quad\;+\left(\frac{\gamma}{1-\gamma}\boldsymbol{\delta}\mathbf{\Sigma}_{1}^{-1}\mathbf{\Sigma}_{2}+\boldsymbol{\delta}\right)\boldsymbol{b}\left(t\right)-\boldsymbol{C}\left(t\right)\left(\frac{2\gamma}{1-\gamma}\mathbf{\Sigma}_{2}^{\top}\mathbf{\Sigma}_{1}^{-1}\boldsymbol{\mu}+2\boldsymbol{\delta}\boldsymbol{\theta}\right)\nonumber \\
& \quad\;+\frac{\gamma}{1-\gamma}\boldsymbol{\delta}\mathbf{\Sigma}_{1}^{-1}\boldsymbol{\mu}\,,\nonumber \\
\boldsymbol{b}\left(T\right) & =\boldsymbol{0}\,;\nonumber 
\end{align}
\begin{align}
\frac{d\boldsymbol{C}\left(t\right)}{d t} & =-\boldsymbol{C}\left(t\right)\left(\frac{2\gamma}{1-\gamma}\mathbf{\Sigma}_{2}^{\top}\mathbf{\Sigma}_{1}^{-1}\mathbf{\Sigma}_{2}+2\mathbf{\Sigma}_{3}\right)\boldsymbol{C}\left(t\right)\label{eq:ode_c_u}\\
& \quad\;+\boldsymbol{C}\left(t\right)\left(\frac{\gamma}{1-\gamma}\mathbf{\Sigma}_{2}^{\top}\mathbf{\Sigma}_{1}^{-1}\boldsymbol{\delta}+\boldsymbol{\delta}\right)+\left(\frac{\gamma}{1-\gamma}\mathbf{\Sigma}_{2}^{\top}\mathbf{\Sigma}_{1}^{-1}\boldsymbol{\delta}+\boldsymbol{\delta}\right)^{\top}\boldsymbol{C}\left(t\right)\nonumber \\
& \quad\;-\frac{\gamma}{2\left(1-\gamma\right)}\boldsymbol{\delta}\mathbf{\Sigma}_{1}^{-1}\boldsymbol{\delta}\,,\nonumber \\
\boldsymbol{C}\left(T\right) & =\boldsymbol{0}\,.\nonumber 
\end{align}
\end{proposition}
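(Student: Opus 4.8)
The plan is to substitute the exponential ansatz \eqref{eq:ansztz_g} into PDE \eqref{eq:pde_g_u} and then match coefficients of the resulting polynomial in $\boldsymbol{z}$. First I would compute the three building blocks that appear in the PDE: since $\ln g = a(t) + \boldsymbol{b}^\top(t)\boldsymbol{z} + \boldsymbol{z}^\top\boldsymbol{C}(t)\boldsymbol{z}$, we have $\nabla_{\boldsymbol{z}}(\ln g) = \boldsymbol{b}(t) + 2\boldsymbol{C}(t)\boldsymbol{z}$ (using symmetry of $\boldsymbol{C}$), $\nabla_{\boldsymbol{z}}g = g\cdot(\boldsymbol{b}(t) + 2\boldsymbol{C}(t)\boldsymbol{z})$, and $\nabla_{\boldsymbol{z}}^2 g = g\cdot\bigl(2\boldsymbol{C}(t) + (\boldsymbol{b}(t)+2\boldsymbol{C}(t)\boldsymbol{z})(\boldsymbol{b}(t)+2\boldsymbol{C}(t)\boldsymbol{z})^\top\bigr)$, while $g_t = g\cdot(a'(t) + \boldsymbol{b}'^\top(t)\boldsymbol{z} + \boldsymbol{z}^\top\boldsymbol{C}'(t)\boldsymbol{z})$. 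Dividing the whole PDE through by the common factor $g > 0$ removes all exponential dependence and leaves an identity that is a quadratic form in $\boldsymbol{z}$.

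Next I would organize that identity by degree in $\boldsymbol{z}$. Every term — the time-derivative terms, the drift term $(\boldsymbol{\theta}-\boldsymbol{z})^\top\boldsymbol{\delta}(\boldsymbol{b}+2\boldsymbol{C}\boldsymbol{z})$, the trace term $\tfrac12\mathrm{tr}(\mathbf{\Sigma}_3(2\boldsymbol{C} + (\boldsymbol{b}+2\boldsymbol{C}\boldsymbol{z})(\boldsymbol{b}+2\boldsymbol{C}\boldsymbol{z})^\top))$, the constant $r\gamma$, and the quadratic term $\tfrac{\gamma}{2(1-\gamma)}(\boldsymbol{\mu}-\boldsymbol{\delta}\boldsymbol{z}+\mathbf{\Sigma}_2(\boldsymbol{b}+2\boldsymbol{C}\boldsymbol{z}))^\top\mathbf{\Sigma}_1^{-1}(\boldsymbol{\mu}-\boldsymbol{\delta}\boldsymbol{z}+\mathbf{\Sigma}_2(\boldsymbol{b}+2\boldsymbol{C}\boldsymbol{z}))$ — is a polynomial of degree at most two in $\boldsymbol{z}$. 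Collecting the degree-two part (coefficients of $\boldsymbol{z}^\top(\cdot)\boldsymbol{z}$), the degree-one part (coefficients of $\boldsymbol{z}$), and the degree-zero part and setting each to zero gives, respectively, the matrix Riccati equation \eqref{eq:ode_c_u}, the linear vector equation \eqref{eq:ode_b_u}, and the scalar equation \eqref{eq:ode_a_u}. I would note that the $\boldsymbol{z}$-quadratic piece of the last term expands as $\tfrac{\gamma}{2(1-\gamma)}(-\boldsymbol{\delta}+2\mathbf{\Sigma}_2\boldsymbol{C})^\top\mathbf{\Sigma}_1^{-1}(-\boldsymbol{\delta}+2\mathbf{\Sigma}_2\boldsymbol{C}) = \tfrac{\gamma}{2(1-\gamma)}(\boldsymbol{\delta}\mathbf{\Sigma}_1^{-1}\boldsymbol{\delta} - 2\boldsymbol{\delta}\mathbf{\Sigma}_1^{-1}\mathbf{\Sigma}_2\boldsymbol{C} - 2\boldsymbol{C}\mathbf{\Sigma}_2^\top\mathbf{\Sigma}_1^{-1}\boldsymbol{\delta} + 4\boldsymbol{C}\mathbf{\Sigma}_2^\top\mathbf{\Sigma}_1^{-1}\mathbf{\Sigma}_2\boldsymbol{C})$, which together with the $4\boldsymbol{C}\mathbf{\Sigma}_3\boldsymbol{C}$ piece from the trace term and the $-2\boldsymbol{\delta}\boldsymbol{C}$ drift contribution reproduces \eqref{eq:ode_c_u} after symmetrising; one should take the symmetric part of each matrix coefficient since $\boldsymbol{z}^\top M\boldsymbol{z} = \boldsymbol{z}^\top\tfrac12(M+M^\top)\boldsymbol{z}$, which is consistent with $\boldsymbol{C}(t)$ being symmetric (its ODE preserves symmetry because the right-hand side of \eqref{eq:ode_c_u} is manifestly symmetric). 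The terminal conditions $a(T)=0$, $\boldsymbol{b}(T)=\boldsymbol{0}$, $\boldsymbol{C}(T)=\boldsymbol{0}$ are forced by $g|_T \equiv 1$.

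The only real subtlety — and the step I expect to require the most care — is the bookkeeping in the degree-one and degree-zero matching, because the cross term $2(\boldsymbol{\mu}+\mathbf{\Sigma}_2\boldsymbol{b})^\top\mathbf{\Sigma}_1^{-1}(-\boldsymbol{\delta}+2\mathbf{\Sigma}_2\boldsymbol{C})\boldsymbol{z}$ in the quadratic term mixes $\boldsymbol{b}$ and $\boldsymbol{C}$, and it must be combined correctly with the linear-in-$\boldsymbol{z}$ contributions $-\boldsymbol{b}^\top\boldsymbol{\delta}\boldsymbol{z}$ from the drift, $\tfrac12\mathrm{tr}(\mathbf{\Sigma}_3\cdot 2\boldsymbol{b}(2\boldsymbol{C}\boldsymbol{z})^\top) = 2\boldsymbol{z}^\top\boldsymbol{C}\mathbf{\Sigma}_3\boldsymbol{b}$ from the trace term (using $\mathrm{tr}(\mathbf{\Sigma}_3 \boldsymbol{b}\boldsymbol{z}^\top\boldsymbol{C}) = \boldsymbol{z}^\top\boldsymbol{C}\mathbf{\Sigma}_3\boldsymbol{b}$ and symmetry of $\mathbf{\Sigma}_3$), and the time-derivative term $\boldsymbol{b}'^\top(t)\boldsymbol{z}$; transposing appropriately to present everything as a single column vector multiplying $\boldsymbol{z}^\top$ from the left yields \eqref{eq:ode_b_u}. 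Finally, the degree-zero terms — $a'(t)$, $\boldsymbol{\theta}^\top\boldsymbol{\delta}\boldsymbol{b}$, $\mathrm{tr}(\mathbf{\Sigma}_3\boldsymbol{C}) + \tfrac12\boldsymbol{b}^\top\mathbf{\Sigma}_3\boldsymbol{b}$, $r\gamma$, and $\tfrac{\gamma}{2(1-\gamma)}(\boldsymbol{\mu}+\mathbf{\Sigma}_2\boldsymbol{b})^\top\mathbf{\Sigma}_1^{-1}(\boldsymbol{\mu}+\mathbf{\Sigma}_2\boldsymbol{b})$ — assemble into \eqref{eq:ode_a_u} after expanding the last quadratic form into its $\boldsymbol{\mu}^\top\mathbf{\Sigma}_1^{-1}\boldsymbol{\mu}$, cross, and $\boldsymbol{b}^\top\mathbf{\Sigma}_2^\top\mathbf{\Sigma}_1^{-1}\mathbf{\Sigma}_2\boldsymbol{b}$ pieces and absorbing the $\tfrac12\boldsymbol{b}^\top\mathbf{\Sigma}_3\boldsymbol{b}$ term into the combined quadratic coefficient $\tfrac{\gamma}{2(1-\gamma)}\mathbf{\Sigma}_2^\top\mathbf{\Sigma}_1^{-1}\mathbf{\Sigma}_2 + \tfrac12\mathbf{\Sigma}_3$. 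Since the matching is both necessary and sufficient for \eqref{eq:ansztz_g} to solve \eqref{eq:pde_g_u} pointwise in $\boldsymbol{z}$ (a polynomial identity holds iff all coefficients vanish), this establishes the proposition.
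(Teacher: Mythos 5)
Your proposal is correct and is exactly the paper's approach: the paper's own proof is a one-line instruction to substitute the ansatz \eqref{eq:ansztz_g} into \eqref{eq:pde_g_u} and match the quadratic, linear, and constant coefficients in $\boldsymbol{z}$, which is precisely the computation you carry out. Your explicit expansions of the Hessian, the trace term, and the quadratic form all check out against \eqref{eq:ode_a_u}--\eqref{eq:ode_c_u}, including the symmetrisation step needed for the Riccati equation.
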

\begin{proof}
\textit{By inserting the exponential ansatz \eqref{eq:ansztz_g} into PDE \eqref{eq:pde_g_u}, and grouping terms as either quadratic in $\boldsymbol{z}$ and $\boldsymbol{z}^{\top}$, linear in $\boldsymbol{z}$, or constant in $\boldsymbol{z}$, then ODE \eqref{eq:ode_a_u}, ODE \eqref{eq:ode_b_u}, and ODE \eqref{eq:ode_c_u} are respectively obtained.}
\end{proof}


The gradient of the exponential ansatz is $\nabla_{\boldsymbol{z}}\left(\ln g\right) = 2\boldsymbol{C}\left(t\right)\boldsymbol{z} + \boldsymbol{b}\left(t\right)$ so that the optimal $\boldsymbol{\pi^{*}}$ from equation \eqref{eq:control_pi_u} can be expressed as
\begin{align}
\boldsymbol{\pi}^{*}=\frac{1}{1-\gamma}\mathbf{\Sigma}_{1}^{-1}\left(\boldsymbol{\mu}-\boldsymbol{\delta}\boldsymbol{z}\right)+\frac{1}{1-\gamma}\mathbf{\Sigma}_{1}^{-1}\mathbf{\Sigma}_{2}\left(2\boldsymbol{C}\left(t\right)\boldsymbol{z}+\boldsymbol{b}\left(t\right)\right)\,.\label{eq:control_pi_u2}
\end{align}
Verification of optimality for strategy $\boldsymbol{\pi}^{*}$ given by \eqref{eq:control_pi_u} is shown by using the argument of \citet{DavisLleo2014}:

\begin{remark}[Verification]
\label{prop:verification_theorem_u}
Utilising the solution of the HJB equation \eqref{eq:hjb_g_u} that is obtained from the exponential ansatz \eqref{eq:ansztz_g}, the optimal control $\left(\boldsymbol{\pi}^{*}\right)_{t\leq T}$ given by formula \eqref{eq:control_pi_u} belongs to the set of admissible controls $\mathcal{A}$ defined in equation \eqref{eq:admissible_control}, and maximises the expected utility function defined in equation \eqref{eq:valueFunction}. The form of the model for this unconstrained stochastic optimal control problem fits into the framework set forth in \citet{DavisLleo2008} and \citet{DavisLleo2014}, and hence the same argument for verification applies here.
\end{remark}

From equation \eqref{eq:control_pi_u2}, we can observe that the optimal control $\boldsymbol{\pi}^{*}$ contains two parts: a time-constant part and dynamic part that contains the solution of the HJB equation \eqref{eq:hjb_g_u}. We call the time-constant part of equation \eqref{eq:control_pi_u2} the myopic control of the unconstrained portfolio,
\begin{align}
\boldsymbol{\pi}_{m}^{*}=\frac{1}{1-\gamma}\mathbf{\Sigma}_{1}^{-1}\left(\boldsymbol{\mu}-\boldsymbol{\delta}\boldsymbol{z}\right)\,.\label{eq:myopic_control_u}
\end{align}

\subsection{Stability Analysis\label{subsec:unconstrained_stability}}

ODE \eqref{eq:ode_c_u} is a matrix Riccati equation. Stability analyses of the matrix Riccati equation, the linear ODE \eqref{eq:ode_b_u}, and the ODE \eqref{eq:ode_a_u} inform us whether our solution to PDE \eqref{eq:pde_g_u} blows up or not. We extend the time domain for the ODEs \eqref{eq:ode_c_u}, \eqref{eq:ode_b_u}, and \eqref{eq:ode_a_u} to $\left(-\infty,T\right]$ for any finite $T$, and if the solution remains finite for all time then we have a stable system from which we can draw intuition about long-term portfolio performance. 

Our analysis for the matrix Riccati equation \eqref{eq:ode_c_u} with respect to $\boldsymbol{C}\left(t\right)$ utilises Theorem 2.1 from \citet{Wonham1968} to show that the solution of the equation exists, is bounded, and is unique for all $t\leq T$. Let us rewrite the matrix Riccati ODE as
\begin{align}
\frac{d\boldsymbol{C}\left(t\right)}{d t} & =-\mathbf{A}_{u}^{\top}\boldsymbol{C}\left(t\right)-\boldsymbol{C}\left(t\right)\mathbf{A}_{u}-\boldsymbol{C}\left(t\right)\mathbf{Q}_{u}\boldsymbol{C}\left(t\right)-\mathbf{P}_{u}\,,\label{eq:riccati_u}\\
\boldsymbol{C}\left(T\right) & =\boldsymbol{0}\,,\nonumber 
\end{align}
where
\begin{align*}
\mathbf{Q}_{u} & =\frac{2\gamma}{1-\gamma}\mathbf{\Sigma}_{2}^{\top}\mathbf{\Sigma}_{1}^{-1}\mathbf{\Sigma}_{2}+2\mathbf{\Sigma}_{3}\,,\\
\mathbf{A}_{u} & =-\frac{\gamma}{1-\gamma}\mathbf{\Sigma}_{2}^{\top}\mathbf{\Sigma}_{1}^{-1}\boldsymbol{\delta}-\boldsymbol{\delta}\,,\\
\mathbf{P}_{u} & =\frac{\gamma}{2\left(1-\gamma\right)}\boldsymbol{\delta}\mathbf{\Sigma}_{1}^{-1}\boldsymbol{\delta}\,.
\end{align*}

\begin{proposition}
\label{prop:riccati_quadractic_term_u}For $\gamma<0$, the coefficient matrices of the quadratic term and the constant term of the matrix Riccati equation with respect to $\boldsymbol{C}\left(t\right)$, namely $\mathbf{Q}_{u}$ and $\mathbf{P}_{u}$ in equation \eqref{eq:riccati_u}, are symmetric positive definite and symmetric negative definite, respectively.
\end{proposition}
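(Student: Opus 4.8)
The plan is to treat the two coefficient matrices separately, since $\mathbf{P}_u$ is elementary while only $\mathbf{Q}_u$ requires a structural input. For $\mathbf{P}_u = \frac{\gamma}{2(1-\gamma)}\boldsymbol{\delta}\mathbf{\Sigma}_1^{-1}\boldsymbol{\delta}$, I would first note that $\boldsymbol{\delta} = \mathrm{diag}(\delta^1,\dots,\delta^d)$ has strictly positive diagonal entries and is therefore symmetric and invertible, while $\mathbf{\Sigma}_1^{-1}$ is symmetric positive definite because $\mathbf{\Sigma}_1$ is. Hence $\boldsymbol{\delta}\mathbf{\Sigma}_1^{-1}\boldsymbol{\delta} = \boldsymbol{\delta}^\top\mathbf{\Sigma}_1^{-1}\boldsymbol{\delta}$ is a congruence of a positive definite matrix by an invertible matrix and is thus symmetric positive definite; since $\gamma<0$ and $1-\gamma>0$, the scalar $\frac{\gamma}{2(1-\gamma)}$ is strictly negative, so $\mathbf{P}_u$ is symmetric negative definite.

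For $\mathbf{Q}_u = \frac{2\gamma}{1-\gamma}\mathbf{\Sigma}_2^\top\mathbf{\Sigma}_1^{-1}\mathbf{\Sigma}_2 + 2\mathbf{\Sigma}_3$, the key observation is a factorisation of the matrices in \eqref{eq:notations}. Setting $\mathbf{\Phi} = \mathbf{\Psi}_1 - \boldsymbol{\beta}\mathbf{\Psi}_0 \in \mathbb{R}^{d\times(d+m)}$, which is exactly the diffusion coefficient of the spread process $\boldsymbol{Z}_t$ in \eqref{eq:dZ}, a direct computation from the definitions gives $\mathbf{\Sigma}_1 = \mathbf{\Psi}_1\mathbf{\Psi}_1^\top$, $\mathbf{\Sigma}_2 = \mathbf{\Psi}_1\mathbf{\Phi}^\top$ and $\mathbf{\Sigma}_3 = \mathbf{\Phi}\mathbf{\Phi}^\top$. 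Consequently $\mathbf{\Sigma}_2^\top\mathbf{\Sigma}_1^{-1}\mathbf{\Sigma}_2 = \mathbf{\Phi}\,\mathbf{\Psi}_1^\top(\mathbf{\Psi}_1\mathbf{\Psi}_1^\top)^{-1}\mathbf{\Psi}_1\,\mathbf{\Phi}^\top = \mathbf{\Phi}\,\Pi\,\mathbf{\Phi}^\top$, where $\Pi = \mathbf{\Psi}_1^\top(\mathbf{\Psi}_1\mathbf{\Psi}_1^\top)^{-1}\mathbf{\Psi}_1$ is the orthogonal projector onto the row space of $\mathbf{\Psi}_1$; since $0 \preceq \Pi \preceq I$, this yields $0 \preceq \mathbf{\Sigma}_2^\top\mathbf{\Sigma}_1^{-1}\mathbf{\Sigma}_2 \preceq \mathbf{\Phi}\mathbf{\Phi}^\top = \mathbf{\Sigma}_3$ (equivalently, the Schur-complement inequality for the positive semidefinite Gram matrix obtained by stacking $\mathbf{\Psi}_1$ over $\mathbf{\Phi}$).

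With $c := \frac{2\gamma}{1-\gamma}$ one has $c<0$ and, trivially, $c>-2$, so $c\in(-2,0)$ for every $\gamma<0$. Rearranging, $\mathbf{Q}_u = (c+2)\,\mathbf{\Sigma}_3 + |c|\bigl(\mathbf{\Sigma}_3 - \mathbf{\Sigma}_2^\top\mathbf{\Sigma}_1^{-1}\mathbf{\Sigma}_2\bigr)$, a sum with strictly positive weights $c+2$ and $|c|$ of the positive semidefinite matrices $\mathbf{\Sigma}_3$ and $\mathbf{\Sigma}_3 - \mathbf{\Sigma}_2^\top\mathbf{\Sigma}_1^{-1}\mathbf{\Sigma}_2$ from the previous step; hence $\mathbf{Q}_u \succeq (c+2)\mathbf{\Sigma}_3 \succ 0$, and symmetry is immediate because every summand is symmetric. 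The only nontrivial ingredient is the ordering $\mathbf{\Sigma}_2^\top\mathbf{\Sigma}_1^{-1}\mathbf{\Sigma}_2 \preceq \mathbf{\Sigma}_3$ obtained from the factorisations above; everything else is sign bookkeeping. I would, however, flag that the strict (rather than merely semi-) definiteness of $\mathbf{Q}_u$ uses that $\mathbf{\Sigma}_3 = \mathbf{\Phi}\mathbf{\Phi}^\top$ is positive definite, i.e.\ that $\mathbf{\Phi}$ has full row rank $d$ — equivalently, that the row spaces of $\mathbf{\Psi}_0$ and $\mathbf{\Psi}_1$ intersect only at the origin, so that the noise driving $\boldsymbol{Z}_t$ is non-degenerate — which I would record as a standing assumption alongside the positive definiteness of $\mathbf{\Sigma}_1$.
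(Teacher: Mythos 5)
Your proof is correct and follows essentially the same route as the paper: after unwinding the factorisations, your decomposition $\mathbf{Q}_u=(c+2)\mathbf{\Sigma}_3+|c|\,\mathbf{\Phi}\left(\mathbf{I}-\mathbf{\Psi}_1^\top\mathbf{\Sigma}_1^{-1}\mathbf{\Psi}_1\right)\mathbf{\Phi}^\top$ coincides term by term with the paper's $\frac{2}{1-\gamma}\mathbf{\Sigma}_3-\frac{2\gamma}{1-\gamma}\boldsymbol{\beta}\mathbf{\Psi}_0\left(\mathbf{I}-\mathbf{\Psi}_1^\top\mathbf{\Sigma}_1^{-1}\mathbf{\Psi}_1\right)\mathbf{\Psi}_0^\top\boldsymbol{\beta}^\top$ (since $\mathbf{\Phi}(\mathbf{I}-\mathbf{\Psi}_1^\top\mathbf{\Sigma}_1^{-1}\mathbf{\Psi}_1)=-\boldsymbol{\beta}\mathbf{\Psi}_0(\mathbf{I}-\mathbf{\Psi}_1^\top\mathbf{\Sigma}_1^{-1}\mathbf{\Psi}_1)$), both rest on the same key fact that $\mathbf{I}-\mathbf{\Psi}_1^\top\mathbf{\Sigma}_1^{-1}\mathbf{\Psi}_1$ is an orthogonal projection, and the $\mathbf{P}_u$ argument is identical. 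The one substantive addition is that you make explicit the assumption that $\mathbf{\Sigma}_3=\mathbf{\Phi}\mathbf{\Phi}^\top$ is positive definite ($\mathbf{\Phi}$ of full row rank), which the paper uses tacitly when concluding strict rather than semi-definiteness of $\mathbf{Q}_u$ — a worthwhile clarification.
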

\begin{proof}
\textit{From formula \eqref{eq:notations}, the coefficient matrix $\mathbf{Q}_{u}$ of the quadratic term of the matrix Riccati equation \eqref{eq:riccati_u} for the unconstrained stochastic optimal control problem has the following decomposition,
\begin{align}
\mathbf{Q}_{u} 
& =\frac{2\gamma}{1-\gamma}\left(\mathbf{\Sigma}_{1}-\mathbf{\Psi}_1  \mathbf{\Psi}_0^\top \boldsymbol{\beta}^\top\right)^{\top}\mathbf{\Sigma}_{1}^{-1}\left(\mathbf{\Sigma}_{1}-\mathbf{\Psi}_1\mathbf{\Psi}_0^\top \boldsymbol{\beta}^\top\right)+2\mathbf{\Sigma}_3\nonumber\\
& =\frac{2\gamma}{1-\gamma}\mathbf{\Sigma}_{1}-\frac{2\gamma}{1-\gamma}\mathbf{\Psi}_1 \mathbf{\Psi}_0^\top\boldsymbol{\beta}^\top-\frac{2\gamma}{1-\gamma}\boldsymbol{\beta}\mathbf{\Psi}_0 \mathbf{\Psi}_1^\top+\frac{2\gamma}{1-\gamma}\boldsymbol{\beta}\mathbf{\Psi}_0\mathbf{\Psi}_1^\top \mathbf{\Sigma}_1^{-1}\mathbf{\Psi}_1\mathbf{\Psi}_0^\top \boldsymbol{\beta}^\top +2\mathbf{\Sigma}_3 \nonumber\\
& =\frac{2}{1-\gamma}\mathbf{\Sigma}_{3}-\frac{2\gamma}{1-\gamma}\boldsymbol{\beta}\mathbf{\Sigma}_0\boldsymbol{\beta}^\top+\frac{2\gamma}{1-\gamma}\boldsymbol{\beta}\mathbf{\Psi}_0\mathbf{\Psi}_1^\top \mathbf{\Sigma}_1^{-1}\mathbf{\Psi}_1\mathbf{\Psi}_0^\top \boldsymbol{\beta}^\top \nonumber \\
& =\frac{2}{1-\gamma}\mathbf{\Sigma}_{3}-\frac{2\gamma}{1-\gamma}\boldsymbol{\beta}\mathbf{\Psi}_0\left(\mathbf{I}-\mathbf{\Psi}_1^\top \mathbf{\Sigma}_1^{-1}\mathbf{\Psi}_1\right)\mathbf{\Psi}_0^\top \boldsymbol{\beta}^\top \,,\nonumber 
\end{align}
which is symmetric positive definite for $\gamma<0$ if we can show that $\mathbf{I}-\mathbf{\Psi}_1^\top \mathbf{\Sigma}_1^{-1}\mathbf{\Psi}_1$ is symmetric positive semi-definite. Indeed, for any vector $\boldsymbol{x}\in\mathbb R^{d+m}$ we have $\boldsymbol{x} = \mathbf{\Psi}_1^\top \boldsymbol{y} + \tilde{\boldsymbol{y}} $ where $\mathbf{\Psi}_1\tilde{\boldsymbol{y}} = 0$. Then, $\boldsymbol{x}^\top\left(\mathbf{I}-\mathbf{\Psi}_1^\top \mathbf{\Sigma}_1^{-1}\mathbf{\Psi}_1\right)\boldsymbol{x}=\tilde{\boldsymbol{y}}^\top \boldsymbol{y} \geq 0 $, thereby confirming positive semi-definiteness.}

\textit{Proving that $\mathbf{P}_{u}$ is symmetric negative definite is uncomplicated. We can observe that matrix $\boldsymbol{\delta}\mathbf{\Sigma}_{1}^{-1}\boldsymbol{\delta}$ is symmetric positive definite. Consequently, for $\gamma<0$, $-\mathbf{P}_{u}$ is symmetric positive definite, in other words $\mathbf{P}_{u}$ is symmetric negative definite.}
\end{proof}

Given Proposition \ref{prop:riccati_quadractic_term_u}, the stability analysis from \citet{Wonham1968} applies directly, and to do so it will be useful to define the following properties: 

\begin{definition}[Controllability]
\label{def:controllability}
\textit{Let $\mathbf{A}\in\mathbb{R}^{n\times n}$ and $\mathbf{B}\in\mathbb{R}^{n\times m}$ be constant matrices. The controllability matrix of $\left(\mathbf{A},\:\mathbf{B}\right)$ is the $n\times mn$ matrix $\boldsymbol{\Gamma}\left(\mathbf{A},\:\mathbf{B}\right)=\left[\mathbf{B},\:\mathbf{A}\mathbf{B},\:\cdots,\:\mathbf{A}^{n-1}\mathbf{B}\right]$. The pair $\left(\mathbf{A},\:\mathbf{B}\right)$ is controllable if the rank of $\boldsymbol{\Gamma}$ is $n$. If $\left(\mathbf{A},\:\mathbf{B}\right)$ is controllable, so is $\left(\mathbf{A}-\mathbf{B}\mathbf{M},\:\mathbf{B}\right)$ for every matrix $\mathbf{M}\in\mathbb{R}^{m\times n}$.}
\end{definition}
\begin{definition}[Observability]
\label{def:observability}
\textit{Let $\mathbf{A}\in\mathbb{R}^{n\times n}$ and $\mathbf{E}\in\mathbb{R}^{p\times n}$ be constant matrices. The pair $\left(\mathbf{E},\:\mathbf{A}\right)$ is observable if the pair $\left(\mathbf{A}^{\top},\:\mathbf{E}^{\top}\right)$ is controllable.}
\end{definition}
\begin{definition}[Stabilisability]
\label{def:Stabilisability}
\textit{Let $\mathbf{A}\in\mathbb{R}^{n\times n}$ and $\mathbf{B}\in\mathbb{R}^{n\times m}$ be constant matrices. The pair $\left(\mathbf{A},\:\mathbf{B}\right)$ is stabilisable if there exists a constant matrix $\mathbf{M}$ such that all the eigenvalues of $\mathbf{A}-\mathbf{B}\mathbf{M}$ have negative real parts.}
\end{definition}

With these definitions that are given above, we have the following proposition for the matrix Riccati equation \eqref{eq:ode_c_u}.

\begin{proposition}
\label{prop:riccati_solution_u}
For $\gamma<0$, the coefficient matrix $\mathbf{Q}_{u}$ of the quadratic term in the matrix Riccati equation \eqref{eq:riccati_u} is symmetric positive definite. Consequently, there are matrices $\mathbf{B}_{u}$, $\mathbf{E}_{u}$, and $\mathbf{N}_{u}$ such that $\mathbf{Q}_{u}=\mathbf{B}_{u}\mathbf{N}_{u}^{-1}\mathbf{B}_{u}^{\top}$ and $-\mathbf{P}_{u}=\mathbf{E}_{u}^{\top}\mathbf{E}_{u}$, with the pair $\left(\mathbf{A}_{u},\:\mathbf{B}_{u}\right)$ being stabilisable, and the pair $\left(\mathbf{E}_{u},\:\mathbf{A}_{u}\right)$ being observable. Hence, there is an unique solution $\boldsymbol{C}\left(t\right)$ to the matrix Riccati ODE \eqref{eq:riccati_u} that is negative semi-definite and bounded on $\left(-\infty,\:T\right]$, and there exists a unique limit $\bar{\mathbf{C}}=\lim_{t\rightarrow-\infty}\boldsymbol{C}\left(t\right)$.
\end{proposition}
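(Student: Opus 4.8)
The plan is to invoke Theorem 2.1 of \citet{Wonham1968}, whose hypotheses are precisely: (i) the quadratic coefficient $\mathbf{Q}_u$ admits a factorisation $\mathbf{Q}_u=\mathbf{B}_u\mathbf{N}_u^{-1}\mathbf{B}_u^\top$ with $\mathbf{N}_u$ symmetric positive definite and $\left(\mathbf{A}_u,\:\mathbf{B}_u\right)$ stabilisable; and (ii) the constant coefficient satisfies $-\mathbf{P}_u=\mathbf{E}_u^\top\mathbf{E}_u$ with $\left(\mathbf{E}_u,\:\mathbf{A}_u\right)$ observable. Under those hypotheses Wonham's theorem delivers exactly the conclusion we want: a unique symmetric negative semi-definite bounded solution on $\left(-\infty,\:T\right]$ that converges to a limit $\bar{\mathbf{C}}$ as $t\to-\infty$. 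So the proof reduces to producing the factorisations and verifying the two structural conditions.

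First I would produce the factorisations. By Proposition \ref{prop:riccati_quadractic_term_u}, for $\gamma<0$ the matrix $\mathbf{Q}_u$ is symmetric positive definite, so one may simply take $\mathbf{N}_u=\mathbf{I}$ and $\mathbf{B}_u=\mathbf{Q}_u^{1/2}$ (the symmetric positive definite square root), giving $\mathbf{Q}_u=\mathbf{B}_u\mathbf{N}_u^{-1}\mathbf{B}_u^\top$ with $\mathbf{B}_u$ invertible. Likewise, again by Proposition \ref{prop:riccati_quadractic_term_u}, $-\mathbf{P}_u$ is symmetric positive definite, so set $\mathbf{E}_u=\left(-\mathbf{P}_u\right)^{1/2}$, which is invertible, so that $-\mathbf{P}_u=\mathbf{E}_u^\top\mathbf{E}_u$.

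Next I would check stabilisability and observability, and here the invertibility of $\mathbf{B}_u$ and $\mathbf{E}_u$ makes the job trivial. Since $\mathbf{B}_u\in\mathbb{R}^{d\times d}$ is invertible, the controllability matrix $\boldsymbol{\Gamma}\left(\mathbf{A}_u,\:\mathbf{B}_u\right)=\left[\mathbf{B}_u,\:\mathbf{A}_u\mathbf{B}_u,\:\cdots\right]$ already has full rank $d$ from its first block $\mathbf{B}_u$ alone; hence $\left(\mathbf{A}_u,\:\mathbf{B}_u\right)$ is controllable, and controllability implies stabilisability (choosing $\mathbf{M}$ via pole placement, or just noting that controllable pairs are stabilisable by Definition \ref{def:Stabilisability} applied after pole assignment). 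For observability, $\left(\mathbf{E}_u,\:\mathbf{A}_u\right)$ is observable iff $\left(\mathbf{A}_u^\top,\:\mathbf{E}_u^\top\right)$ is controllable by Definition \ref{def:observability}; but $\mathbf{E}_u^\top$ is invertible, so the same first-block argument shows the controllability matrix $\boldsymbol{\Gamma}\left(\mathbf{A}_u^\top,\:\mathbf{E}_u^\top\right)$ has full rank $d$. Thus both structural conditions hold.

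I do not anticipate a serious obstacle: once Proposition \ref{prop:riccati_quadractic_term_u} is in hand, both $\mathbf{Q}_u$ and $-\mathbf{P}_u$ are full-rank, which makes the controllability/observability conditions automatic — there is no delicate rank computation to carry out. The only point requiring a little care is matching the sign and orientation conventions of \citet{Wonham1968} to the form \eqref{eq:riccati_u}: Wonham typically studies $\dot{\boldsymbol{C}}=-\mathbf{A}^\top\boldsymbol{C}-\boldsymbol{C}\mathbf{A}-\boldsymbol{C}\mathbf{Q}\boldsymbol{C}+\mathbf{P}$ with $\mathbf{P}\succeq0$ and obtains a positive semi-definite solution, so one should run his result with $\mathbf{P}\mapsto-\mathbf{P}_u\succ0$ (or equivalently apply it to $-\boldsymbol{C}(t)$) to land on the negative semi-definite solution claimed here, and note the final-time condition $\boldsymbol{C}(T)=\boldsymbol 0$ together with the time-reversal $t\mapsto T-t$ places us in the setting where his monotonicity and boundedness arguments give convergence as $t\to-\infty$. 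Once these bookkeeping checks are made, the conclusion — existence, uniqueness, negative semi-definiteness, boundedness on $\left(-\infty,\:T\right]$, and the limit $\bar{\mathbf{C}}$ — follows directly.
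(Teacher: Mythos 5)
Your proposal is correct and follows essentially the same route as the paper: factor $\mathbf{Q}_{u}$ and $-\mathbf{P}_{u}$ using their positive definiteness from Proposition \ref{prop:riccati_quadractic_term_u}, observe that the invertibility of $\mathbf{B}_{u}$ and $\mathbf{E}_{u}$ makes stabilisability and observability immediate from the first block of the controllability matrix, pass to $\tilde{\boldsymbol{C}}(t)=-\boldsymbol{C}(t)$ to match Wonham's sign conventions, and invoke his Theorem 2.1. The only cosmetic difference is that you use symmetric square roots where the paper uses an orthogonal--diagonal factorisation; the substance is identical.
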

\begin{proof}
\textit{We first perform a change of variable. Define $\tilde{\boldsymbol{C}}\left(t\right)=-\boldsymbol{C}\left(t\right)$, so the matrix Riccati equation \eqref{eq:riccati_u} becomes}
\begin{align}
\frac{d\tilde{\boldsymbol{C}}\left(t\right)}{d t} & +\mathbf{A}_{u}^{\top}\tilde{\boldsymbol{C}}\left(t\right)+\tilde{\boldsymbol{C}}\left(t\right)\mathbf{A}_{u}-\tilde{\boldsymbol{C}}\left(t\right)\mathbf{Q}_{u}\tilde{\boldsymbol{C}}\left(t\right)+\left(-\mathbf{P}_{u}\right)=0\,,\label{eq:riccati_q2_u}\\
\tilde{\boldsymbol{C}}\left(T\right) & =0\,.\nonumber 
\end{align}
\textit{Proposition \ref{prop:riccati_quadractic_term_u} has shown that matrix $-\mathbf{P}_{u}\in\mathbb{R}^{d\times d}$ is symmetric positive definite. So, all eigenvalues $\boldsymbol{\lambda}_{P_{u}}$ of $-\mathbf{P}_{u}$ are positive and there exists an orthonormal basis for $\mathbb{R}^{d}$ of their associated eigenvectors, in other words, there is an orthonormal matrix $\mathbf{O}_{u}$ such that $-\mathbf{P}_{u}=\mathbf{O}_{u}\mathbf{D}_{u}\mathbf{O}_{u}^{\top}$, where $\mathbf{D}_{u}=\text{diag}\left(\boldsymbol{\lambda}_{P_{u}}\right)\in\mathbb{R}^{d\times d}$ is a diagonal matrix with positive entries on the diagonal. Hence, we can write $-\mathbf{P}_{u}=\mathbf{E}_{u}^{\top}\mathbf{E}_{u}$, where $\mathbf{E}_{u}=\left(\mathbf{O}_{u}\sqrt{\mathbf{D}_{u}}\right)^{\top}$ is a real square matrix. The matrix $\mathbf{E}_u$ is invertible, and so the controllability matrix $\boldsymbol{\Gamma}\left(\mathbf{A}_{u}^{\top},\:\mathbf{E}_{u}^{\top}\right)\in\mathbb{R}^{d\times d^{2}}$, as defined by Definition \ref{def:controllability}, has rank $d$. Consequently, the pair $\left(\mathbf{A}_{u}^{\top},\:\mathbf{E}_{u}^{\top}\right)$ is controllable and the pair $\left(\mathbf{E}_{u},\:\mathbf{A}_{u}\right)$ is observable as per Definition \ref{def:observability}.}

\textit{The symmetric positive definiteness of matrix $\mathbf{Q}_{u}\in\mathbb{R}^{d\times d}$ is proven in Proposition \ref{prop:riccati_quadractic_term_u} as well. Thus, the matrix also has a diagonal decomposition, $\mathbf{Q}_{u}=\mathbf{B}_{u}\mathbf{N}_{u}^{-1}\mathbf{B}_{u}^{\top}$, where $\mathbf{B}_{u}$ is an orthogonal matrix, and $\mathbf{N}_{u}^{-1}=\text{diag}\left(\boldsymbol{\lambda}_{Q_{u}}\right)\in\mathbb{R}^{d\times d}$ where $\boldsymbol{\lambda}_{Q_{u}}$ are the positive eigenvalues of $\mathbf{Q}_{u}$. The matrix $\mathbf{B}_{u}$ is invertible, hence we can find a constant matrix $\mathbf{M}_{u}\in\mathbb{R}^{d\times d}$ such that all eigenvalues of $\mathbf{A}_{u}-\mathbf{B}_{u}\mathbf{M}_{u}$ have negative real parts, therefore the pair $\left(\mathbf{A}_{u},\:\mathbf{B}_{u}\right)$ is stabilisable.}

\textit{The above analyses of matrix $\mathbf{Q}_{u}$ and matrix $-\mathbf{P}_{u}$ confirm that we can apply Theorem 2.1 from \citet{Wonham1968} to conclude that solution $\tilde{\boldsymbol{C}}\left(t\right)$ to the matrix Riccati equation \eqref{eq:riccati_q2_u} is unique, positive semi-definite, bounded on $\left(-\infty,\:T\right]$, and has unique limit as $t$ tends toward $-\infty$.}
\end{proof}

\begin{remark}
The stability analysis of Proposition \ref{prop:riccati_solution_u} is sufficient for there to be no arbitrage in the model proposed by equations \eqref{eq:dF} and equation \eqref{eq:dSi}. If there were an arbitrage then it would always be optimal to take additional positions in the arbitrage portfolio, hence causing the value function to have a singularity in finite time, thus reaching a Nirvana, see \citet{LeePapanicolaou2016}. Stability of the matrix Riccati equation with respect to $\boldsymbol{C}\left(t\right)$ ensures no such singularity for $\gamma<0$.
\end{remark}

After solving the matrix Riccati equation \eqref{eq:ode_c_u}, we then study the stability of the solution to the linear ODE \eqref{eq:ode_b_u} with respect to $\boldsymbol{b}\left(t\right)$. We start the analysis by introducing the following lemma, which is a theorem from \citet{Wielandt1973} in regard to the eigenvalues of matrices. General mathematical knowledge for this lemma can be found in chapter one of \citet{HornJohnson1994}.

\begin{lemma}
\label{lem:eigenvalue_lemma}Let $\mathbf{M}$ be a $d\times d$ matrix and define the field of values $\boldsymbol{S}\left(\mathbf{M}\right):=\big\{ \boldsymbol{v}^{\top}\mathbf{M}\boldsymbol{v}\mid\boldsymbol{v}\;is\;a\;vector \allowbreak such\;that\;\boldsymbol{v}^{\top}\boldsymbol{v}=1\big\} $, which contains the eigenvalues of $\mathbf{M}$.

(a) Suppose $\mathbf{M}_{1}$ and $\mathbf{M}_{2}$ are two $d\times d$ matrices. If $\lambda$ is an eigenvalue of $\mathbf{M}_{1}+\mathbf{M}_{2}$, then $\lambda\in \boldsymbol{S}\left(\mathbf{M}_{1}\right)+\boldsymbol{S}\left(\mathbf{M}_{2}\right)=\left\{ \lambda_{1}+\lambda_{2}\mid\lambda_{1}\in \boldsymbol{S}\left(\mathbf{M}_{1}\right),\;\lambda_{2}\in \boldsymbol{S}\left(\mathbf{M}_{2}\right)\right\}$;

(b) Suppose $\mathbf{M}_{1}$ and $\mathbf{M}_{2}$ are two $d\times d$ matrices, $0\notin \boldsymbol{S}\left(\mathbf{M}_{2}\right)$ and $\mathbf{M}_{2}^{-1}$
exits. If $\lambda$ is an eigenvalue of $\mathbf{M}_{2}^{-1}\mathbf{M}_{1}$, then $\lambda\in\boldsymbol{S}\left(\mathbf{M}_{1}\right)/\boldsymbol{S}\left(\mathbf{M}_{2}\right)=\left\{\lambda_{1}/\lambda_{2}\mid\lambda_{1}\in \boldsymbol{S}\left(\mathbf{M}_{1}\right),\;\lambda_{2}\in \boldsymbol{S}\left(\mathbf{M}_{2}\right)\right\}$;

(c) Suppose $\mathbf{M}_{1}$ is an arbitrary $d\times d$ matrix, $\mathbf{M}_{2}$ is symmetric positive semi-definite matrix. If $\lambda$ is an eigenvalue of $\mathbf{M}_{1}\mathbf{M}_{2}$, then $\lambda\in \boldsymbol{S}\left(\mathbf{M}_{1}\right)\boldsymbol{S}\left(\mathbf{M}_{2}\right)=\left\{ \lambda_{1}\lambda_{2}\mid\lambda_{1}\in \boldsymbol{S}\left(\mathbf{M}_{1}\right),\;\lambda_{2}\in \boldsymbol{S}\left(\mathbf{M}_{2}\right)\right\}$.
\end{lemma}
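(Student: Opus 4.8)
The plan is to prove all three parts with a single device: if $\lambda$ is an eigenvalue of a matrix $\mathbf{M}$ with eigenvector $\boldsymbol{v}$ normalised so that $\boldsymbol{v}^{*}\boldsymbol{v}=1$ (reading $\boldsymbol{v}^{\top}$ as the conjugate transpose $\boldsymbol{v}^{*}$ and working over $\mathbb{C}^{d}$, since the eigenvalues arising in the applications need not be real), then $\lambda=\boldsymbol{v}^{*}\mathbf{M}\boldsymbol{v}\in\boldsymbol{S}(\mathbf{M})$. This ``sandwiching identity'' immediately gives the parenthetical assertion that $\boldsymbol{S}(\mathbf{M})$ contains the spectrum of $\mathbf{M}$, and the remaining work is just to apply it to a well-chosen eigenvector in each case.

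For (a), let $\lambda$ be an eigenvalue of $\mathbf{M}_{1}+\mathbf{M}_{2}$ with unit eigenvector $\boldsymbol{v}$; then $\lambda=\boldsymbol{v}^{*}\mathbf{M}_{1}\boldsymbol{v}+\boldsymbol{v}^{*}\mathbf{M}_{2}\boldsymbol{v}$, and the two summands lie in $\boldsymbol{S}(\mathbf{M}_{1})$ and $\boldsymbol{S}(\mathbf{M}_{2})$, which is the claim. For (b), note first that $0\notin\boldsymbol{S}(\mathbf{M}_{2})$ already forces $\mathbf{M}_{2}$ to be nonsingular (a kernel vector would otherwise put $0$ in the field of values). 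If $\lambda$ is an eigenvalue of $\mathbf{M}_{2}^{-1}\mathbf{M}_{1}$ with unit eigenvector $\boldsymbol{w}$, rewrite the eigenrelation as $\mathbf{M}_{1}\boldsymbol{w}=\lambda\mathbf{M}_{2}\boldsymbol{w}$ and sandwich with $\boldsymbol{w}^{*}$ to get $\boldsymbol{w}^{*}\mathbf{M}_{1}\boldsymbol{w}=\lambda\,\boldsymbol{w}^{*}\mathbf{M}_{2}\boldsymbol{w}$; since $\boldsymbol{w}^{*}\mathbf{M}_{2}\boldsymbol{w}\in\boldsymbol{S}(\mathbf{M}_{2})$ is nonzero we may divide, obtaining $\lambda=(\boldsymbol{w}^{*}\mathbf{M}_{1}\boldsymbol{w})/(\boldsymbol{w}^{*}\mathbf{M}_{2}\boldsymbol{w})\in\boldsymbol{S}(\mathbf{M}_{1})/\boldsymbol{S}(\mathbf{M}_{2})$.

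Part (c) needs slightly more, since $\mathbf{M}_{1}\mathbf{M}_{2}$ is not symmetric and cannot be sandwiched directly. The idea is to introduce the symmetric positive semi-definite square root $\mathbf{R}=\mathbf{M}_{2}^{1/2}$, so that $\mathbf{M}_{2}=\mathbf{R}\mathbf{R}$ with $\mathbf{R}^{*}=\mathbf{R}$, and to observe that if $\mathbf{M}_{1}\mathbf{M}_{2}\boldsymbol{x}=\lambda\boldsymbol{x}$ then applying $\mathbf{R}$ gives $(\mathbf{R}\mathbf{M}_{1}\mathbf{R})(\mathbf{R}\boldsymbol{x})=\lambda(\mathbf{R}\boldsymbol{x})$. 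When $\lambda\neq0$ one must have $\mathbf{R}\boldsymbol{x}\neq\boldsymbol{0}$ (otherwise $\mathbf{M}_{2}\boldsymbol{x}=\mathbf{R}(\mathbf{R}\boldsymbol{x})=\boldsymbol{0}$ and hence $\lambda\boldsymbol{x}=\mathbf{M}_{1}\mathbf{M}_{2}\boldsymbol{x}=\boldsymbol{0}$), so $\boldsymbol{y}:=\mathbf{R}\boldsymbol{x}/\|\mathbf{R}\boldsymbol{x}\|$ is a unit eigenvector of the symmetrised matrix $\mathbf{R}\mathbf{M}_{1}\mathbf{R}$, giving $\lambda=\boldsymbol{y}^{*}\mathbf{R}\mathbf{M}_{1}\mathbf{R}\boldsymbol{y}=(\mathbf{R}\boldsymbol{y})^{*}\mathbf{M}_{1}(\mathbf{R}\boldsymbol{y})$. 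Setting $\boldsymbol{z}=\mathbf{R}\boldsymbol{y}$ (nonzero, again because $\lambda\neq0$) and factoring out its norm, $\lambda=\big(\boldsymbol{y}^{*}\mathbf{M}_{2}\boldsymbol{y}\big)\big((\boldsymbol{z}/\|\boldsymbol{z}\|)^{*}\mathbf{M}_{1}(\boldsymbol{z}/\|\boldsymbol{z}\|)\big)$, where $\|\boldsymbol{z}\|^{2}=\boldsymbol{y}^{*}\mathbf{R}^{2}\boldsymbol{y}=\boldsymbol{y}^{*}\mathbf{M}_{2}\boldsymbol{y}\in\boldsymbol{S}(\mathbf{M}_{2})$ and the other factor lies in $\boldsymbol{S}(\mathbf{M}_{1})$, so $\lambda\in\boldsymbol{S}(\mathbf{M}_{1})\boldsymbol{S}(\mathbf{M}_{2})$. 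The leftover case $\lambda=0$ is handled separately: if $\mathbf{M}_{2}$ is singular then $0\in\boldsymbol{S}(\mathbf{M}_{2})$ and $0=\lambda_{1}\cdot0$ for any $\lambda_{1}\in\boldsymbol{S}(\mathbf{M}_{1})$, whereas if $\mathbf{M}_{2}$ is nonsingular then a zero eigenvalue of $\mathbf{M}_{1}\mathbf{M}_{2}$ forces $\mathbf{M}_{1}$ singular, so $0\in\boldsymbol{S}(\mathbf{M}_{1})$ and $0=0\cdot\lambda_{2}$. I expect this bookkeeping in (c) — the passage through $\mathbf{R}\mathbf{M}_{1}\mathbf{R}$ so that the square root can be moved onto the eigenvector, together with the separate treatment of the zero eigenvalue when a factor is singular — to be the only genuinely non-routine step; parts (a), (b), and the spectral-containment claim are one-liners from the sandwiching identity.
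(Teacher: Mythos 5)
Your proof is correct. The paper itself offers no argument here -- it simply defers to the theorems of \citet{Wielandt1973} -- so you have supplied the content that the paper leaves to the reference. Your arguments for (a) and (b) are exactly the classical ones (sandwich the eigenvector; for the quotient, rewrite the eigenrelation as $\mathbf{M}_1\boldsymbol{w}=\lambda\mathbf{M}_2\boldsymbol{w}$ and divide by the nonzero quadratic form), and your treatment of (c) via the similarity $\mathbf{M}_1\mathbf{M}_2 \sim \mathbf{R}\mathbf{M}_1\mathbf{R}$ with $\mathbf{R}=\mathbf{M}_2^{1/2}$, followed by splitting $\boldsymbol{z}^{*}\mathbf{M}_1\boldsymbol{z}$ into $\|\boldsymbol{z}\|^{2}\in\boldsymbol{S}(\mathbf{M}_2)$ times a normalised form in $\boldsymbol{S}(\mathbf{M}_1)$, is the standard route; the separate handling of $\lambda=0$ according to whether $\mathbf{M}_2$ is singular is a genuine gap in naive versions of this argument and you close it correctly. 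One point worth making explicit, which you rightly flag: the lemma as printed defines $\boldsymbol{S}(\mathbf{M})$ with a real transpose and real unit vectors, under which the containment of (possibly complex) eigenvalues fails for non-symmetric matrices; the statement only holds with the field of values taken over $\mathbb{C}^{d}$ with the conjugate transpose, which is the definition Wielandt and \citet{HornJohnson1994} use and the one your proof adopts. So your write-up is not merely consistent with the cited source but also repairs a small imprecision in the statement's notation.
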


\begin{proof}
\textit{The detailed and comprehensive proofs are given by the theorems of \citet{Wielandt1973}}
\end{proof}

\begin{proposition}
\label{prop:ode_b_solution_u}
Let $\boldsymbol{R}_{u}\left(t\right)$ be the coefficient matrix of the homogeneous part of equation \eqref{eq:ode_b_u}. For $\gamma<0$ there exists $t^{*}>-\infty$ such that $\boldsymbol{R}_{u}\left(t\right)$ has all positive eigenvalues for $t<t^{*}$, and therefore the solution of ODE \eqref{eq:ode_b_u} has a finite steady state.
\end{proposition}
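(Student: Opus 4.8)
The plan is to express the homogeneous coefficient $\boldsymbol{R}_{u}(t)$ in terms of the data of the Riccati equation \eqref{eq:riccati_u}, pass to the limit $t\to-\infty$ using Proposition~\ref{prop:riccati_solution_u}, and then locate the eigenvalues via the closed-loop stability that accompanies the stabilizing Riccati solution. First I would read off from equation~\eqref{eq:ode_b_u} that
\begin{align*}
\boldsymbol{R}_{u}(t)=-\boldsymbol{C}(t)\mathbf{Q}_{u}+\Bigl(\tfrac{\gamma}{1-\gamma}\boldsymbol{\delta}\mathbf{\Sigma}_{1}^{-1}\mathbf{\Sigma}_{2}+\boldsymbol{\delta}\Bigr),
\end{align*}
and note that, transposing the definition of $\mathbf{A}_{u}$ and using $\boldsymbol{\delta}^{\top}=\boldsymbol{\delta}$ and $\mathbf{\Sigma}_{1}^{-1}$ symmetric, the bracketed matrix equals $-\mathbf{A}_{u}^{\top}$; hence $\boldsymbol{R}_{u}(t)=-\boldsymbol{C}(t)\mathbf{Q}_{u}-\mathbf{A}_{u}^{\top}$. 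Writing $\tilde{\boldsymbol{C}}(t)=-\boldsymbol{C}(t)\succeq0$ as in the proof of Proposition~\ref{prop:riccati_solution_u} and transposing (with $\mathbf{Q}_{u}$ and $\boldsymbol{C}(t)$ symmetric) gives $\boldsymbol{R}_{u}(t)^{\top}=-\bigl(\mathbf{A}_{u}-\mathbf{Q}_{u}\tilde{\boldsymbol{C}}(t)\bigr)$, so the spectrum of $\boldsymbol{R}_{u}(t)$ is the negative of the spectrum of the closed-loop matrix $\mathbf{A}_{u}-\mathbf{Q}_{u}\tilde{\boldsymbol{C}}(t)$.

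By Proposition~\ref{prop:riccati_solution_u}, $\tilde{\boldsymbol{C}}(t)\to\bar{\tilde{\boldsymbol{C}}}:=-\bar{\boldsymbol{C}}\succeq0$ as $t\to-\infty$, so $\boldsymbol{R}_{u}(t)\to\bar{\boldsymbol{R}}_{u}$ with $\bar{\boldsymbol{R}}_{u}^{\top}=-\bigl(\mathbf{A}_{u}-\mathbf{Q}_{u}\bar{\tilde{\boldsymbol{C}}}\bigr)$, and (being the limit of a solution of the autonomous equation \eqref{eq:riccati_q2_u}) $\bar{\tilde{\boldsymbol{C}}}$ necessarily solves the algebraic Riccati equation $\mathbf{A}_{u}^{\top}\bar{\tilde{\boldsymbol{C}}}+\bar{\tilde{\boldsymbol{C}}}\mathbf{A}_{u}-\bar{\tilde{\boldsymbol{C}}}\mathbf{Q}_{u}\bar{\tilde{\boldsymbol{C}}}-\mathbf{P}_{u}=0$. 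The crux will be to show $\mathbf{A}_{\mathrm{cl}}:=\mathbf{A}_{u}-\mathbf{Q}_{u}\bar{\tilde{\boldsymbol{C}}}$ is Hurwitz. This is a standard by-product of Theorem~2.1 of \citet{Wonham1968} under the stabilizability and observability hypotheses verified in Proposition~\ref{prop:riccati_solution_u}, but it can be obtained self-containedly: substituting the algebraic Riccati equation yields
\begin{align*}
\mathbf{A}_{\mathrm{cl}}^{\top}\bar{\tilde{\boldsymbol{C}}}+\bar{\tilde{\boldsymbol{C}}}\mathbf{A}_{\mathrm{cl}}=\mathbf{P}_{u}-\bar{\tilde{\boldsymbol{C}}}\mathbf{Q}_{u}\bar{\tilde{\boldsymbol{C}}}=-\bigl((-\mathbf{P}_{u})+\bar{\tilde{\boldsymbol{C}}}\mathbf{Q}_{u}\bar{\tilde{\boldsymbol{C}}}\bigr)\prec0,
\end{align*}
because $-\mathbf{P}_{u}\succ0$ and $\mathbf{Q}_{u}\succ0$ by Proposition~\ref{prop:riccati_quadractic_term_u}; a null vector of $\bar{\tilde{\boldsymbol{C}}}$ would make the left side vanish on it, contradicting strict negativity, so in fact $\bar{\tilde{\boldsymbol{C}}}\succ0$, and the classical Lyapunov theorem gives that $\mathbf{A}_{\mathrm{cl}}$ is Hurwitz. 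Therefore $\bar{\boldsymbol{R}}_{u}^{\top}$, hence $\bar{\boldsymbol{R}}_{u}$, has all eigenvalues with strictly positive real part, and in particular is invertible; by continuity of eigenvalues in the matrix entries there is $t^{*}>-\infty$ with every eigenvalue of $\boldsymbol{R}_{u}(t)$ having positive real part for all $t<t^{*}$.

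It will then remain to get the finite steady state. Equation~\eqref{eq:ode_b_u} is the linear inhomogeneous system $\dot{\boldsymbol{b}}(t)=\boldsymbol{R}_{u}(t)\boldsymbol{b}(t)+\boldsymbol{f}(t)$ whose forcing term $\boldsymbol{f}(t)=-\boldsymbol{C}(t)\bigl(\tfrac{2\gamma}{1-\gamma}\mathbf{\Sigma}_{2}^{\top}\mathbf{\Sigma}_{1}^{-1}\boldsymbol{\mu}+2\boldsymbol{\delta}\boldsymbol{\theta}\bigr)+\tfrac{\gamma}{1-\gamma}\boldsymbol{\delta}\mathbf{\Sigma}_{1}^{-1}\boldsymbol{\mu}$ converges to a finite $\bar{\boldsymbol{f}}$ as $t\to-\infty$ since $\boldsymbol{C}(t)$ does. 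Setting $\bar{\boldsymbol{b}}:=-\bar{\boldsymbol{R}}_{u}^{-1}\bar{\boldsymbol{f}}$ and $\boldsymbol{e}(t):=\boldsymbol{b}(t)-\bar{\boldsymbol{b}}$ gives $\dot{\boldsymbol{e}}(t)=\boldsymbol{R}_{u}(t)\boldsymbol{e}(t)+\bigl(\boldsymbol{R}_{u}(t)-\bar{\boldsymbol{R}}_{u}\bigr)\bar{\boldsymbol{b}}+\bigl(\boldsymbol{f}(t)-\bar{\boldsymbol{f}}\bigr)$, whose last two terms vanish as $t\to-\infty$. Since $\boldsymbol{R}_{u}(t)\to\bar{\boldsymbol{R}}_{u}$ with $-\bar{\boldsymbol{R}}_{u}$ Hurwitz, the backward-time evolution on $(-\infty,t^{*}]$ is, for $t^{*}$ sufficiently negative, a small perturbation of a constant-coefficient contraction, so a routine asymptotically-autonomous estimate yields $\boldsymbol{e}(t)\to\boldsymbol{0}$, i.e. $\boldsymbol{b}(t)\to\bar{\boldsymbol{b}}$; on the finite interval $[t^{*},T]$ the solution is bounded automatically, so $\boldsymbol{b}$ is bounded on all of $(-\infty,T]$ and has the claimed finite steady state.

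The step I expect to be the main obstacle is the Hurwitz property of the limiting closed-loop matrix $\mathbf{A}_{\mathrm{cl}}$; the algebraic rewriting, the eigenvalue-continuity argument, and the perturbation estimate for the linear ODE are routine. One could alternatively try to localise the eigenvalues of $\boldsymbol{R}_{u}(t)=\tilde{\boldsymbol{C}}(t)\mathbf{Q}_{u}-\mathbf{A}_{u}^{\top}$ directly through fields of values via Lemma~\ref{lem:eigenvalue_lemma}, but this needs extra care, since part~(c) of that lemma controls eigenvalues rather than the full field of values that part~(a) would require when splitting the sum.
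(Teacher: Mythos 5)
Your proof is correct, but it takes a genuinely different route from the paper's. The paper stays with the explicit decomposition of the limit matrix, writing $\mathbf{R}_{u}=-\bar{\mathbf{C}}\mathbf{Q}_{u}+\frac{1}{1-\gamma}\boldsymbol{\delta}-\frac{\gamma}{1-\gamma}\boldsymbol{\delta}\mathbf{\Sigma}_{1}^{-1}\left(\mathbf{\Psi}_1\mathbf{\Psi}_0^\top\mathbf{\Sigma}_0^{-1}\mathbf{\Psi}_0\mathbf{\Psi}_1^\top\right)$ and invoking Lemma \ref{lem:eigenvalue_lemma} to read positivity of the spectrum off the fact that each summand is built from positive semi-definite factors (with positive scalars when $\gamma<0$). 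You instead observe that $\boldsymbol{R}_{u}(t)^{\top}=-\left(\mathbf{A}_{u}-\mathbf{Q}_{u}\tilde{\boldsymbol{C}}(t)\right)$ is the negative of the closed-loop matrix of the Riccati system, and obtain the Hurwitz property of the limiting closed loop from a Lyapunov identity built out of the algebraic Riccati equation. Your identification $-\mathbf{A}_{u}^{\top}=\frac{\gamma}{1-\gamma}\boldsymbol{\delta}\mathbf{\Sigma}_{1}^{-1}\mathbf{\Sigma}_{2}+\boldsymbol{\delta}$ checks out against \eqref{eq:ode_b_u} and the definition of $\mathbf{A}_{u}$, and the Lyapunov computation $\mathbf{A}_{\mathrm{cl}}^{\top}\bar{\tilde{\boldsymbol{C}}}+\bar{\tilde{\boldsymbol{C}}}\mathbf{A}_{\mathrm{cl}}=\mathbf{P}_{u}-\bar{\tilde{\boldsymbol{C}}}\mathbf{Q}_{u}\bar{\tilde{\boldsymbol{C}}}\prec 0$ is exactly right given Proposition \ref{prop:riccati_quadractic_term_u}; the strict definiteness of $\bar{\tilde{\boldsymbol{C}}}$ then follows as you say. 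What your route buys is robustness: as you yourself note, part (a) of Lemma \ref{lem:eigenvalue_lemma} requires the full field of values of each summand, and the field of values of a product of two symmetric positive semi-definite matrices such as $\tilde{\boldsymbol{C}}(t)\mathbf{Q}_{u}$ need not lie in the closed right half-plane even though its eigenvalues do, so the paper's summation step is delicate; your closed-loop argument sidesteps this entirely and also delivers the precise conclusion that is actually needed for stability, namely eigenvalues with positive \emph{real part} (since $\boldsymbol{R}_{u}$ is not symmetric its spectrum need not be real, and the proposition's phrase ``all positive eigenvalues'' should be read that way). What the paper's route buys is brevity and the avoidance of any appeal to the algebraic Riccati equation or to $\bar{\tilde{\boldsymbol{C}}}\succ 0$. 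Your concluding asymptotically-autonomous perturbation estimate for the inhomogeneous linear ODE is routine and matches the level of detail the paper itself supplies for that step.
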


\begin{proof}
\textit{By observing ODE \eqref{eq:ode_b_u} and utilising the formulae given by equation \eqref{eq:notations}, we have the coefficient matrix for the ODE,
\begin{align}
\boldsymbol{R}_{u}\left(t\right) 
& =-2\boldsymbol{C}\left(t\right)\left(\frac{\gamma}{1-\gamma}\mathbf{\Sigma}_{2}^{\top}\mathbf{\Sigma}_{1}^{-1}\mathbf{\Sigma}_{2}+\mathbf{\Sigma}_{3}\right)+\left(\frac{\gamma}{1-\gamma}\boldsymbol{\delta}\mathbf{\Sigma}_{1}^{-1}\mathbf{\Sigma}_{2}+\boldsymbol{\delta}\right)\nonumber \\
& =-\boldsymbol{C}\left(t\right)\mathbf{Q}_{u}+\frac{1}{1-\gamma}\boldsymbol{\delta}-\frac{\gamma}{1-\gamma}\boldsymbol{\delta}\mathbf{\Sigma}_{1}^{-1}\mathbf{\Psi}_1  \mathbf{\Psi}_0^\top \boldsymbol{\beta}^\top \nonumber \,.
\end{align}
Utilising equation \eqref{eq:beta}, this expression can be simplified to
\begin{align}
\label{eq:ode_b_r_u}
\boldsymbol{R}_{u}\left(t\right) =-\boldsymbol{C}\left(t\right)\mathbf{Q}_{u}+\frac{1}{1-\gamma} \boldsymbol{\delta}-\frac{\gamma}{1-\gamma}\boldsymbol{\delta}\mathbf{\Sigma}_{1}^{-1}\left(\mathbf{\Psi}_1\mathbf{\Psi}_0^\top\mathbf{\Sigma}_0^{-1}\mathbf{\Psi}_0\mathbf{\Psi}_1^\top\right)\,.
\end{align}
Let $\bar{\mathbf{C}}=\lim_{t\rightarrow-\infty}\boldsymbol{C}\left(t\right)$, then the limit of equation \eqref{eq:ode_b_r_u} is
\begin{align}
\mathbf{R}_{u} =-\bar{\mathbf{C}}\mathbf{Q}_{u}+\frac{1}{1-\gamma} \boldsymbol{\delta}-\frac{\gamma}{1-\gamma}\boldsymbol{\delta}\mathbf{\Sigma}_{1}^{-1}\left(\mathbf{\Psi}_1\mathbf{\Psi}_0^\top\mathbf{\Sigma}_0^{-1}\mathbf{\Psi}_0\mathbf{\Psi}_1^\top\right)\,,\label{eq:ode_b_r2_u}
\end{align}
where matrix $\boldsymbol{\delta}$ is assumed to be symmetric positive semi-definite. }

\textit{Proposition \ref{prop:riccati_quadractic_term_u} proves that matrix $\mathbf{Q}_{u}$ is symmetric positive semi-definite, and proposition \ref{prop:riccati_solution_u} implies that matrix $-\bar{\mathbf{C}}$ is symmetric positive semi-definite. Matrix $\mathbf{\Psi}_1  \mathbf{\Psi}_0^\top\mathbf{\Sigma}_0^{-1}\mathbf{\Psi}_0\mathbf{\Psi}_1^\top$ is also symmetric positive semi-definite. Therefore, if $\gamma<0$ it follows from Lemma \ref{lem:eigenvalue_lemma} that matrix $\mathbf{R}_{u}$ given by equation \eqref{eq:ode_b_r2_u} has all positive eigenvalues. This allows us to say that there exists a $t^{*}>-\infty$ such that $\boldsymbol{R}_{u}\left(t\right)$ has positive eigenvalues for all $t<t^{*}$, and the solution $\boldsymbol{b}\left(t\right)$ of the equation \eqref{eq:ode_b_u} has a finite steady state.}
\end{proof}
\begin{proposition}
\label{prop:long_term_ode_a_u}For $\gamma<0$, the solution of equation \eqref{eq:ode_a_u} has a finite steady state as $t\rightarrow-\infty$. Consequently, the long term certainty equivalent rate of the unconstrained stochastic optimal control problem that is described by equation \eqref{eq:hjb_g_u} is asymptotically proportional to the solution of ODE \eqref{eq:ode_a_u}.
\end{proposition}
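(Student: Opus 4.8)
The plan is to treat \eqref{eq:ode_a_u} as an equation that is solved by a single quadrature, and then to feed into it the limiting behaviour of $\boldsymbol{b}(t)$ and $\boldsymbol{C}(t)$ already obtained in Propositions \ref{prop:riccati_solution_u} and \ref{prop:ode_b_solution_u}. The first point to make is that the right-hand side of \eqref{eq:ode_a_u}, call it $\Phi(t)$, does not depend on $a(t)$, so $a(t) = -\int_t^T \Phi(s)\,ds$ and every asymptotic property of $a$ is inherited from that of $\Phi$. Since the forcing terms of \eqref{eq:ode_a_u} do not vanish as $t\to-\infty$ — for instance $-r\gamma$ and $-\frac{\gamma}{2(1-\gamma)}\boldsymbol{\mu}^{\top}\mathbf{\Sigma}_{1}^{-1}\boldsymbol{\mu}$ are strictly positive when $\gamma<0$ — the correct reading of ``finite steady state'' here is that $\Phi(t)$, equivalently the average rate $a(t)/(T-t)$, converges to a finite limit, while $a(t)$ itself grows at most linearly.

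Second, I would invoke Proposition \ref{prop:riccati_solution_u} for the finite limit $\bar{\mathbf{C}}=\lim_{t\to-\infty}\boldsymbol{C}(t)$ and Proposition \ref{prop:ode_b_solution_u} for the finite limit $\bar{\boldsymbol{b}}=\lim_{t\to-\infty}\boldsymbol{b}(t)$; the latter follows because, once $\boldsymbol{R}_{u}(t)$ has all positive eigenvalues, the linear ODE \eqref{eq:ode_b_u} written in the backward time $\tau=T-t$ has an asymptotically stable drift together with a forcing term that converges (it is affine in $\boldsymbol{C}(t)$), so $\boldsymbol{b}$ is driven to the limiting equilibrium $-\mathbf{R}_{u}^{-1}$ times the limiting forcing. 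Now $\Phi(t)$ is a fixed, constant-coefficient quadratic form in the entries of $\boldsymbol{b}(t)$ plus a linear functional of the entries of $\boldsymbol{C}(t)$ plus a constant, hence a continuous function of $(\boldsymbol{b}(t),\boldsymbol{C}(t))$; therefore
\[
\lim_{t\to-\infty}\Phi(t)=:\bar{\Phi}
\]
exists and equals the same expression evaluated at $(\bar{\boldsymbol{b}},\bar{\mathbf{C}})$. It follows that $a(t)=\bar{\Phi}\,(t-T)+o(|t|)$ and $a(t)/(T-t)\to-\bar{\Phi}$ as $t\to-\infty$, which is the first assertion.

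Third, for the ``consequently'' clause I would pin down the certainty equivalent. Defining $\mathrm{CE}_{t}$ through $U(\mathrm{CE}_{t})=u(t,w,\boldsymbol{z})$ and the long-horizon rate $\varrho_{t}=\frac{1}{T-t}\ln(\mathrm{CE}_{t}/w)$, the ansatz \eqref{eq:ansatz_u} combined with \eqref{eq:ansztz_g} gives $\varrho_{t}=\frac{1}{\gamma(T-t)}\ln g(t,\boldsymbol{z})=\frac{a(t)+\boldsymbol{b}^{\top}(t)\boldsymbol{z}+\boldsymbol{z}^{\top}\boldsymbol{C}(t)\boldsymbol{z}}{\gamma(T-t)}$. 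Since $\boldsymbol{b}(t)$ and $\boldsymbol{C}(t)$ remain bounded while $a(t)$ grows linearly, the $a(t)$ term dominates the numerator, so $\varrho_{t}=\frac{a(t)}{\gamma(T-t)}+o(1)\to-\bar{\Phi}/\gamma$; in other words the long-term certainty equivalent rate is asymptotically proportional to the solution $a(t)$ of \eqref{eq:ode_a_u}, as claimed.

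The main obstacle is interpretational rather than technical: one must be careful that for $a$ the ``steady state'' is that of the rate $\Phi$, not of $a$ itself (generically $\bar{\Phi}\neq0$), and one must check that the earlier propositions genuinely hand over the limits $\bar{\mathbf{C}}$ and $\bar{\boldsymbol{b}}$ rather than mere boundedness — for $\boldsymbol{b}$ this is the one small lemma worth spelling out, namely that a linear system whose drift matrix eventually has spectrum in the left half-plane and whose forcing converges is itself convergent. Everything after that is a continuity argument plus one quadrature.
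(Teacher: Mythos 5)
Your proposal is correct and follows essentially the same route as the paper: invoke the finite limits $\bar{\mathbf{C}}$ and $\bar{\boldsymbol{b}}$ from Propositions \ref{prop:riccati_solution_u} and \ref{prop:ode_b_solution_u}, pass to the limit in the right-hand side of \eqref{eq:ode_a_u} by continuity, conclude $\lim_{t\to-\infty}a(t)/(T-t)$ exists, and then read off the long-term certainty-equivalent rate from the ansatz \eqref{eq:ansatz_u}--\eqref{eq:ansztz_g}, with the bounded $\boldsymbol{b}^{\top}(t)\boldsymbol{z}$ and $\boldsymbol{z}^{\top}\boldsymbol{C}(t)\boldsymbol{z}$ terms washed out by the $1/(T-t)$ factor. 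Your explicit remark that ``finite steady state'' refers to the rate $a(t)/(T-t)$ rather than to $a(t)$ itself is exactly the reading the paper adopts in \eqref{eq:a_asymptotic_u}.
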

\begin{proof}
\textit{Denote by $L\left(t\right)$ the right hand side of equation \eqref{eq:ode_a_u}. From the analyses of Proposition \ref{prop:riccati_solution_u} and Proposition \ref{prop:ode_b_solution_u}, both solution $\boldsymbol{C}\left(t\right)$ and $\boldsymbol{b}\left(t\right)$ with respect to ODE \eqref{eq:ode_c_u}
and ODE \eqref{eq:ode_b_u} have finite limits as the time variable $t$ tends to $-\infty$, therefore when $t$ approaches to $-\infty$, we have
\begin{align*}
\lim_{t\rightarrow-\infty}L\left(t\right) 
& =-\bar{\boldsymbol{b}}^{\top}\left(\frac{\gamma}{2\left(1-\gamma\right)}\mathbf{\Sigma}_{2}^{\top}\mathbf{\Sigma}_{1}^{-1}\mathbf{\Sigma}_{2}+\frac{1}{2}\mathbf{\Sigma}_{3}\right)\bar{\boldsymbol{b}}\\
& \quad\;-\frac{\gamma}{2\left(1-\gamma\right)}\bar{\boldsymbol{b}}^{\top}\mathbf{\Sigma}_{2}^{\top}\mathbf{\Sigma}_{1}^{-1}\boldsymbol{\mu}-\frac{\gamma}{2\left(1-\gamma\right)}\boldsymbol{\mu}^{\top}\mathbf{\Sigma}_{1}^{-1}\mathbf{\Sigma}_{2}\bar{\boldsymbol{b}}-\boldsymbol{\theta}^{\top}\boldsymbol{\delta}\bar{\boldsymbol{b}}\\
& \quad\;-\frac{\gamma}{2\left(1-\gamma\right)}\boldsymbol{\mu}^{\top}\mathbf{\Sigma}_{1}^{-1}\boldsymbol{\mu}-\mathrm{tr}\left[\mathbf{\Sigma}_{3}\bar{\boldsymbol{C}}\right]-r\gamma\\
& =\bar{L}\,,
\end{align*}
where $\bar{\boldsymbol{b}}=\underset{t\rightarrow-\infty}{\lim}\boldsymbol{b}\left(t\right)$ and $\bar{\mathbf{C}}=\underset{t\rightarrow-\infty}{\lim}\boldsymbol{C}\left(t\right)$. Hence, as $t$ tends toward negative infinity, equation \eqref{eq:ode_a_u} relaxes and we have
\begin{align}
\lim_{t\rightarrow-\infty}\frac{1}{T-t}\int_{t}^{T}\frac{d a\left(s\right)}{d s} & ds=\bar{L}\,,\label{eq:l_bar_u}
\end{align}
which shows that the solution of ODE \eqref{eq:ode_a_u} has a finite steady state,
\begin{align}
\lim_{t\rightarrow-\infty}\frac{a\left(t\right)}{T-t}=-\bar{L}\,.\label{eq:a_asymptotic_u}
\end{align}}

\textit{Furthermore, given the utility function $\eqref{eq:utility_function}$, the value function \eqref{eq:ansatz_u}, and the exponential ansatz for $g\left(t,\:\boldsymbol{z}\right)$ that is defined by formula \eqref{eq:ansztz_g}, the certainty equivalent is defined by
\begin{align*}
U^{-1}\left(u\left(t,\:w,\:\boldsymbol{z}\right)\right) 
& =U^{-1}\left(\frac{1}{\gamma}w^{\gamma}g\left(t,\:\boldsymbol{z}\right)\right)\\
& =w\exp\left(\frac{1}{\gamma}\left(a\left(t\right)+\boldsymbol{b}^{\top}\left(t\right)\boldsymbol{z}+\boldsymbol{z}^{\top}\boldsymbol{C}\left(t\right)\boldsymbol{z}\right)\right)\,.
\end{align*}
Hence, under the optimal control variable $\boldsymbol{\pi}^{*}$, the long-term growth rate is 
\begin{align*}
\frac{\ln\left(U^{-1}\left(u\left(t,\:w,\:\boldsymbol{z}\right)\right)\right)}{T-t} 
& =\frac{1}{T-t}\ln\left(w\exp\left(\frac{1}{\gamma}\left(a\left(t\right)+\boldsymbol{b}^{\top}\left(t\right)\boldsymbol{z}+\boldsymbol{z}^{\top}\boldsymbol{C}\left(t\right)\boldsymbol{z}\right)\right)\right)\\
& =\frac{1}{\gamma\left(T-t\right)}\left(\gamma\ln w+\boldsymbol{b}^{\top}\left(t\right)\boldsymbol{z}+\boldsymbol{z}^{\top}\boldsymbol{C}\boldsymbol{z}+a\left(t\right)\right)\,.
\end{align*}
In the previous paragraphs, utilising the conclusions of Proposition \ref{prop:riccati_solution_u} and Proposition \ref{prop:ode_b_solution_u}, we have already shown by equation \eqref{eq:a_asymptotic_u} that $a\left(t\right)$ is asymptotically linear as $t$ tends to $-\infty$, therefore,
\begin{align}
\lim_{T\rightarrow\infty}\frac{\ln\left(U^{-1}\left(u\left(t,\:w,\:\boldsymbol{z}\right)\right)\right)}{T-t} 
& =\lim_{t\rightarrow-\infty}\frac{1}{\gamma\left(T-t\right)}\left(\gamma\ln w+\boldsymbol{b}^{\top}\left(t\right)\boldsymbol{z}\;+\boldsymbol{z}^{\top}\boldsymbol{C}\left(t\right)\boldsymbol{z}+a\left(t\right)\right)\label{eq:long_term_growth_u}\\
& =\lim_{t\rightarrow-\infty}\frac{1}{\gamma\left(T-t\right)}a\left(t\right)\,.\nonumber 
\end{align}
Consequently, utilising the result that is demonstrated by equation \eqref{eq:l_bar_u}, the limit in equation \eqref{eq:long_term_growth_u} is
\begin{align*}
\lim_{T\rightarrow\infty}\frac{1}{T-t}\ln\left(U^{-1}\left(u\left(t,\:w,\:\boldsymbol{z}\right)\right)\right)=-\frac{1}{\gamma}\bar{L}\,,
\end{align*}
which demonstrates that the long-term growth rate is a constant.}
\end{proof}

\subsection{Market Neutral Constraint\label{subsec:constrained_model}}

It is common to seek statistical arbitrage strategies that are market neutrality. Market neutrality generally means that the returns of a portfolio are impacted only by the idiosyncratic returns of the stocks contained in the portfolio, and are uncorrelated with the returns of a benchmark or market factors, see \cite{angoshtari2014} and \cite{AvellanedaLee2010}. Hence, under the condition of market neutrality, if we can diversify with a large number of co-integrated stocks, then there is a very high probability that the portfolio can maintain steady growth and low volatility. We will consider a portfolio to be market neutral if $W_t$ is adapted to the filtration generated by $\boldsymbol{Z}_t$. This is the case if $\boldsymbol{\pi}^{\top}\boldsymbol{\beta}=0$, because
\begin{align}
\frac{dW_{t}}{W_{t}} 
& =\sum_{i=1}^{d}\pi_{t}^{i}\frac{dS_{t}^{i}}{S_{t}^{i}}-r\left(1-\sum_{i=1}^d\pi_t^i\right)W_tdt\label{eq:market_neutral}\\
& =\sum_{i=1}^{d}\pi_{t}^{i}\left(dZ_{t}^{i}+\alpha^{i}dt+\sum_{j=1}^m\beta^{ij}\frac{d F_{t}^{j}}{F_{t}^{j}}\right)-r\left(1-\sum_{i=1}^d\pi_t^i\right)W_tdt \nonumber\\
& =\boldsymbol{\pi}_t^\top d\boldsymbol{Z}_t +\boldsymbol{\pi}_t^\top\boldsymbol{\alpha} dt+\underbrace{\boldsymbol{\pi}_t^\top\boldsymbol{\beta}}_{=0}\frac{d \boldsymbol{F}_t}{\boldsymbol{F}_t}+r\left(1-\boldsymbol{\pi}_t^\top\mathbf 1\right)W_tdt \,, \nonumber
\end{align}
where $\boldsymbol{\alpha}=\left[\alpha^{1},\:\alpha^{2},\:\cdots,\:\alpha^{d}\right]^{\top}\in\mathbb R^d$, and $\mathbf 1\in\mathbb R^d$ is a vector of ones. In matrix/vector notation, market neutrality is $\boldsymbol{\pi}^{\top}\boldsymbol{\beta}=0$. 

We now reformulate the optimal portfolio that is studied in Section \ref{subsec:unconstrained_sde} and Section \ref{subsec:unconstrained_hjb}. We shall include an equality constraint $\boldsymbol{\pi}^{\top}\boldsymbol{\beta}=0$ with respect to the market neutrality for the control variable $\boldsymbol{\pi}$. Consequently, we now have a constrained stochastic optimal control problem, the portfolio $W_{t}$ that is given by its solution is called the market-neutral constrained portfolio, and the HJB equation is
\begin{align}
-u_{t} & -\left(\boldsymbol{\theta}-\boldsymbol{z}\right)^{\top}\boldsymbol{\delta}\nabla_{\boldsymbol{z}}u-\frac{1}{2}\mathrm{tr}\left(\mathbf{\Sigma}_{3}\nabla_{\boldsymbol{z}}^{2}u\right)-rwu_{w}\label{eq:constrainedHJB}\\
& -\sup_{\substack{\boldsymbol{\pi}\in\mathbb R^d\\
\boldsymbol{\pi}^{\top}\boldsymbol{\beta}=0}}\left( \boldsymbol{\pi}^{\top}\left(\boldsymbol{\mu}-\boldsymbol{\delta}\boldsymbol{z}\right)wu_{w}+\boldsymbol{\pi}^{\top}\mathbf{\Sigma}_{2}\nabla_{\boldsymbol{z}}\left(\nabla_{w}u\right)w+\frac{1}{2}\boldsymbol{\pi}^{\top}\mathbf{\Sigma}_{1}\boldsymbol{\pi}w^{2}u_{ww}\right) =0\,,\nonumber \\
u\big|_T &=\frac{1}{\gamma}w^{\gamma}\,.\nonumber 
\end{align}
Utilising equation \eqref{eq:ansatz_u}, the transformed HJB equation for the market-neutral constrained problem is,
\begin{align}
-g_{t} & -\left(\boldsymbol{\theta}-\boldsymbol{z}\right)^{\top}\boldsymbol{\delta}\nabla_{\boldsymbol{z}}g-\frac{1}{2}\mathrm{tr}\left(\mathbf{\Sigma}_{3}\nabla_{\boldsymbol{z}}^{2}g\right)-r\gamma g\label{eq:hjb_g_c}\\
& -\inf_{\substack{\boldsymbol{\pi}\in\mathbb R^d\\
\boldsymbol{\pi}^{\top}\boldsymbol{\beta}=0}}\left( \boldsymbol{\pi}^{\top}\left(\boldsymbol{\mu}-\boldsymbol{\delta}\boldsymbol{z}\right)\gamma g+\boldsymbol{\pi}^{\top}\gamma\mathbf{\Sigma}_{2}\nabla_{\boldsymbol{z}}g+\frac{1}{2}\boldsymbol{\pi}^{\top}\mathbf{\Sigma}_{1}\boldsymbol{\pi}g\gamma\left(\gamma-1\right)\right) =0\,,\nonumber \\
g\big|_T & =1\,.\nonumber 
\end{align}
Letting $\boldsymbol{\lambda}\in\mathbb R^{m}$ denote a Lagrange multiplier, we define the Lagrangian function for the market-neutral constrained control problem
\begin{align*}
L\left(\boldsymbol{\pi},\:\boldsymbol{\lambda}\right) =\boldsymbol{\pi}^{\top}\left(\boldsymbol{\mu}-\boldsymbol{\delta}\boldsymbol{z}\right)\gamma g+\boldsymbol{\pi}^{\top}\gamma\mathbf{\Sigma}_{2}\nabla_{\boldsymbol{z}}g+\frac{1}{2}\boldsymbol{\pi}^{\top}\mathbf{\Sigma}_{1}\boldsymbol{\pi}g\gamma\left(\gamma-1\right)-\boldsymbol{\pi}^{\top}\boldsymbol{\beta}\boldsymbol{\lambda}\,.
\end{align*}
By utilising the first order condition $\nabla_{\boldsymbol{\pi}}L=\boldsymbol{0}$, we can get the optimal control,
\begin{align}
\boldsymbol{\pi}^{*} 
& = \frac{1}{g\gamma\left(1-\gamma\right)}\mathbf{\Sigma}_{1}^{-1}\left(g\gamma\boldsymbol{\mu}-g\gamma\boldsymbol{\delta}\boldsymbol{z}+\gamma\mathbf{\Sigma}_{2}\nabla_{\boldsymbol{z}}g-\boldsymbol{\beta}\boldsymbol{\lambda}\right)\,,\label{eq:control_pi_c}\\
& = \frac{1}{1-\gamma}\mathbf{\Sigma}_{1}^{-1}\left(\boldsymbol{\mu}-\boldsymbol{\delta}\boldsymbol{z}+\mathbf{\Sigma}_{2}\nabla_{\boldsymbol{z}}\left(\ln g\right)-\frac{1}{g\gamma}\boldsymbol{\beta}\boldsymbol{\lambda}\right)\,.\nonumber 
\end{align}
We then solve for the Lagrange multiplier $\boldsymbol{\lambda}$ to get its optimal value with respect to the condition of market-neutral constraint $\boldsymbol{\beta}^\top \boldsymbol{\pi}^{*}=0$,
\begin{align}
\boldsymbol{\lambda}^{*} =\gamma g\left(\boldsymbol{\beta}^{\top}\mathbf{\Sigma}_{1}^{-1}\boldsymbol{\beta}\right)^{-1}\left(\boldsymbol{\beta}^{\top}\mathbf{\Sigma}_{1}^{-1}\left(\boldsymbol{\mu}-\boldsymbol{\delta}\boldsymbol{z}+\mathbf{\Sigma}_{2}\nabla_{\boldsymbol{z}}\left(\ln g\right)\right)\right)\,.\label{eq:lagrange_multiplier}
\end{align}
Inserting the optimal $\boldsymbol{\lambda}^{*}$ that is given by formula \eqref{eq:lagrange_multiplier} back into equation \eqref{eq:control_pi_c}, we can get the optimal control variables $\boldsymbol{\pi}^{*}$ with respect to the market-neutral constraint,
\begin{align}
\boldsymbol{\pi}^{*} =\frac{1}{1-\gamma}\left(\mathbf{\Sigma}_{1}^{-1}-\mathbf{\Sigma}_{c}\right)\left(\boldsymbol{\mu}-\boldsymbol{\delta}\boldsymbol{z}+\mathbf{\Sigma}_{2}\nabla_{\boldsymbol{z}}\left(\ln g\right)\right)\,,\label{eq:control_pi2_c}
\end{align}
where $\mathbf{\Sigma}_{c}=\mathbf{\Sigma}_{1}^{-1}\boldsymbol{\beta}\left(\boldsymbol{\beta}^{\top}\mathbf{\Sigma}_{1}^{-1}\boldsymbol{\beta}\right)^{-1}\boldsymbol{\beta}^\top \mathbf{\Sigma}_1^{-1}\in\mathbb{R}^{d\times d}$ is a symmetric matrix. We then insert the optimal control variable $\boldsymbol{\pi}^{*}$ that is given by formula \eqref{eq:control_pi2_c} back into the HJB equation \eqref{eq:hjb_g_c} and get the following non-linear PDE for the market-neutral constrained problem,
\begin{align}
g_{t} & +\left(\boldsymbol{\theta-\boldsymbol{z}}\right)^{\top}\boldsymbol{\delta}\nabla_{\boldsymbol{z}}g+\frac{1}{2}\mathrm{tr}\left(\mathbf{\Sigma}_{3}\nabla_{\boldsymbol{z}}^{2}g\right)+r\gamma g\label{eq:pde_g_c}\\
& +\frac{\gamma g}{2\left(1-\gamma\right)}\left(\boldsymbol{\mu}-\boldsymbol{\delta}\boldsymbol{z}+\mathbf{\Sigma}_{2}\nabla_{\boldsymbol{z}}\left(\ln g\right)\right)^{\top}\left(\mathbf{\Sigma}_{1}^{-1}-\mathbf{\Sigma}_{c}\right)\left(\boldsymbol{\mu}-\boldsymbol{\delta}\boldsymbol{z}+\mathbf{\Sigma}_{2}\nabla_{\boldsymbol{z}}\left(\ln g\right)\right)=0\,,\nonumber \\
g\big|_T & =1\,.\nonumber 
\end{align}
Corresponding to the unconstrained stochastic optimal control problem of Section \ref{subsec:unconstrained_hjb}, we approach solving PDE \eqref{eq:pde_g_c} by utilising the exponential ansatz \eqref{eq:ansztz_g}, from which the optimal portfolio can be expressed as
\begin{align}
\label{eq:control_pi2_c2}
\boldsymbol{\pi}^{*} =\frac{1}{1-\gamma}\left(\mathbf{\Sigma}_{1}^{-1}-\mathbf{\Sigma}_{c}\right)\left(\boldsymbol{\mu}-\boldsymbol{\delta}\boldsymbol{z}\right)+\frac{1}{1-\gamma}\left(\mathbf{\Sigma}_{1}^{-1}-\mathbf{\Sigma}_{c}\right)\left(\mathbf{\Sigma}_{2}\left(2\boldsymbol{C}\left(t\right)\boldsymbol{z}+\boldsymbol{b}\left(t\right)\right)\right)\,,
\end{align}
where $a\left(t\right)$, $\boldsymbol{b}\left(t\right)$, and $\boldsymbol{C}\left(t\right)$ are the solutions of the system of ODEs that are given by Proposition \ref{prop:ode_system_c}. Correspondingly, the myopic control for the market-neutral constrained problem is
\begin{align}
\boldsymbol{\pi}_{m}^{*}=\frac{1}{1-\gamma}\left(\mathbf{\Sigma}_{1}^{-1}-\mathbf{\Sigma}_{c}\right)\left(\boldsymbol{\mu}+\boldsymbol{\delta}\boldsymbol{z}\right)\,.\label{eq:myopic_control_c}
\end{align}

\begin{remark}
The myopic control in equation \eqref{eq:myopic_control_c} satisfies the market neutrality condition that is given by equation \eqref{eq:market_neutral}, which is $\boldsymbol{\beta}^\top \boldsymbol{\pi}_m^*=0$. This will be shown in the proof of Proposition \ref{prop:riccati_q_c} as a consequence of proving $\boldsymbol{\beta}^\top\left(\mathbf{\Sigma}_{1}^{-1}-\mathbf{\Sigma}_{c}\right)=0$.
\end{remark}

\begin{proposition}
\label{prop:ode_system_c}
The PDE \eqref{eq:pde_g_c} for the constrained stochastic optimal control problem \eqref{eq:constrainedHJB} is solved by utilising the exponential ansatz \eqref{eq:ansztz_g}, where for any $t\leq T$ the functions $a\left(t\right)\in\mathbb{R}$, $\boldsymbol{b}\left(t\right)\in\mathbb{R}^{d}$, and $\boldsymbol{C}\left(t\right)\in\mathbb{R}^{d\times d}$ satisfy the following system of ODEs:
\begin{align}
\frac{d a\left(t\right)}{d t} 
& =-\boldsymbol{b}^{\top}\left(t\right)\left(\frac{\gamma}{2\left(1-\gamma\right)}\mathbf{\Sigma}_{2}^{\top}\left(\mathbf{\Sigma}_{1}^{-1}-\mathbf{\Sigma}_{c}\right)\mathbf{\Sigma}_{2}+\frac{1}{2}\mathbf{\Sigma}_{3}\right)\boldsymbol{b}\left(t\right)\label{eq:ode_a_c}\\
& \quad\;-\frac{\gamma}{2\left(1-\gamma\right)}\boldsymbol{b}^{\top}\left(t\right)\mathbf{\Sigma}_{2}^{\top}\left(\mathbf{\Sigma}_{1}^{-1}-\mathbf{\Sigma}_{c}\right)\boldsymbol{\mu}-\frac{\gamma}{2\left(1-\gamma\right)}\boldsymbol{\mu}^{\top}\left(\mathbf{\Sigma}_{1}^{-1}-\mathbf{\Sigma}_{c}\right)\mathbf{\Sigma}_{2}\boldsymbol{b}\left(t\right)-\boldsymbol{\theta}^{\top}\boldsymbol{\delta}\boldsymbol{b}\left(t\right)\nonumber \\
& \quad\;-\frac{\gamma}{2\left(1-\gamma\right)}\boldsymbol{\mu}^{\top}\left(\mathbf{\Sigma}_{1}^{-1}-\mathbf{\Sigma}_{c}\right)\boldsymbol{\mu}-\mathrm{tr}\left[\mathbf{\Sigma}_{3}\boldsymbol{C}\left(t\right)\right]-r\gamma\,,\nonumber \\
a\left(T\right) & =0\,;\nonumber 
\end{align}
\begin{align}
\frac{d\boldsymbol{b}\left(t\right)}{d t} 
& =-\boldsymbol{C}\left(t\right)\left(\frac{2\gamma}{1-\gamma}\mathbf{\Sigma}_{2}^{\top}\left(\mathbf{\Sigma}_{1}^{-1}-\mathbf{\Sigma}_{c}\right)\mathbf{\Sigma}_{2}+2\mathbf{\Sigma}_{3}\right)\boldsymbol{b}\left(t\right)\label{eq:ode_b_c}\\
& \quad\;+\left(\frac{\gamma}{1-\gamma}\boldsymbol{\delta}\left(\mathbf{\Sigma}_{1}^{-1}-\mathbf{\Sigma}_{c}\right)\mathbf{\Sigma}_{2}+\boldsymbol{\delta}\right)\boldsymbol{b}\left(t\right)-\boldsymbol{C}\left(t\right)\left(\frac{2\gamma}{1-\gamma}\mathbf{\Sigma}_{2}^{\top}\left(\mathbf{\Sigma}_{1}^{-1}-\mathbf{\Sigma}_{c}\right)\boldsymbol{\mu}+2\boldsymbol{\delta}\boldsymbol{\theta}\right)\nonumber \\
& \quad\;+\frac{\gamma}{1-\gamma}\boldsymbol{\delta}\left(\mathbf{\Sigma}_{1}^{-1}-\mathbf{\Sigma}_{c}\right)\boldsymbol{\mu}\,,\nonumber \\
\boldsymbol{b}\left(T\right) & =0\,;\nonumber 
\end{align}
\begin{align}
\frac{d\boldsymbol{C}\left(t\right)}{d t} 
& =-\boldsymbol{C}\left(t\right)\left(\frac{2\gamma}{1-\gamma}\mathbf{\Sigma}_{2}^{\top}\left(\mathbf{\Sigma}_{1}^{-1}-\mathbf{\Sigma}_{c}\right)\mathbf{\Sigma}_{2}+2\mathbf{\Sigma}_{3}\right)\boldsymbol{C}\left(t\right)\label{eq:ode_c_c}\\
& \quad\;+\boldsymbol{C}\left(t\right)\left(\frac{\gamma}{1-\gamma}\mathbf{\Sigma}_{2}^{\top}\left(\mathbf{\Sigma}_{1}^{-1}-\mathbf{\Sigma}_{c}\right)\boldsymbol{\delta}+\boldsymbol{\delta}\right)+\left(\frac{\gamma}{1-\gamma}\mathbf{\Sigma}_{2}^{\top}\left(\mathbf{\Sigma}_{1}^{-1}-\mathbf{\Sigma}_{c}\right)\boldsymbol{\delta}+\boldsymbol{\delta}\right)^{\top}\boldsymbol{C}\left(t\right)\nonumber \\
& \quad\;-\frac{\gamma}{2\left(1-\gamma\right)}\boldsymbol{\delta}\left(\mathbf{\Sigma}_{1}^{-1}-\mathbf{\Sigma}_{c}\right)\boldsymbol{\delta}\,,\nonumber \\
\boldsymbol{C}\left(T\right) & =0\,.\nonumber 
\end{align}
\end{proposition}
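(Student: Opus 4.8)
The plan is to proceed exactly as in the proof of Proposition~\ref{prop:pde_solution_u}, by inserting the exponential ansatz~\eqref{eq:ansztz_g} into PDE~\eqref{eq:pde_g_c} and matching coefficients in powers of $\boldsymbol{z}$. The key structural observation is that PDE~\eqref{eq:pde_g_c} is obtained from PDE~\eqref{eq:pde_g_u} by the single substitution $\mathbf{\Sigma}_{1}^{-1}\mapsto\mathbf{\Sigma}_{1}^{-1}-\mathbf{\Sigma}_{c}$; every other term ($g_{t}$, the $\boldsymbol{\delta}$-drift term, the $\mathrm{tr}(\mathbf{\Sigma}_{3}\nabla_{\boldsymbol{z}}^{2}g)$ term, and $r\gamma g$) is literally unchanged. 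Since $\mathbf{\Sigma}_{1}^{-1}-\mathbf{\Sigma}_{c}$ is symmetric (as recorded after~\eqref{eq:control_pi2_c}), all the algebra of the unconstrained derivation—which only used symmetry of $\mathbf{\Sigma}_{1}^{-1}$, $\boldsymbol{\delta}$, and $\boldsymbol{C}(t)$—carries over verbatim, so that ODEs~\eqref{eq:ode_a_u}, \eqref{eq:ode_b_u}, \eqref{eq:ode_c_u} turn into ODEs~\eqref{eq:ode_a_c}, \eqref{eq:ode_b_c}, \eqref{eq:ode_c_c}.

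Concretely, I would first record the derivatives of the ansatz: with $\nabla_{\boldsymbol{z}}(\ln g)=\boldsymbol{b}(t)+2\boldsymbol{C}(t)\boldsymbol{z}$, one has $g_{t}=g\big(\dot a(t)+\dot{\boldsymbol{b}}^{\top}(t)\boldsymbol{z}+\boldsymbol{z}^{\top}\dot{\boldsymbol{C}}(t)\boldsymbol{z}\big)$, $\nabla_{\boldsymbol{z}}g=g\,(\boldsymbol{b}(t)+2\boldsymbol{C}(t)\boldsymbol{z})$, and $\nabla_{\boldsymbol{z}}^{2}g=g\big[(\boldsymbol{b}(t)+2\boldsymbol{C}(t)\boldsymbol{z})(\boldsymbol{b}(t)+2\boldsymbol{C}(t)\boldsymbol{z})^{\top}+2\boldsymbol{C}(t)\big]$. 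Substituting into~\eqref{eq:pde_g_c} and dividing by the strictly positive factor $g$ reduces the PDE to a polynomial identity in $\boldsymbol{z}$ with $t$-dependent coefficients. Expanding the quadratic form
\begin{align*}
\big(\boldsymbol{\mu}-\boldsymbol{\delta}\boldsymbol{z}+\mathbf{\Sigma}_{2}(\boldsymbol{b}(t)+2\boldsymbol{C}(t)\boldsymbol{z})\big)^{\top}\left(\mathbf{\Sigma}_{1}^{-1}-\mathbf{\Sigma}_{c}\right)\big(\boldsymbol{\mu}-\boldsymbol{\delta}\boldsymbol{z}+\mathbf{\Sigma}_{2}(\boldsymbol{b}(t)+2\boldsymbol{C}(t)\boldsymbol{z})\big)
\end{align*}
and combining with the remaining terms, I would split the identity into its part quadratic in $\boldsymbol{z}$, its part linear in $\boldsymbol{z}$, and its constant part. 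Because both the left and right sides of the quadratic part are symmetric bilinear forms (using symmetry of $\boldsymbol{C}(t)$, $\dot{\boldsymbol{C}}(t)$, $\boldsymbol{\delta}$, and $\mathbf{\Sigma}_{1}^{-1}-\mathbf{\Sigma}_{c}$), matching the quadratic part yields the symmetrised Riccati identity~\eqref{eq:ode_c_c}; matching the linear part yields~\eqref{eq:ode_b_c}; matching the constant part yields~\eqref{eq:ode_a_c}. The terminal conditions $a(T)=0$, $\boldsymbol{b}(T)=\boldsymbol{0}$, $\boldsymbol{C}(T)=\boldsymbol{0}$ are forced by $g|_{T}=1$ together with uniqueness of coefficients of an identically-vanishing polynomial in $\boldsymbol{z}$.

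The only delicate point is the bookkeeping in the symmetrisation: since $\mathbf{\Sigma}_{2}$ is not symmetric, a term such as $\boldsymbol{C}(t)\mathbf{\Sigma}_{2}^{\top}(\mathbf{\Sigma}_{1}^{-1}-\mathbf{\Sigma}_{c})\boldsymbol{\delta}$ arising in the $\boldsymbol{z}$-quadratic and $\boldsymbol{z}$-linear coefficients must be paired with its transpose when one reads off the equation for the symmetric matrix $\boldsymbol{C}(t)$—this is precisely what produces the two symmetric summands in~\eqref{eq:ode_c_c}—and similarly the cross terms $\boldsymbol{\mu}^{\top}(\mathbf{\Sigma}_{1}^{-1}-\mathbf{\Sigma}_{c})\mathbf{\Sigma}_{2}\boldsymbol{b}(t)$ and $\boldsymbol{b}^{\top}(t)\mathbf{\Sigma}_{2}^{\top}(\mathbf{\Sigma}_{1}^{-1}-\mathbf{\Sigma}_{c})\boldsymbol{\mu}$ in~\eqref{eq:ode_a_c} are equal scalars that add. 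Apart from this routine coefficient matching, which is structurally identical to the proof of Proposition~\ref{prop:pde_solution_u}, there is nothing new, and optimality of the resulting $\boldsymbol{\pi}^{*}$ in~\eqref{eq:control_pi2_c2} is verified along the same lines indicated in Remark~\ref{prop:verification_theorem_u}.
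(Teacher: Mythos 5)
Your proposal is correct and follows essentially the same route as the paper, which simply inserts the exponential ansatz \eqref{eq:ansztz_g} into PDE \eqref{eq:pde_g_c} and matches coefficients in powers of $\boldsymbol{z}$, exactly as in the proof of Proposition \ref{prop:pde_solution_u}. Your observation that the constrained PDE is the unconstrained one under the substitution $\mathbf{\Sigma}_{1}^{-1}\mapsto\mathbf{\Sigma}_{1}^{-1}-\mathbf{\Sigma}_{c}$, with symmetry of $\mathbf{\Sigma}_{1}^{-1}-\mathbf{\Sigma}_{c}$ guaranteeing the algebra carries over, is precisely the justification the paper leaves implicit.
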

\begin{proof}
\textit{The proof is similar to that for Proposition \ref{prop:pde_solution_u}.}
\end{proof}
\begin{remark}[Verification]
Utilising the solution of the HJB equation \eqref{eq:hjb_g_c} that is obtained from the exponential ansatz \eqref{eq:ansztz_g}, the optimal control $\left(\boldsymbol{\pi}_t^{*}\right)_{t\leq T}$ given by formula \eqref{eq:control_pi_c} belongs to the set of admissible controls $\mathcal{A}$ in equation \eqref{eq:admissible_control}, and maximises the expected utility function defined by equation \eqref{eq:valueFunction} subject to the market-neutral constraint $\boldsymbol{\pi}^{\top}\boldsymbol{\beta}=0$. The form of the model for this constrained stochastic optimal control problem fits into the framework set forth in \citet{DavisLleo2008} and \citet{DavisLleo2014}, and hence the same argument for verification applies here.
\end{remark}
Corresponding to the unconstrained stochastic optimal control problem of Sections \ref{subsec:unconstrained_hjb} and Section \ref{subsec:unconstrained_stability}, we perform the stability analyses for the solutions of the system of ODEs given by equation \eqref{eq:ode_a_c}, equation \eqref{eq:ode_b_c}, and equation \eqref{eq:ode_c_c}. We first study the behaviour of the solution to the matrix Riccati equation for $\boldsymbol{C}\left(t\right)$ of the constrained problem, which is expressed by equation \eqref{eq:ode_c_c}. We rewrite this matrix Riccati equation in the following way
\begin{align}
\frac{d\boldsymbol{C}\left(t\right)}{d t} 
& =-\boldsymbol{C}\left(t\right)\mathbf{Q}_{c}\boldsymbol{C}\left(t\right)-\boldsymbol{C}\left(t\right)\mathbf{A}_{c}-\mathbf{A}_{c}^{\top}\boldsymbol{C}\left(t\right)-\mathbf{P}_{c}\,,\label{eq:riccati_c}\\
\boldsymbol{C}\left(T\right) 
& =\boldsymbol{0}\,,\nonumber 
\end{align}
where 
\begin{align}
\mathbf{Q}_{c} 
& =\frac{2\gamma}{1-\gamma}\mathbf{\Sigma}_{2}^{\top}\left(\mathbf{\Sigma}_{1}^{-1}-\mathbf{\Sigma}_{c}\right)\mathbf{\Sigma}_{2}+2\mathbf{\Sigma}_{3}\,,\nonumber\\
\mathbf{A}_{c} 
& =-\frac{\gamma}{1-\gamma}\mathbf{\Sigma}_{2}^{\top}\left(\mathbf{\Sigma}_{1}^{-1}-\mathbf{\Sigma}_{c}\right)\boldsymbol{\delta}-\boldsymbol{\delta}\,,\nonumber \\
\mathbf{P}_{c} 
& =\frac{\gamma}{2\left(1-\gamma\right)}\boldsymbol{\delta}\left(\mathbf{\Sigma}_{1}^{-1}-\mathbf{\Sigma}_{c}\right)\boldsymbol{\delta}\,.\nonumber 
\end{align}

Similar to the unconstrained problem, we prove that the coefficient matrix $\mathbf{Q}_{c}$ for the quadratic term of the matrix Riccati equation \eqref{eq:riccati_c} for the constrained problem is symmetric positive definite.

\begin{proposition}
\label{prop:riccati_q_c}For $\gamma<0$, in the matrix Riccati equation \eqref{eq:riccati_c} for the constrained stochastic optimal control problem, the coefficient matrix $\mathbf{Q}_{c}$ of its quadratic term is symmetric positive definite, and the coefficient matrix $\mathbf{P}_{c}$ of its constant term is symmetric negative semi-definite.
\end{proposition}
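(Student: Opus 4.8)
The plan is to mirror the proof of Proposition~\ref{prop:riccati_quadractic_term_u}, exploiting that $\mathbf{Q}_{c}$ and $\mathbf{P}_{c}$ are obtained from $\mathbf{Q}_{u}$ and $\mathbf{P}_{u}$ simply by replacing $\mathbf{\Sigma}_{1}^{-1}$ with $\mathbf{\Sigma}_{1}^{-1}-\mathbf{\Sigma}_{c}$. The first step is to record the needed properties of the correction matrix $\mathbf{\Sigma}_{c}=\mathbf{\Sigma}_{1}^{-1}\boldsymbol{\beta}\left(\boldsymbol{\beta}^{\top}\mathbf{\Sigma}_{1}^{-1}\boldsymbol{\beta}\right)^{-1}\boldsymbol{\beta}^{\top}\mathbf{\Sigma}_{1}^{-1}$. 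It is clearly symmetric, and for any $\boldsymbol{v}\in\mathbb{R}^{d}$ one has $\boldsymbol{v}^{\top}\mathbf{\Sigma}_{c}\boldsymbol{v}=\boldsymbol{w}^{\top}\left(\boldsymbol{\beta}^{\top}\mathbf{\Sigma}_{1}^{-1}\boldsymbol{\beta}\right)^{-1}\boldsymbol{w}\geq 0$ with $\boldsymbol{w}=\boldsymbol{\beta}^{\top}\mathbf{\Sigma}_{1}^{-1}\boldsymbol{v}$, because the inverse of the symmetric positive definite matrix $\boldsymbol{\beta}^{\top}\mathbf{\Sigma}_{1}^{-1}\boldsymbol{\beta}$ is again symmetric positive definite; hence $\mathbf{\Sigma}_{c}$ is symmetric positive semi-definite. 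A standard computation---completing the square in $\boldsymbol{c}\mapsto\left(\boldsymbol{v}-\boldsymbol{\beta}\boldsymbol{c}\right)^{\top}\mathbf{\Sigma}_{1}^{-1}\left(\boldsymbol{v}-\boldsymbol{\beta}\boldsymbol{c}\right)$, or whitening by $\mathbf{\Sigma}_{1}^{-1/2}$ so that $\mathbf{\Sigma}_{c}$ becomes an orthogonal projection---shows $\boldsymbol{v}^{\top}\left(\mathbf{\Sigma}_{1}^{-1}-\mathbf{\Sigma}_{c}\right)\boldsymbol{v}\geq 0$, so $\mathbf{\Sigma}_{1}^{-1}-\mathbf{\Sigma}_{c}$ is symmetric positive semi-definite, with null space exactly the column space of $\boldsymbol{\beta}$; in particular it fails to be positive definite when $m\geq 1$. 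Along the way I would also note the identity $\boldsymbol{\beta}^{\top}\mathbf{\Sigma}_{c}=\boldsymbol{\beta}^{\top}\mathbf{\Sigma}_{1}^{-1}$, hence $\boldsymbol{\beta}^{\top}\left(\mathbf{\Sigma}_{1}^{-1}-\mathbf{\Sigma}_{c}\right)=\mathbf{0}$, which is precisely what gives $\boldsymbol{\beta}^{\top}\boldsymbol{\pi}_{m}^{*}=0$ as promised in the remark preceding the proposition.

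For $\mathbf{Q}_{c}$, the key step is the decomposition $\mathbf{Q}_{c}=\mathbf{Q}_{u}+\left(-\frac{2\gamma}{1-\gamma}\right)\mathbf{\Sigma}_{2}^{\top}\mathbf{\Sigma}_{c}\mathbf{\Sigma}_{2}$, which follows by linearity directly from the definitions of $\mathbf{Q}_{c}$ and $\mathbf{Q}_{u}$. The matrix $\mathbf{\Sigma}_{2}^{\top}\mathbf{\Sigma}_{c}\mathbf{\Sigma}_{2}$ is symmetric (because $\mathbf{\Sigma}_{c}$ is) and positive semi-definite (its quadratic form at $\boldsymbol{v}$ equals that of $\mathbf{\Sigma}_{c}$ at $\mathbf{\Sigma}_{2}\boldsymbol{v}$), and for $\gamma<0$ the scalar $-\frac{2\gamma}{1-\gamma}$ is strictly positive. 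Proposition~\ref{prop:riccati_quadractic_term_u} already supplies that $\mathbf{Q}_{u}$ is symmetric positive definite for $\gamma<0$. Since adding a symmetric positive semi-definite matrix to a symmetric positive definite one leaves it symmetric positive definite, we get $\mathbf{Q}_{c}\succeq\mathbf{Q}_{u}\succ 0$, which is the first claim.

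For $\mathbf{P}_{c}=\frac{\gamma}{2\left(1-\gamma\right)}\boldsymbol{\delta}\left(\mathbf{\Sigma}_{1}^{-1}-\mathbf{\Sigma}_{c}\right)\boldsymbol{\delta}$, since $\boldsymbol{\delta}$ is diagonal and hence symmetric, the matrix $\boldsymbol{\delta}\left(\mathbf{\Sigma}_{1}^{-1}-\mathbf{\Sigma}_{c}\right)\boldsymbol{\delta}=\boldsymbol{\delta}^{\top}\left(\mathbf{\Sigma}_{1}^{-1}-\mathbf{\Sigma}_{c}\right)\boldsymbol{\delta}$ is symmetric positive semi-definite by the first step; multiplying by $\frac{\gamma}{2\left(1-\gamma\right)}$, which is negative for $\gamma<0$, shows $\mathbf{P}_{c}$ is symmetric negative semi-definite. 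Unlike $\mathbf{P}_{u}$ in Proposition~\ref{prop:riccati_quadractic_term_u}, it cannot be upgraded to negative definite, precisely because $\mathbf{\Sigma}_{1}^{-1}-\mathbf{\Sigma}_{c}$ is only positive semi-definite---it annihilates the column space of $\boldsymbol{\beta}$---which is the structural fingerprint of the market-neutral constraint. The only thing requiring care is therefore not a lengthy computation but the realisation that one should not attempt to prove $\mathbf{Q}_{c}\succ 0$ directly from $\mathbf{\Sigma}_{1}^{-1}-\mathbf{\Sigma}_{c}\succeq 0$ together with $\mathbf{\Sigma}_{3}\succ 0$, since in $\mathbf{Q}_{c}$ the matrix $\mathbf{\Sigma}_{1}^{-1}-\mathbf{\Sigma}_{c}$ is multiplied by the sign-indefinite coefficient $\frac{2\gamma}{1-\gamma}$; the comparison $\mathbf{Q}_{c}\succeq\mathbf{Q}_{u}$, bootstrapping off the already-settled unconstrained case, is what circumvents this. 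A purely computational alternative would be to repeat the algebraic reduction in the proof of Proposition~\ref{prop:riccati_quadractic_term_u} verbatim with $\mathbf{\Sigma}_{1}^{-1}$ replaced by $\mathbf{\Sigma}_{1}^{-1}-\mathbf{\Sigma}_{c}$ throughout, but that is strictly more work.
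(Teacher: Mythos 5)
Your proposal is correct and follows essentially the same route as the paper: the paper likewise writes $\mathbf{Q}_{c}=\mathbf{Q}_{u}-\frac{2\gamma}{1-\gamma}\mathbf{\Sigma}_{2}^{\top}\mathbf{\Sigma}_{1}^{-1}\boldsymbol{\beta}\left(\boldsymbol{\beta}^{\top}\mathbf{\Sigma}_{1}^{-1}\boldsymbol{\beta}\right)^{-1}\boldsymbol{\beta}^{\top}\mathbf{\Sigma}_{1}^{-1}\mathbf{\Sigma}_{2}$ and bootstraps positive definiteness off Proposition \ref{prop:riccati_quadractic_term_u}, and it proves $\mathbf{\Sigma}_{1}^{-1}-\mathbf{\Sigma}_{c}\succeq 0$ by decomposing $\boldsymbol{x}=\boldsymbol{\beta}\boldsymbol{v}+\tilde{\boldsymbol{v}}$ with $\tilde{\boldsymbol{v}}^{\top}\mathbf{\Sigma}_{1}^{-1}\boldsymbol{\beta}=0$, which is equivalent to your completing-the-square/projection argument. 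Your observations about the null space being the column space of $\boldsymbol{\beta}$ and the identity $\boldsymbol{\beta}^{\top}\left(\mathbf{\Sigma}_{1}^{-1}-\mathbf{\Sigma}_{c}\right)=0$ also match what the paper records and uses later.
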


\begin{proof}
\textit{Proposition \ref{prop:riccati_quadractic_term_u} has shown that matrix $\mathbf{Q}_{u}$ of equation \eqref{eq:riccati_u} for the unconstrained stochastic optimal control problem is symmetric positive definite. By utilising the formulae of equation \eqref{eq:notations}, $\mathbf{Q}_{c}$ has the following decomposition,
\begin{align}
\mathbf{Q}_{c} 
& =2\mathbf{\Sigma}_{3}+\frac{2\gamma}{1-\gamma}\mathbf{\Sigma}_{2}^{\top}\left(\mathbf{\Sigma}_{1}^{-1}-\mathbf{\Sigma}_{c}\right)\mathbf{\Sigma}_{2}\nonumber\\
& =2\mathbf{\Sigma}_{3}+\frac{2\gamma}{1-\gamma}\mathbf{\Sigma}_{2}^{\top}\left(\mathbf{\Sigma}_{1}^{-1}-\mathbf{\Sigma}_{1}^{-1}\boldsymbol{\beta}\left(\boldsymbol{\beta}^{\top}\mathbf{\Sigma}_{1}^{-1}\boldsymbol{\beta}\right)^{-1}\boldsymbol{\beta}^{\top}\mathbf{\Sigma}_{1}^{-1}\right)\mathbf{\Sigma}_{2}\nonumber \\
& =\mathbf{Q}_{u}-\frac{2\gamma}{1-\gamma}\mathbf{\Sigma}_{2}^{\top}\mathbf{\Sigma}_{1}^{-1}\boldsymbol{\beta}\left(\boldsymbol{\beta}^{\top}\mathbf{\Sigma}_{1}^{-1}\boldsymbol{\beta}\right)^{-1}\boldsymbol{\beta}^{\top}\mathbf{\Sigma}_{1}^{-1}\mathbf{\Sigma}_{2}\,.\nonumber
\end{align}
Because $\mathbf{\Sigma}_{1}$ is symmetric positive definite, so its inverse $\mathbf{\Sigma}_{1}^{-1}$ is symmetric positive definite as well. Hence $\left(\boldsymbol{\beta}^{\top}\mathbf{\Sigma}_{1}^{-1}\boldsymbol{\beta}\right)^{-1}>0$. Therefore, for $\gamma<0$, $\mathbf{Q}_{c}$ is symmetric positive definite.}

\textit{In order to prove matrix $\mathbf{P}_{c}$ is symmetric negative semi-definite, we need to examine the symmetric matrix $\mathbf{\Sigma}_{1}^{-1}-\mathbf{\Sigma}_{c}$. We observe that for any $\boldsymbol{x}\in\mathbb R^d$, we can decompose it as $\boldsymbol{x} = \boldsymbol{\beta}\boldsymbol{v} + \tilde{\boldsymbol{v}}$, where $\tilde{\boldsymbol{v}}^\top \mathbf{\Sigma}_1^{-1}\boldsymbol{\beta} =0$, and then
\begin{align*}
\boldsymbol{x}^{\top}\left(\mathbf{\Sigma}_{1}^{-1}-\mathbf{\Sigma}_{c}\right)\boldsymbol{x} 
& =\left(\boldsymbol{\beta}\boldsymbol{v}+\tilde{\boldsymbol{v}}\right)^{\top}\mathbf{\Sigma}_{1}^{-1}\left(\boldsymbol{\beta}\boldsymbol{v}+\tilde{\boldsymbol{v}}\right)\\
& \quad\;-\left(\boldsymbol{\beta}\boldsymbol{v}+\tilde{\boldsymbol{v}}\right)^{\top}\mathbf{\Sigma}_{1}^{-1}\boldsymbol{\beta}\left(\boldsymbol{\beta}^{\top}\mathbf{\Sigma}_{1}^{-1}\boldsymbol{\beta}\right)^{-1}\boldsymbol{\beta}^{\top}\mathbf{\Sigma}_{1}^{-1}\left(\boldsymbol{\beta}\boldsymbol{v}+\tilde{\boldsymbol{v}}\right)\\
& =\tilde{\boldsymbol{v}}^{\top}\mathbf{\Sigma}_{1}^{-1}\tilde{\boldsymbol{v}}\\
& \geq0\,.
\end{align*}
where the equality holds if and only if $\tilde{\boldsymbol{v}} = 0$. Hence, $\mathbf{\Sigma}_{1}^{-1}-\mathbf{\Sigma}_{c}$ is symmetric positive semi-definite. Consequently, $\boldsymbol{\delta}\left(\mathbf{\Sigma}_{1}^{-1}-\mathbf{\Sigma}_{c}\right)\boldsymbol{\delta}$ is a symmetric positive semi-definite matrix with the columns of $\boldsymbol{\delta}^{^{-1}}\boldsymbol{\beta}$ spanning its null space. Therefore, $-\mathbf{P}_{c}$ is symmetric positive semi-definite also with null-space spanned by $\boldsymbol{\delta}^{^{-1}}\boldsymbol{\beta}$.}
\end{proof}

Corresponding to Proposition \ref{prop:riccati_solution_u}, because $\mathbf{Q}_{c}$ is symmetric positive definite, $-\mathbf{P}_{c}$ is symmetric positive semi-definite, therefore Theorem 2.1 in \citet{Wonham1968} applies directly, consequently the solution to matrix Riccati equation \eqref{eq:ode_c_c} exists, is bounded, and is unique.

\begin{proposition}
\label{prop:riccati_solution_c}
For $\gamma<0$, if both matrices $\boldsymbol{\delta}-\boldsymbol{\beta}\left(\boldsymbol{\beta}^\top\boldsymbol{\beta}\right)^{-1}\boldsymbol{\beta}^\top\boldsymbol{\delta}$ and $\boldsymbol{\beta}^\top\boldsymbol{\beta}$ are full-rank, 
\begin{align}
\label{eq:beta_ode_c}
\mathrm{rank}\left(\boldsymbol{\delta}-\boldsymbol{\beta}\left(\boldsymbol{\beta}^\top\boldsymbol{\beta}\right)^{-1}\boldsymbol{\beta}^\top\boldsymbol{\delta}\right)&=d\,,\\
\nonumber
\mathrm{rank}\left(\boldsymbol{\beta}^\top\boldsymbol{\beta}\right)&=m\,,
\end{align}
then the coefficient matrix $\mathbf{Q}_{c}$ of the quadratic term in the matrix Riccati equation \eqref{eq:riccati_c} for the constrained stochastic optimal control problem is symmetric positive definite, and $-\boldsymbol{P}_{c}$ is symmetric positive semi-definite. Hence, there are matrices $\mathbf{B}_{c}$, $\mathbf{E}_{c}$, and $\mathbf{N}_{c}$ such that $\mathbf{Q}_{c}=\mathbf{B}_{c}\mathbf{N}_{c}^{-1}\mathbf{B}_{c}^{\top}$ and $-\mathbf{P}_{c}=\mathbf{E}_{c}^{\top}\mathbf{E}_{c}$, with the pair $\left(\mathbf{A}_{c},\:\mathbf{B}_{c}\right)$ being stabilisable, and the pair $\left(\mathbf{E}_{c},\:\mathbf{A}_{c}\right)$ being observable. Consequently, there is an unique solution $\boldsymbol{C}\left(t\right)$ to matrix Riccati equation \eqref{eq:riccati_c} that is negative semi-definite and bounded on $\left(-\infty,\:T\right]$, and there exists a unique limit $\bar{\mathbf{C}}=\lim_{t\rightarrow-\infty}\boldsymbol{C}\left(t\right)$.
\end{proposition}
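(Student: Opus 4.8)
The plan is to reduce Proposition~\ref{prop:riccati_solution_c} to Theorem~2.1 of \citet{Wonham1968} in exactly the same manner as Proposition~\ref{prop:riccati_solution_u}, so the two factorisations $\mathbf{Q}_{c}=\mathbf{B}_{c}\mathbf{N}_{c}^{-1}\mathbf{B}_{c}^{\top}$ and $-\mathbf{P}_{c}=\mathbf{E}_{c}^{\top}\mathbf{E}_{c}$ are obtained from spectral decompositions, and then the observability of $(\mathbf{E}_{c},\mathbf{A}_{c})$ and the stabilisability of $(\mathbf{A}_{c},\mathbf{B}_{c})$ are verified. First I would invoke Proposition~\ref{prop:riccati_q_c}, which already establishes that $\mathbf{Q}_{c}$ is symmetric positive definite and $-\mathbf{P}_{c}$ is symmetric positive semi-definite, with $\mathrm{null}(-\mathbf{P}_{c})=\mathrm{span}\,\boldsymbol{\delta}^{-1}\boldsymbol{\beta}$. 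Because $\mathbf{Q}_{c}\succ 0$, it has a spectral decomposition $\mathbf{Q}_{c}=\mathbf{B}_{c}\mathbf{N}_{c}^{-1}\mathbf{B}_{c}^{\top}$ with $\mathbf{B}_{c}$ orthogonal (hence invertible) and $\mathbf{N}_{c}^{-1}=\mathrm{diag}(\boldsymbol{\lambda}_{Q_{c}})\succ 0$; invertibility of $\mathbf{B}_{c}$ immediately gives stabilisability of $(\mathbf{A}_{c},\mathbf{B}_{c})$, since one can solve $\mathbf{A}_{c}-\mathbf{B}_{c}\mathbf{M}_{c}=-\mathbf{I}$ for $\mathbf{M}_{c}$. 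This part is routine and mirrors the unconstrained case verbatim.

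The genuine difficulty is observability of $(\mathbf{E}_{c},\mathbf{A}_{c})$, equivalently controllability of $(\mathbf{A}_{c}^{\top},\mathbf{E}_{c}^{\top})$. In the unconstrained proof, $-\mathbf{P}_{u}$ was positive \emph{definite}, so $\mathbf{E}_{u}$ was invertible and the controllability matrix trivially had full rank. Here $-\mathbf{P}_{c}$ is only positive \emph{semi}-definite with an $m$-dimensional kernel, so $\mathbf{E}_{c}\in\mathbb{R}^{(d-m)\times d}$ (taking $\mathbf{E}_{c}=(\mathbf{O}_{c}\sqrt{\mathbf{D}_{c}})^{\top}$ restricted to the positive eigenspace), and controllability of $(\mathbf{A}_{c}^{\top},\mathbf{E}_{c}^{\top})$ is a real condition rather than an automatic one. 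This is precisely where the two rank hypotheses \eqref{eq:beta_ode_c} enter. The plan is to use the Popov–Belevitch–Hautus test: $(\mathbf{E}_{c},\mathbf{A}_{c})$ is observable iff there is no right eigenvector $\boldsymbol{v}$ of $\mathbf{A}_{c}$ with $\mathbf{E}_{c}\boldsymbol{v}=0$, i.e. no eigenvector of $\mathbf{A}_{c}$ lying in $\mathrm{null}(-\mathbf{P}_{c})=\mathrm{span}\,\boldsymbol{\delta}^{-1}\boldsymbol{\beta}$. So I would suppose $\mathbf{A}_{c}\boldsymbol{\delta}^{-1}\boldsymbol{\beta}\boldsymbol{w}=\lambda\,\boldsymbol{\delta}^{-1}\boldsymbol{\beta}\boldsymbol{w}$ for some $\boldsymbol{w}\in\mathbb{R}^{m}$ and derive a contradiction. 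Using $\mathbf{A}_{c}=-\frac{\gamma}{1-\gamma}\mathbf{\Sigma}_{2}^{\top}(\mathbf{\Sigma}_{1}^{-1}-\mathbf{\Sigma}_{c})\boldsymbol{\delta}-\boldsymbol{\delta}$ together with the identity $\boldsymbol{\beta}^{\top}(\mathbf{\Sigma}_{1}^{-1}-\mathbf{\Sigma}_{c})=0$ (established inside Proposition~\ref{prop:riccati_q_c}), the first term kills $\boldsymbol{\delta}^{-1}\boldsymbol{\beta}\boldsymbol{w}$ only in a transposed sense, so I expect the computation to reduce the eigenvector condition to a statement about $\boldsymbol{\delta}$ acting on $\mathrm{span}\,\boldsymbol{\delta}^{-1}\boldsymbol{\beta}$, at which point the hypothesis $\mathrm{rank}(\boldsymbol{\delta}-\boldsymbol{\beta}(\boldsymbol{\beta}^{\top}\boldsymbol{\beta})^{-1}\boldsymbol{\beta}^{\top}\boldsymbol{\delta})=d$ — which says $\boldsymbol{\delta}$ maps no nonzero vector of $\mathrm{span}\,\boldsymbol{\delta}^{-1}\boldsymbol{\beta}$ back into $\mathrm{span}\,\boldsymbol{\beta}$ — forces $\boldsymbol{w}=0$, and $\mathrm{rank}(\boldsymbol{\beta}^{\top}\boldsymbol{\beta})=m$ guarantees $\boldsymbol{\beta}$ has full column rank so that $\boldsymbol{\delta}^{-1}\boldsymbol{\beta}\boldsymbol{w}\neq 0$ whenever $\boldsymbol{w}\neq 0$. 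I anticipate that isolating the correct algebraic consequence of \eqref{eq:beta_ode_c} is the main obstacle; the rest is bookkeeping.

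Once observability and stabilisability are in hand, the conclusion is immediate: Theorem~2.1 of \citet{Wonham1968}, applied to the change of variable $\tilde{\boldsymbol{C}}(t)=-\boldsymbol{C}(t)$ as in the proof of Proposition~\ref{prop:riccati_solution_u}, yields a unique solution $\tilde{\boldsymbol{C}}(t)$ that is positive semi-definite and bounded on $(-\infty,T]$ with a unique limit as $t\to-\infty$; translating back, $\boldsymbol{C}(t)=-\tilde{\boldsymbol{C}}(t)$ is the unique negative semi-definite bounded solution of \eqref{eq:riccati_c} on $(-\infty,T]$, and $\bar{\mathbf{C}}=\lim_{t\to-\infty}\boldsymbol{C}(t)$ exists and is unique. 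I would close by remarking that the two rank conditions \eqref{eq:beta_ode_c} are mild and generically satisfied, so the stability guarantee for the constrained problem is essentially as strong as for the unconstrained one.
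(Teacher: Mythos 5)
Your skeleton is the same as the paper's---invoke Proposition \ref{prop:riccati_q_c}, factor $\mathbf{Q}_{c}$ and $-\mathbf{P}_{c}$, observe that stabilisability of $(\mathbf{A}_{c},\mathbf{B}_{c})$ is free because $\mathbf{B}_{c}$ is invertible, and feed everything into Theorem 2.1 of Wonham via $\tilde{\boldsymbol{C}}=-\boldsymbol{C}$---and you correctly identify that the only non-trivial step is observability of $(\mathbf{E}_{c},\mathbf{A}_{c})$, which is where the rank hypotheses must enter. Where you diverge is the tool for that step: the paper computes the Kalman controllability matrix directly, showing $\left(\boldsymbol{\delta}^{-1}\boldsymbol{\beta}\right)^{\top}\mathbf{A}_{c}^{\top}\mathbf{E}_{c}^{\top}=-\left(\mathbf{E}_{c}\boldsymbol{\beta}\right)^{\top}$ and then that $\mathbf{E}_{c}\boldsymbol{\beta}$ has trivial kernel, so that $\left[\mathbf{E}_{c}^{\top},\:\mathbf{A}_{c}^{\top}\mathbf{E}_{c}^{\top}\right]$ already spans $\mathbb{R}^{d}$; you instead use the Hautus/PBH eigenvector test. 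Your route does close: since $\mathbf{\Sigma}_{1}^{-1}-\mathbf{\Sigma}_{c}$ is symmetric, $\boldsymbol{\beta}^{\top}\left(\mathbf{\Sigma}_{1}^{-1}-\mathbf{\Sigma}_{c}\right)=0$ gives $\left(\mathbf{\Sigma}_{1}^{-1}-\mathbf{\Sigma}_{c}\right)\boldsymbol{\beta}=0$ outright (not merely ``in a transposed sense''), so $\mathbf{A}_{c}\boldsymbol{\delta}^{-1}\boldsymbol{\beta}\boldsymbol{w}=-\boldsymbol{\beta}\boldsymbol{w}$ exactly, and the putative eigenvector relation forces $\boldsymbol{\delta}\boldsymbol{\beta}\boldsymbol{w}=-\lambda\boldsymbol{\beta}\boldsymbol{w}\in\mathrm{span}\left(\boldsymbol{\beta}\right)$, whence $\left(\mathbf{I}-\boldsymbol{\beta}\left(\boldsymbol{\beta}^{\top}\boldsymbol{\beta}\right)^{-1}\boldsymbol{\beta}^{\top}\right)\boldsymbol{\delta}\boldsymbol{\beta}\boldsymbol{w}=0$ and $\boldsymbol{w}=0$. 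This is exactly the algebraic condition the paper extracts at the end of its controllability-matrix computation, so the two arguments are equivalent; yours is arguably cleaner because it avoids tracking which block of the Kalman matrix supplies the missing $m$ dimensions.

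Two caveats. First, your verbal gloss of the hypothesis is backwards: $\boldsymbol{\delta}$ maps \emph{every} vector of $\mathrm{span}\left(\boldsymbol{\delta}^{-1}\boldsymbol{\beta}\right)$ into $\mathrm{span}\left(\boldsymbol{\beta}\right)$; the operative condition is that $\boldsymbol{\delta}$ maps no nonzero vector of $\mathrm{span}\left(\boldsymbol{\beta}\right)$ back into $\mathrm{span}\left(\boldsymbol{\beta}\right)$, equivalently $\mathrm{span}\left(\boldsymbol{\beta}\right)\cap\mathrm{span}\left(\boldsymbol{\delta}^{-1}\boldsymbol{\beta}\right)=\left\{0\right\}$. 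Second, your closing remark that the rank conditions are ``mild and generically satisfied'' should be tempered: as literally written, $\mathrm{rank}\left(\boldsymbol{\delta}-\boldsymbol{\beta}\left(\boldsymbol{\beta}^{\top}\boldsymbol{\beta}\right)^{-1}\boldsymbol{\beta}^{\top}\boldsymbol{\delta}\right)=d$ is impossible for $m\geq1$, since the left factor is a projection of rank $d-m$; both the paper's proof and yours actually use only the injectivity of $\left(\mathbf{I}-\boldsymbol{\beta}\left(\boldsymbol{\beta}^{\top}\boldsymbol{\beta}\right)^{-1}\boldsymbol{\beta}^{\top}\right)\boldsymbol{\delta}$ restricted to $\mathrm{span}\left(\boldsymbol{\beta}\right)$, which is the condition that is genuinely mild and generic. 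Neither point undermines your argument, but both should be stated precisely in a final write-up.
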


\begin{proof}
\textit{The symmetric positive semi-definiteness of matrix $-\mathbf{P}_{c}$ comes from Proposition \ref{prop:riccati_q_c}, for which the proof shows that $\mathbf{\Sigma}_{1}^{-1}-\mathbf{\Sigma}_{1}^{-1}\boldsymbol{\beta}\left(\boldsymbol{\beta}^{\top}\mathbf{\Sigma}_{1}^{-1}\boldsymbol{\beta}\right)^{-1}\boldsymbol{\beta}^{\top}\mathbf{\Sigma}_{1}^{-1}$ is symmetric positive semi-definite. This matrix is diagonalizable, $\mathbf{\Sigma}_{1}^{-1}-\mathbf{\Sigma}_{1}^{-1}\boldsymbol{\beta}\left(\boldsymbol{\beta}^{\top}\mathbf{\Sigma}_{1}^{-1}\boldsymbol{\beta}\right)^{-1}\boldsymbol{\beta}^{\top}\mathbf{\Sigma}_{1}^{-1}=\mathbf{O}_{c}\mathbf{D}_{c}\mathbf{O}_{c}^{\top}$, where $\mathbf{O}_{c}\in\mathbb{R}^{d\times d}$ is orthonormal and $\mathbf{D}_{c}$ is a diagonal matrix with non-negative eigenvalues along its diagonal. Thus $-\mathbf{P}_{c}=\mathbf{E}_{c}^{\top}\mathbf{E}_{c}$, where $\mathbf{E}_{c}=\sqrt{\frac{-\gamma}{2\left(1-\gamma\right)}}\sqrt{\mathbf{D}_{c}}\mathbf{O}_{c}^{\top}\boldsymbol{\delta}$. As shown in the proof of Proposition \ref{prop:riccati_q_c}, there is a null space of $-\mathbf{P}_{c}$ that is spanned by the columns of matrix $\boldsymbol{\delta}^{-1}\boldsymbol{\beta}$, and hence the rank of matrix $\mathbf{E}_{c}$ is strictly less than $d$ with null space spanned by $\boldsymbol{\delta}^{-1}\boldsymbol{\beta}$. However, the rank of the controllability matrix $\boldsymbol{\Gamma}\left(\mathbf{A}_{c}^{\top},\:\mathbf{E}_{c}^{\top}\right)\in\mathbb{R}^{d\times d^{2}}$ given by Definition \ref{def:controllability} is $d$ if the equations of formula \eqref{eq:beta_ode_c} hold. }

\textit{It is indeed correct for the model that is proposed here in Section \ref{subsec:constrained_model}. Recall the matrix $\mathbf{A}_{c}$ defined for the matrix Riccati ODE \eqref{eq:riccati_c}, $\mathbf{A}_{c}^{\top}  =-\frac{\gamma}{1-\gamma}\boldsymbol{\delta}\left(\mathbf{\Sigma}_{1}^{-1}-\mathbf{\Sigma}_{c}\right)^{\top}\mathbf{\Sigma}_{2}-\boldsymbol{\delta}=2\mathbf{E}_{c}^{\top}\mathbf{E}_{c}\boldsymbol{\delta}^{-1}\mathbf{\Sigma}_{2}-\boldsymbol{\delta}$, which when multiplied on the left-hand side by $\boldsymbol{\delta}^{-1}\boldsymbol{\beta}$ and on the right by $\boldsymbol{E}_c^\top$,
\begin{align*}
\left(\boldsymbol{\delta}^{^{-1}}\boldsymbol{\beta}\right)^{\top}\mathbf{A}_{c}^{\top}\mathbf{E}_{c}^{\top} 
& =\left(\boldsymbol{\delta}^{^{-1}}\boldsymbol{\beta}\right)^{\top}\left(2\mathbf{E}_{c}^{\top}\mathbf{E}_{c}\boldsymbol{\delta}^{-1}\mathbf{\Sigma}_{2}-\boldsymbol{\delta}\right)\mathbf{E}_{c}^{\top}\\
& =2\left(\mathbf{E}_{c}\boldsymbol{\delta}^{^{-1}}\boldsymbol{\beta}\right)^{\top}\mathbf{E}_{c}\boldsymbol{\delta}^{-1}\mathbf{\Sigma}_{2}\mathbf{E}_{c}^{\top}-\left(\boldsymbol{\delta}^{^{-1}}\boldsymbol{\beta}\right)^{\top}\boldsymbol{\delta}\mathbf{E}_{c}^{\top}\\
& =-\left(\mathbf{E}_{c}\boldsymbol{\beta}\right)^{\top}\,.
\end{align*}
We can observe that, for any vector $\boldsymbol{x}$ we know that  $\mathbf{E}_{c}\boldsymbol{\beta}\boldsymbol{x}=0$ if and only if $\boldsymbol{x}^\top\boldsymbol{\beta}^\top\mathbf{P}_c\boldsymbol{\beta}\boldsymbol{x}=0$, which occurs if and only if $\exists\,\boldsymbol{v}$ such that $\boldsymbol{\beta}\boldsymbol{x} =\boldsymbol{\delta}^{-1}\boldsymbol{\beta}\boldsymbol{v}$. The nearest such $\boldsymbol{v}$ is $\widehat{\boldsymbol{v}}= \left(\boldsymbol{\beta}^\top\boldsymbol{\beta}\right)^{-1}\boldsymbol{\beta}^\top\boldsymbol{\delta}\boldsymbol{\beta}\boldsymbol{x}$. Therefore, $\mathbf{E}_{c}\boldsymbol{\beta}\boldsymbol{x}=0$ if we can find $\boldsymbol{x}$ such that $\boldsymbol{\delta}\boldsymbol{\beta}\boldsymbol{x} =\boldsymbol{\beta}\widehat{\boldsymbol{v}}$. This is the case if $\left(\mathbf{I}-\boldsymbol{\beta}\left(\boldsymbol{\beta}^\top\boldsymbol{\beta}\right)^{-1}\boldsymbol{\beta}^\top\right)\boldsymbol{\delta}\boldsymbol{\beta}\boldsymbol{x}=0$ where $\mathbf{I}\in\mathbb{R}^{d\times d}$ is the identity matrix, which occurs only for $\boldsymbol{x} =0$ if equations in formula \eqref{eq:beta_ode_c} hold, because $\boldsymbol{\beta}^\top\boldsymbol{\beta}$ is invertible implying that $\boldsymbol{\beta}$ has no right-hand null vector. Therefore, $\boldsymbol{\Gamma}\left(\mathbf{A}_{c}^{\top},\:\mathbf{E}_{c}^{\top}\right)\in\mathbb{R}^{d\times d^{2}}$ has full rank if formula \eqref{eq:beta_ode_c} holds. Thus, the pair $\left(\mathbf{A}_{u}^{\top},\:\mathbf{E}_{u}^{\top}\right)$ is controllable and the pair $\left(\mathbf{E}_{u},\:\mathbf{A}_{u}\right)$ is observable as per Definition \ref{def:observability}.}

\textit{The symmetric positive definiteness of matrix $\mathbf{Q}_{c}$ is proven in Proposition \ref{prop:riccati_q_c} as well. Corresponding to Proposition \ref{prop:riccati_solution_u}, matrix $\mathbf{Q}_{c}=\mathbf{B}_{c}\mathbf{N}_{c}^{-1}\mathbf{B}_{c}^{\top}$, where $\mathbf{B}_{c}$ is an orthogonal matrix, $\boldsymbol{\lambda}_{Q_{c}}$ are the eigenvalues of $\mathbf{Q}_{c}$, and $\mathbf{N}_{c}^{-1}=\text{diag}\left(\boldsymbol{\lambda}_{Q_{c}}\right)\in\mathbb{R}^{d\times d}$. The matrix $\mathbf{B}_{c}$ is invertible, hence we can find a constant matrix $\mathbf{M}_{c}\in\mathbb{R}^{d\times d}$ such that all eigenvalues of matrix $\mathbf{A}_{c}-\mathbf{B}_{c}\mathbf{M}_{c}$ have negative real parts, therefore the pair $\left(\mathbf{A}_{c},\:\mathbf{B}_{c}\right)$ is stabilisable.}

\textit{Finally, we let $\tilde{\boldsymbol{C}}\left(t\right)=-\boldsymbol{C}\left(t\right)$, so that the matrix Riccati equation \eqref{eq:riccati_c} becomes
\begin{align}
\frac{d\tilde{\boldsymbol{C}}\left(t\right)}{d t} 
& +\mathbf{A}_{c}^{\top}\tilde{\boldsymbol{C}}\left(t\right)+\tilde{\boldsymbol{C}}\left(t\right)\mathbf{A}_{c}-\tilde{\boldsymbol{C}}\left(t\right)\mathbf{B}_{c}\mathbf{N}_{c}^{-1}\mathbf{B}_{c}^{\top}\tilde{\boldsymbol{C}}\left(t\right)+\mathbf{E}_{c}^{\top}\mathbf{E}_{c}=0\,,\label{eq:riccati_q2_c}\\
\tilde{\boldsymbol{C}}\left(T\right) 
& =0\,,\nonumber 
\end{align}
and the above analyses of matrices $\mathbf{Q}_{c}$ and $-\mathbf{P}_{c}$ confirm that we can apply Theorem 2.1 from \citet{Wonham1968} again, to conclude that solution $\tilde{\boldsymbol{C}}\left(t\right)$ to equation \eqref{eq:riccati_q2_c} is unique, positive semidefinite, bounded on $\left(-\infty,\:T\right]$, and has unique limit as $t$ tends toward $-\infty$.}
\end{proof}

\begin{remark}
In Proposition \ref{prop:riccati_solution_c}, the implication of needing $\boldsymbol{\beta}^\top\boldsymbol{\beta}$ having full rank is that there needs to be at least as many co-integrated stocks as there are factors, as $\boldsymbol{\beta}\in\mathbb R^{d\times m}$. 
\end{remark}

\begin{remark}
In Proposition \ref{prop:riccati_solution_c}, a necessary condition for the first equation of formula  \eqref{eq:beta_ode_c} to hold is for $\boldsymbol{\beta}\left(\boldsymbol{\beta}^\top \boldsymbol{\beta}\right)^{-1}\boldsymbol{\beta}^\top$ to not commute with $\boldsymbol{\delta}$, which implies that $\boldsymbol{\delta}\not\propto \mathbf{I}$. 
\end{remark}

Next we analyse the behaviour of the solution for the ODE with respect to $\boldsymbol{b}\left(t\right)$ of the constrained stochastic optimal control problem that is described by equation \eqref{eq:ode_b_c}.

\begin{proposition}
\label{prop:ode_b_solution_c}
Let $\boldsymbol{R}_{c}\left(t\right)$ be the coefficient matrix for the homogeneous part of equation \eqref{eq:ode_b_c} for the constrained stochastic optimal control problem. For $\gamma<0$, there exists $t^{*}>-\infty$ such that $\boldsymbol{R}_{c}\left(t\right)$ has all positive eigenvalues for $t<t^{*}$. Therefore, the solution $\boldsymbol{b}\left(t\right)$ with respect to ODE \eqref{eq:ode_b_c} has a finite steady state.
\end{proposition}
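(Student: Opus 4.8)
The plan is to follow the proof of Proposition~\ref{prop:ode_b_solution_u} almost verbatim, working under the standing hypotheses of Proposition~\ref{prop:riccati_solution_c} so that $\bar{\mathbf{C}}=\lim_{t\to-\infty}\boldsymbol{C}\left(t\right)$ exists with $-\bar{\mathbf{C}}$ symmetric positive semi-definite. First I would read the homogeneous coefficient matrix off of \eqref{eq:ode_b_c},
\begin{align*}
\boldsymbol{R}_{c}\left(t\right)
&=-\boldsymbol{C}\left(t\right)\left(\frac{2\gamma}{1-\gamma}\mathbf{\Sigma}_{2}^{\top}\left(\mathbf{\Sigma}_{1}^{-1}-\mathbf{\Sigma}_{c}\right)\mathbf{\Sigma}_{2}+2\mathbf{\Sigma}_{3}\right)+\frac{\gamma}{1-\gamma}\boldsymbol{\delta}\left(\mathbf{\Sigma}_{1}^{-1}-\mathbf{\Sigma}_{c}\right)\mathbf{\Sigma}_{2}+\boldsymbol{\delta}\\
&=-\boldsymbol{C}\left(t\right)\mathbf{Q}_{c}+\frac{\gamma}{1-\gamma}\boldsymbol{\delta}\left(\mathbf{\Sigma}_{1}^{-1}-\mathbf{\Sigma}_{c}\right)\mathbf{\Sigma}_{2}+\boldsymbol{\delta}\,,
\end{align*}
and then simplify its affine part. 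The two algebraic facts I would use are $\mathbf{\Sigma}_{c}\boldsymbol{\beta}=\mathbf{\Sigma}_{1}^{-1}\boldsymbol{\beta}$, hence $\left(\mathbf{\Sigma}_{1}^{-1}-\mathbf{\Sigma}_{c}\right)\boldsymbol{\beta}=0$, and $\mathbf{\Sigma}_{2}=\mathbf{\Sigma}_{1}-\boldsymbol{\beta}\mathbf{\Sigma}_{0}\boldsymbol{\beta}^{\top}$, which follows from \eqref{eq:beta}. Together these give $\left(\mathbf{\Sigma}_{1}^{-1}-\mathbf{\Sigma}_{c}\right)\mathbf{\Sigma}_{2}=\mathbf{I}-\mathbf{\Sigma}_{1}^{-1}\boldsymbol{\beta}\left(\boldsymbol{\beta}^{\top}\mathbf{\Sigma}_{1}^{-1}\boldsymbol{\beta}\right)^{-1}\boldsymbol{\beta}^{\top}$ and therefore
\begin{align*}
\boldsymbol{R}_{c}\left(t\right)=-\boldsymbol{C}\left(t\right)\mathbf{Q}_{c}+\frac{1}{1-\gamma}\boldsymbol{\delta}-\frac{\gamma}{1-\gamma}\boldsymbol{\delta}\mathbf{\Sigma}_{1}^{-1}\boldsymbol{\beta}\left(\boldsymbol{\beta}^{\top}\mathbf{\Sigma}_{1}^{-1}\boldsymbol{\beta}\right)^{-1}\boldsymbol{\beta}^{\top}\,,
\end{align*}
which is the exact analogue of \eqref{eq:ode_b_r_u} with $\mathbf{Q}_{u}$ replaced by $\mathbf{Q}_{c}$ and the symmetric positive semi-definite matrix $\boldsymbol{\beta}\mathbf{\Sigma}_{0}\boldsymbol{\beta}^{\top}$ replaced by $\boldsymbol{\beta}\left(\boldsymbol{\beta}^{\top}\mathbf{\Sigma}_{1}^{-1}\boldsymbol{\beta}\right)^{-1}\boldsymbol{\beta}^{\top}$, which is again symmetric positive semi-definite since $\boldsymbol{\beta}^{\top}\mathbf{\Sigma}_{1}^{-1}\boldsymbol{\beta}$ is positive definite (here $\boldsymbol{\beta}$ is assumed to have full column rank, which is already required for $\mathbf{\Sigma}_{c}$ to be defined).

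Next I would pass to the limit $t\to-\infty$, replacing $\boldsymbol{C}\left(t\right)$ by $\bar{\mathbf{C}}$,
\begin{align*}
\mathbf{R}_{c}=-\bar{\mathbf{C}}\mathbf{Q}_{c}+\frac{1}{1-\gamma}\boldsymbol{\delta}-\frac{\gamma}{1-\gamma}\boldsymbol{\delta}\mathbf{\Sigma}_{1}^{-1}\boldsymbol{\beta}\left(\boldsymbol{\beta}^{\top}\mathbf{\Sigma}_{1}^{-1}\boldsymbol{\beta}\right)^{-1}\boldsymbol{\beta}^{\top}\,,
\end{align*}
and argue exactly as in Proposition~\ref{prop:ode_b_solution_u}: $\mathbf{Q}_{c}$ is symmetric positive definite by Proposition~\ref{prop:riccati_q_c}, $-\bar{\mathbf{C}}$ is symmetric positive semi-definite by Proposition~\ref{prop:riccati_solution_c}, $\boldsymbol{\delta}$ is diagonal with strictly positive entries, and $\boldsymbol{\beta}\left(\boldsymbol{\beta}^{\top}\mathbf{\Sigma}_{1}^{-1}\boldsymbol{\beta}\right)^{-1}\boldsymbol{\beta}^{\top}$ is symmetric positive semi-definite; because $\gamma<0$ makes both $\frac{1}{1-\gamma}>0$ and $-\frac{\gamma}{1-\gamma}>0$, Lemma~\ref{lem:eigenvalue_lemma}, applied termwise, gives that the field of values of each of the three summands of $\mathbf{R}_{c}$ sits in $\left[0,\infty\right)$, with that of $\frac{1}{1-\gamma}\boldsymbol{\delta}$ bounded below by $\frac{1}{1-\gamma}\min_{i}\delta^{i}>0$, so every eigenvalue of $\mathbf{R}_{c}$ is positive.

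Finally I would close by continuity. Since $\boldsymbol{R}_{c}\left(t\right)$ is affine in $\boldsymbol{C}\left(t\right)$ and $\boldsymbol{C}\left(t\right)\to\bar{\mathbf{C}}$, we have $\boldsymbol{R}_{c}\left(t\right)\to\mathbf{R}_{c}$; as eigenvalues vary continuously with the matrix entries and $\mathbf{R}_{c}$ has all eigenvalues positive, there is a $t^{*}>-\infty$ with $\boldsymbol{R}_{c}\left(t\right)$ having all positive eigenvalues for $t<t^{*}$. The inhomogeneous term of \eqref{eq:ode_b_c} is likewise affine in $\boldsymbol{C}\left(t\right)$ and hence converges, so the linear ODE \eqref{eq:ode_b_c} integrated backward from $T$ relaxes to the equilibrium obtained by setting $d\boldsymbol{b}/dt=0$ in the limiting equation, that is, $\boldsymbol{b}\left(t\right)$ has a finite steady state. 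I expect the main obstacle to be the very first step: carrying out the reduction of the affine part of $\boldsymbol{R}_{c}\left(t\right)$, in particular the cancellation $\left(\mathbf{\Sigma}_{1}^{-1}-\mathbf{\Sigma}_{c}\right)\boldsymbol{\beta}=0$ and the substitution of $\mathbf{\Sigma}_{2}$, so that the limiting matrix has the same shape as in the unconstrained case, together with checking that the non-symmetric products entering $\mathbf{R}_{c}$ really do have field of values in $\left[0,\infty\right)$, which is the point that legitimises the termwise use of Lemma~\ref{lem:eigenvalue_lemma}.
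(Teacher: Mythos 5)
Your proposal follows the paper's proof essentially verbatim: you read off $\boldsymbol{R}_{c}\left(t\right)$ from \eqref{eq:ode_b_c}, pass to the limit $\bar{\mathbf{C}}$, and invoke Lemma \ref{lem:eigenvalue_lemma} together with the positive semi-definiteness of $-\bar{\mathbf{C}}$, $\mathbf{Q}_{c}$, $\boldsymbol{\delta}$ and the projection-type matrix to conclude that all eigenvalues of the limiting coefficient matrix are positive, exactly as the paper does. Your reduction via $\left(\mathbf{\Sigma}_{1}^{-1}-\mathbf{\Sigma}_{c}\right)\boldsymbol{\beta}=0$ is in fact marginally cleaner than the paper's, whose displayed expression retains the term $-\frac{\gamma}{1-\gamma}\boldsymbol{\delta}\left(\mathbf{\Sigma}_{1}^{-1}-\mathbf{\Sigma}_{c}\right)\mathbf{\Psi}_1\mathbf{\Psi}_0^{\top}\mathbf{\Sigma}_{0}^{-1}\mathbf{\Psi}_0\mathbf{\Psi}_1^{\top}$ that your identity shows is identically zero.
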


\begin{proof}
\textit{By observing equation \eqref{eq:ode_b_c}, and following the notations that are denoted by formula \eqref{eq:notations}, also utilising the expression for $\boldsymbol{\beta}$ and for $\mathbf{\Sigma}_2$ in equations \eqref{eq:beta} and equation \eqref{eq:notations}, respectively, we have the coefficient matrix for the ODE,
\begin{align*}
\boldsymbol{R}_{c}\left(t\right) 
& =-2\boldsymbol{C}\left(t\right)\left(\frac{\gamma}{1-\gamma}\mathbf{\Sigma}_{2}^{\top}\left(\mathbf{\Sigma}_{1}^{-1}-\mathbf{\Sigma}_{c}\right)\mathbf{\Sigma}_{2}+\mathbf{\Sigma}_{3}\right)\\
& \quad\;+\left(\frac{\gamma}{1-\gamma}\boldsymbol{\delta}\left(\mathbf{\Sigma}_{1}^{-1}-\mathbf{\Sigma}_{c}\right)\mathbf{\Sigma}_{2}+\boldsymbol{\delta}\right)\\
& =-\boldsymbol{C}\left(t\right)\mathbf{Q}_{c}+\frac{1}{1-\gamma}\boldsymbol{\delta}-\frac{\gamma}{1-\gamma}\boldsymbol{\delta}\mathbf{\Sigma}_{c}\mathbf{\Sigma}_{1}\\
& \quad\;-\frac{\gamma}{1-\gamma}\boldsymbol{\delta}\left(\mathbf{\Sigma}_{1}^{-1}-\mathbf{\Sigma}_{c}\right)\left(\mathbf{\Psi}_1\mathbf{\Psi}_0^{\top}\mathbf{\Sigma}_{0}^{-1}\mathbf{\Psi}_0\mathbf{\Psi}_1^{\top}\right)\,.
\end{align*}
From Proposition \ref{prop:riccati_q_c}, we know that matrix $\mathbf{Q}_{c}$ is symmetric positive definite, from Proposition \ref{prop:riccati_solution_c}, we have that matrix $-\boldsymbol{C}\left(t\right)$ is positive semi-definite for $t<t^{*}$ with $t^{*}>-\infty$, and by the assumptions of the model, we have that matrices $\mathbf{\Sigma}_1$, $\mathbf{\Sigma}_c$, and $\boldsymbol{\delta}$ are positive definite. Also note that $\mathbf{\Sigma}_1^{-1}-\mathbf{\Sigma}_c$ is symmetric positive semi-definite, matrix $\mathbf{\Psi}_1  \mathbf{\Psi}_0^\top\mathbf{\Sigma}_0^{-1}\mathbf{\Psi}_0\mathbf{\Psi}_1^\top$ is symmetric positive semi-definite as well. Therefore, the above expression for $\boldsymbol{R}_{c}\left(t\right)$ is the summations and produces of positive semi-definite matrices, it has all positive eigenvalues, and
the solution of ODE \eqref{eq:ode_b_c} is stable.}
\end{proof}

\begin{remark}
For $\gamma<0$, the solution $a\left(t\right)$ with respect to equation \eqref{eq:ode_a_c} for the constrained stochastic optimal control problem has a finite steady state as $t\rightarrow-\infty$. The long-term  growth  rate  of  the  certainty equivalent of the constrained stochastic optimal control problem is proportional to the solution $a\left(t\right)$ of ODE \eqref{eq:ode_a_c}. The proof of such a proposition is similar to the proof of Proposition \ref{prop:long_term_ode_a_u}.
\end{remark}

\section{Numerical Experiments and Empirical Analyses\label{sec:numerical_empirical}}

This section presents sliding-window backtests on historical stock data. These tests demonstrate the performance of the optimal portfolios derived in Section \ref{sec:model_construction}. In Section \ref{subsec:eigenportfolio}, we first describe the method that is utilised for constructing eigenportfolios, which we then utilise as the factors $F_t^j$ seen in equation \eqref{eq:dF}. Then, we explain how to adjust the data for survivorship bias. The methods for parameter estimations and the approach for statistical testing of co-integration are described in Section \ref{subsec:parameter_estimation}. Lastly, the backtests and empirical analyses are presented in the Section \ref{subsec:portfolio_performance}.

\subsection{Eigenportfolios for Constructing Factors\label{subsec:eigenportfolio}}

In order to implement the portfolios proposed in Section \ref{sec:model_construction}, selecting the factors in equation \eqref{eq:dF} is the initial step. The principal eigenportfolio is a factor because it tracks the capitalisation-weighted market portfolio, see \citet{AvellanedaHPT2020}, which is closely tracked by the SPDR S\&P 500 Trust ETF, whose ticker symbol is SPY. Our additional factors are higher-order eigenportfolios, as is done in \citet{AvellanedaLee2010} and \citet{YeoPapanicolaou2017}.

Suppose $\boldsymbol{\rho}\in\mathbb{R}^{d\times d}$ is the correlation matrix of the returns for the stocks, 
\begin{align*}
\boldsymbol{\rho}=\left[\mathrm{corr}\left(\frac{dS_{t}^{i}}{S_{t}^{i}},\:\frac{dS_{t}^{k}}{S_{t}^{k}}\right)\right],\quad i,\:k=1,\:2,\:\cdots,\:d\,.
\end{align*}
Then, its eigenvalue decomposition is 
\begin{align}
\boldsymbol{\rho}=\mathbf{V}\mathbf{\Lambda}\mathbf{V}^{\top}\,,\label{eq:corr_matrix}
\end{align}
where matrix $\mathbf{V}=\left[\mathbf{v}_{1},\:\mathbf{v}_{2},\:,\cdots\:\mathbf{v}_{d}\right]\in\mathbb{R}^{d\times d}$ is composed by eigenvectors of $\boldsymbol{\rho}$ with orthonormal property $\mathbf{V}^{\top}\mathbf{V}=\mathbf{V}\mathbf{V}^{\top}=\mathbf{I}$, and $\mathbf{\Lambda}\in\mathbb{R}^{d\times d}$ is a diagonal matrix whose diagonal elements $\lambda^{ii}$ are the corresponding eigenvalues with $\lambda^{11}\geq\lambda^{22}\geq\cdots\geq\lambda^{ii}\geq\cdots\geq\lambda^{dd}>0$.
Therefore, the weight vectors for eigenportfolios are 
\begin{align*}
\boldsymbol{\omega}_{j}=\frac{1}{c^{j}}\boldsymbol{\sigma}^{-1}\mathbf{v}_{j}\in\mathbb{R}^d,\quad j=1,\:2,\:\cdots,\:m\,,
\end{align*}
where $c^{j}=\mathbf{1}^{\top}\boldsymbol{\sigma}^{-1}\mathbf{v}_{j}$, $\boldsymbol{\sigma}\in\mathbb{R}^{d\times d}$ is the diagonal matrix whose diagonal elements are the standard deviations of the returns for the stocks. Consequently, the $j^{\mathrm{th}}$ factor $F^{j}$ whose returns are given by the eigenportfolio is,
\begin{align*}
\frac{dF_{t}^{j}}{F_{t}^{j}}=\sum_{i=1}^{d}\omega^{ij}\frac{dS_{t}^{i}}{S_{t}^{i}}\,,
\end{align*}
where $\omega^{ij}$ is the $i^{\mathrm{th}}$ component of vector $\boldsymbol{\omega}_{j}$. Among all the factors that are constructed utilising the eigenportfolios returns, for $j=1$, the factor $F_{t}^{1}$ is called the principal eigenportfolio.

The principal eigenportfolio $F_{t}^{1}$ should track the SPDR S\&P 500 Trust ETF. However, there exists a survivorship bias if data is collected with future knowledge of S\&P 500 constituents. In other words, because S\&P Global Incorporated adjusts the constituents of S\&P 500 Index periodically, the latest constituents have already passed a selection process based on market capitalisation. For instance, the list that we utilise for numerical experiments was downloaded\footnote{S\&P 500 constituents list: \url{https://en.wikipedia.org/wiki/List_of_S\%26P_500_companies}} on 2021-05-06, but on 2021-04-20, PTC Incorporated, whose ticker symbol is PTC, replaced\footnote{S\&P Dow Jones Indices announcement for the changes to the S\&P 500: \url{https://www.spglobal.com/spdji/en/documents/indexnews/announcements/20210415-1358567/1358567_5var-6egov-pr.pdf}} Varian Medical Systems Incorporated, whose ticker symbol is VAR, in the S\&P 500 list of constituents. By observing Figure \ref{fig:survivorship_bias}, we can see clearly that in a long-time horizon the survivorship bias is significant. Consequently, to improve the interpretability of our numerical experiments, we should adjust for this survivorship bias in the data for returns of stocks,
\begin{align*}
\frac{\Delta F_{t}^{1}}{F_{t}^{1}} 
& =\alpha_{b}\Delta t+\beta_{b}\frac{\Delta S_{t}^{\mathrm{SPY}}}{S_{t}^{\mathrm{SPY}}}+\epsilon_{t}\,
\end{align*}
where $S_{t}^{\mathrm{SPY}}$ is the daily adjusted close price of the SPDR S\&P 500 Trust ETF and $\epsilon_t$ is the difference between the returns of the principal eigenportfolio and the SPY. The survivorship adjusted stock returns are
\begin{align*}
\frac{\Delta S_{t}^{i}}{S_{t}^{i}}
& \gets\frac{\Delta S_{t}^{i}}{S_{t}^{i}}-\alpha_{b}\Delta t\,,
\end{align*}
where $i=1,\:2,\:,\cdots,\:d$. In the above formulae, $\alpha_{b}$ is the survivorship bias, because the trackability with respect to the SPDR S\&P 500 Trust ETF of the principal eigenportfolio $F_{t}^{1}$ implies that the null hypotheses is $\alpha_{b}=0$. Therefore, any non-zero value of $\alpha_{b}$ from the regression should be subtracted from the stock returns.
\begin{figure}[ht]
\noindent \begin{centering}
\includegraphics[scale=0.3]{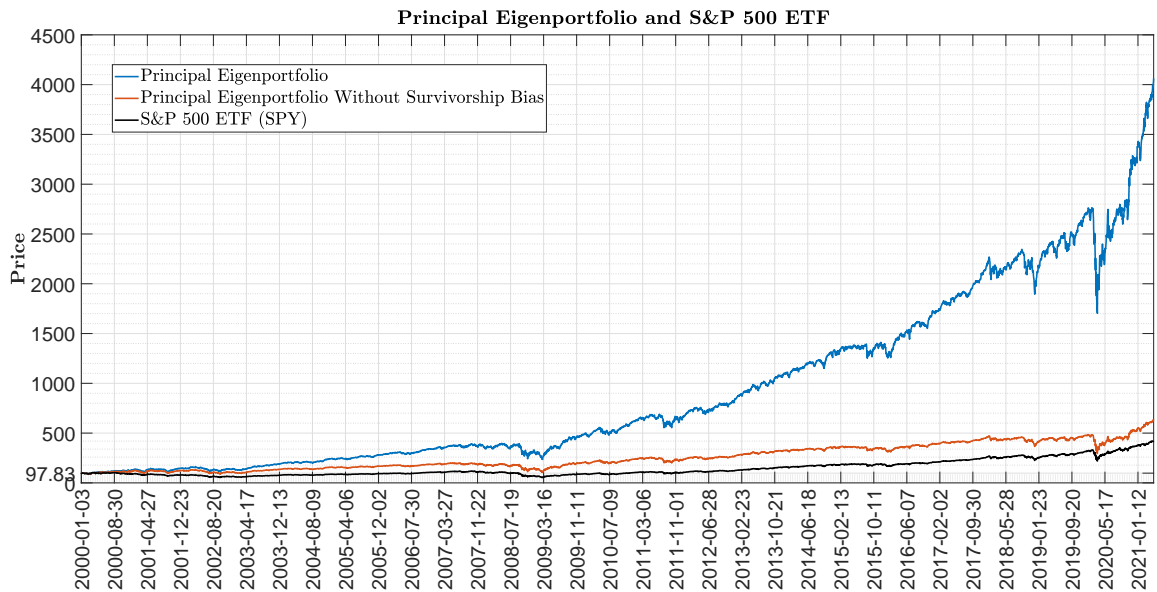}
\par\end{centering}
\caption{Survivorship bias of the principal eigenportfolio with respect to the SPDR S\&P 500 Trust ETF.\label{fig:survivorship_bias}}
\end{figure}

\subsection{Data and Parameter Estimations\label{subsec:parameter_estimation}}

We utilise Yahoo Finance\footnote{Data source for stock prices: \url{https://finance.yahoo.com}} as our data source. The data are the daily adjusted close stock prices of the S\&P 500 constituents from 2000-01-03 to 2021-05-06, which includes 5370 observations for each stock, and it also includes the SPDR S\&P 500 Trust ETF among these traded stocks. However, SPDR S\&P 500 Trust ETF is not utilised in the calculation of eigenportfolios. Hence, there are 506 ticker symbols in total, and after removing the ticker symbols that do not have full length data, we are left with 375 stocks. We assume that the interest rate $r$ is 0.01, and in every calendar year there are (around) 252 trading days. 

For the numerical experiments that we perform, the first step is to construct the eigenportfolios as the factors that are defined in equation \eqref{eq:dF}, which is described in Section \ref{subsec:eigenportfolio}. We select six factors, in other words $m=6$ in equation \eqref{eq:dF}. These six factors correspond to the largest six eigenvalues of the correlation matrix $\boldsymbol{\rho}$ of equation \eqref{eq:corr_matrix}. The drift coefficient vector $\boldsymbol{\eta}$ of the factors seen in equation \eqref{eq:dF} have $\eta_j = r$ for $j>1$ and $\eta_1$ of the principal eigenportfolio is estimated utilising the returns of the SPDR S\&P 500 Trust ETF, see \citet{boyle2014}, and the diffusion matrix $\mathbf{\Sigma}_{0}$ is estimated utilising the method described in \citet{LedoitWolf2004}.

The next step is to identify co-integrated stocks and estimate the parameters in equation \eqref{eq:factor} and equation \eqref{eq:dZ}. We first apply linear regression on the factor model of equation \eqref{eq:factor} to get the residual process $Z_{t}$, then we utilise the augmented Dickey-Fuller test to detect stationarity. Accordingly, the $\boldsymbol{\alpha}$ vector and the co-integration coefficient matrix $\boldsymbol{\beta}$ are estimated by the least squares approach of the linear regression of stock returns onto the factor returns. In equation \eqref{eq:dZ}, the reversion-mean parameter vector $\boldsymbol{\theta}$ is estimated by the time-series averages of the co-integrated processes $Z_{t}^{i}$ for $i=1,\:2,\:\cdots,\:d$. Estimating the mean-reversion speed parameter vector $\boldsymbol{\delta}$ is important because each inverse $1/\delta_{i}$ of its component represents the characteristic time-scale for mean reversion. Consequently, it affects the profit of portfolio wealth significantly, because it quantifies the duration of the trade and the inventory-risk exposure. Since the co-integrated process $Z_{t}$ is assumed to follow a stationary OU process that is described by equation \eqref{eq:dZ}, therefore, by applying the ergodic theory, the mean-reversion speed parameter $\delta$ can be estimated by the order-one auto-correlation of the process $Z_{t}$,
\begin{align*}
\hat{\delta}=-\frac{1}{\Delta t}\ln\left(\frac{\sum_{t=1}^{T-1}\left(Z_{t+1}-\hat{\theta}\right)\left(Z_{t}-\hat{\theta}\right)}{\sum_{t=1}^{T}\left(Z_{t}^{i}-\hat{\theta}^{i}\right)^{2}}\right)\,,
\end{align*}
where $\hat{\theta}=\frac{1}{T}\sum_{t=1}^{T}Z_{t}$ is the estimation of the mean-reversion parameter $\theta$. 
Both of the estimators $\hat{\theta}$ and $\hat{\delta}$ converge to their true values in probability as $T\rightarrow+\infty$, separately. The error for this estimator is asymptotically Gaussian distributed with expectation zero and
variance approximately $2\delta$, see \citet{Kutoyants2004} for more mathematical details. 

For the data set that we utilise, in each in-sample training window, the number of co-integrated stocks varies approximately in the range of ten to eighty. We set $d\leq15$, which means that for each in-sample training window, we first sort all the co-integrated stocks in a descending order with respect to the values of the mean reversion speed $\delta$, and then select only fifteen stocks whose values of $\delta$ are the largest. If the totally number of the co-integrated stocks for this window is less than fifteen, we then choose them all. For the out-of-sample testing window that follows, we utilise the same selected stocks without knowing which stocks will remain co-integrated in the test, in other words, there is model risk in the out-of-sample testing window. By setting up the backtests in this way, we can allow both the number of co-integrated stocks and the ticker symbols to change in all in-sample training windows and out-of-sample testing windows. Figure \ref{fig:z_process} illustrates fifteen co-integrated processes $Z_{t}$ that have the largest mean-reversion speeds $\delta$ of an in-sample training window, and their $p$-values for the augmented Dickey-Fuller test are smaller than 0.01. Table \ref{tab:ticker_speed} lists the fifteen ticker symbols with fastest mean-reversion time $252/\delta$ of the selected co-integrated stocks for three in-sample training windows.

\begin{figure}[h]
\noindent \begin{centering}
\includegraphics[scale=0.3]{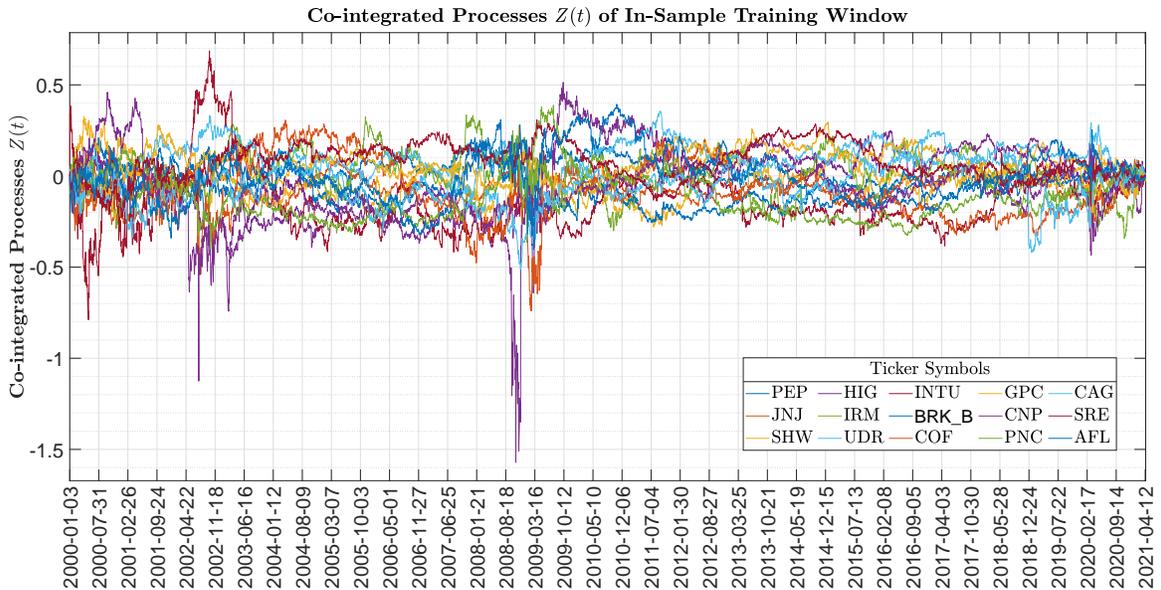}
\par\end{centering}
\caption{Co-integrated processes $Z_{t}$ of the in-sample training period 2000-01-03 to 2021-04-15. Their mean-reversion speeds $\delta$ are the fifteen largest among all the OU processes $Z_{t}$ whose augmented Dickey-Fuller tests reject the unit-root hypothesis with $p$-value $\protect\leq0.01$.\label{fig:z_process}}
\end{figure}

\begin{table}[h]
\setlength{\tabcolsep}{2pt}
\begin{centering}
\begin{tabular}{c|c|cccccccc}
\hline 
\multirow{3}{*}{From 2000-01-03} & Ticker Symbol & AME & AIG & MMC & ALK & MTD & KMB & SYY & MXIM\tabularnewline
 & Mean Reversion & \multirow{2}{*}{2.910} & \multirow{2}{*}{4.213} & \multirow{2}{*}{5.524} & \multirow{2}{*}{5.626} & \multirow{2}{*}{5.753} & \multirow{2}{*}{5.963} & \multirow{2}{*}{6.082} & \multirow{2}{*}{6.360}\tabularnewline
 & $\mathrm{Time}=\frac{252}{\delta}$ (Days) &  &  &  &  &  &  &  & \tabularnewline
\cline{2-10} \cline{3-10} \cline{4-10} \cline{5-10} \cline{6-10} \cline{7-10} \cline{8-10} \cline{9-10} \cline{10-10} 
\multirow{3}{*}{To 2000-11-13} & Ticker Symbol & TROW & SCHW & PH & YUM & ABT & ECL & BLK & \tabularnewline
 & Mean Reversion & \multirow{2}{*}{6.673} & \multirow{2}{*}{7.598} & \multirow{2}{*}{7.672} & \multirow{2}{*}{7.890} & \multirow{2}{*}{7.996} & \multirow{2}{*}{8.094} & \multirow{2}{*}{8.231} & \multirow{2}{*}{}\tabularnewline
 & $\mathrm{Time}=\frac{252}{\delta}$ (Days) &  &  &  &  &  &  &  & \tabularnewline
\hline 
\multirow{3}{*}{From 2010-09-07} & Ticker Symbol & ALK & NLOK & HON & BKR & RTX & PEG & CPB & OKE\tabularnewline
 & Mean Reversion & \multirow{2}{*}{6.197} & \multirow{2}{*}{6.584} & \multirow{2}{*}{7.039} & \multirow{2}{*}{7.562} & \multirow{2}{*}{7.658} & \multirow{2}{*}{8.034} & \multirow{2}{*}{8.078} & \multirow{2}{*}{9.018}\tabularnewline
 & $\mathrm{Time}=\frac{252}{\delta}$ (Days) &  &  &  &  &  &  &  & \tabularnewline
\cline{2-10} \cline{3-10} \cline{4-10} \cline{5-10} \cline{6-10} \cline{7-10} \cline{8-10} \cline{9-10} \cline{10-10} 
\multirow{3}{*}{To 2011-07-20} & Ticker Symbol & HBAN & LEN & ADI & MOS & QCOM & ED & CL & \tabularnewline
 & Mean Reversion & \multirow{2}{*}{10.460} & \multirow{2}{*}{10.708} & \multirow{2}{*}{10.718} & \multirow{2}{*}{11.018} & \multirow{2}{*}{11.021} & \multirow{2}{*}{11.365} & \multirow{2}{*}{11.521} & \multirow{2}{*}{}\tabularnewline
 & $\mathrm{Time}=\frac{252}{\delta}$ (Days) &  &  &  &  &  &  &  & \tabularnewline
\hline 
\multirow{3}{*}{From 2020-05-26} & Ticker Symbol & T & A & TSN & IBM & XRAY & UNH & PRGO & OMC\tabularnewline
 & Mean Reversion & \multirow{2}{*}{4.768} & \multirow{2}{*}{4.931} & \multirow{2}{*}{6.224} & \multirow{2}{*}{6.570} & \multirow{2}{*}{6.805} & \multirow{2}{*}{7.262} & \multirow{2}{*}{8.479} & \multirow{2}{*}{8.525}\tabularnewline
 & $\mathrm{Time}=\frac{252}{\delta}$ (Days) &  &  &  &  &  &  &  & \tabularnewline
\cline{2-10} \cline{3-10} \cline{4-10} \cline{5-10} \cline{6-10} \cline{7-10} \cline{8-10} \cline{9-10} \cline{10-10} 
\multirow{3}{*}{To 2021-04-08} & Ticker Symbol & IEX & SBAC & JPM & NEM & INCY & ECL & ANSS & \tabularnewline
 & Mean Reversion & \multirow{2}{*}{8.865} & \multirow{2}{*}{8.936} & \multirow{2}{*}{9.092} & \multirow{2}{*}{9.183} & \multirow{2}{*}{9.481} & \multirow{2}{*}{9.727} & \multirow{2}{*}{9.769} & \multirow{2}{*}{}\tabularnewline
 & $\mathrm{Time}=\frac{252}{\delta}$ (Days) &  &  &  &  &  &  &  & \tabularnewline
\hline 
\end{tabular}
\par\end{centering}
\caption{Ticker symbols and mean-reversion speeds in days for three different in-sample training windows of a sliding window in-sample training and out-of-sample testing for the time period [2000-01-03, 2021-05-06]. For each training window, the data has length 220 trading days. The selected co-integrated stocks for constructing the portfolio are the ones whose mean-reversion speeds $\delta$ are the fifteen fastest among all the OU processes $Z_{t}$ whose augmented Dickey-Fuller tests reject the unit-root hypothesis with $p$-value$\:\protect\leq0.01$.\label{tab:ticker_speed}}
\end{table}

\subsection{Portfolio Performance}
\label{subsec:portfolio_performance}

The final step before evaluating the performances of optimal portfolios is to solve the system of ordinary differential equations for the unconstrained and constrained portfolios. Because we work in the setting of very large terminal time $T$, therefore we can utilise the steady-state portfolios to demonstrate the results. In other words, we work with the limiting vector $\bar{\boldsymbol{\pi}}^{*}$ that is calculated utilising $\bar{\mathbf{C}}=\lim_{t\rightarrow-\infty}\boldsymbol{C}\left(t\right)$ and $\bar{\mathbf{b}}=\lim_{t\rightarrow-\infty}\boldsymbol{b}\left(t\right)$. In this situation, for the unconstrained problem with HJB equation \eqref{eq:unconstrainedHJB}, the matrix Riccati ODE \eqref{eq:ode_c_u} has a steady state $\bar{\mathbf{C}}$ that solves a continuous-time algebraic Riccati equation, which can be solved numerically, see \citet{Dooren1981} and \citet{ArnoldLaub1984}. The ODE \eqref{eq:ode_b_u} has a steady state $\bar{\mathbf{b}}$ that solves a linear system. The optimal control \eqref{eq:control_pi_u2} for the unconstrained portfolio under the steady state is
\begin{align*}
\bar{\boldsymbol{\pi}}^{*}\left(\boldsymbol{z}\right)=\frac{1}{1-\gamma}\mathbf{\Sigma}_{1}^{-1}\left(\boldsymbol{\mu}+\mathbf{\Sigma}_{2}\bar{\mathbf{b}}+\left(-\boldsymbol{\delta}+2\mathbf{\Sigma}_{2}\bar{\mathbf{C}}\right)\boldsymbol{z}\right)\,.
\end{align*}
Correspondingly, the optimal control \eqref{eq:control_pi2_c2} for the market-neutral constrained portfolio under the steady state is
\begin{align*}
\bar{\boldsymbol{\pi}}^{*}\left(\boldsymbol{z}\right)=\frac{1}{1-\gamma}\left(\mathbf{\Sigma}_{1}^{-1}-\mathbf{\Sigma}_{c}\right)\left(\boldsymbol{\mu}+\mathbf{\Sigma}_{2}\bar{\mathbf{b}}+\left(-\boldsymbol{\delta}+2\mathbf{\Sigma}_{2}\bar{\mathbf{C}}\right)\boldsymbol{z}\right)\,.
\end{align*}
The wealth of the portfolio calculated utilising steady-state optimal control $\bar{\boldsymbol{\pi}}^{*}$ is denoted by $W_{t}\left(\bar{\boldsymbol{\pi}}^{*}\right)$.

In order to perform comprehensive comparisons among different data sets and parameter settings, we also consider the myopic wealth process $W_{t}\left(\boldsymbol{\pi}_{m}^{*}\right)$. Utilising the myopic portfolio $\boldsymbol{\pi}_{m}^{*}$ given by equation \eqref{eq:myopic_control_u} for the unconstrained portfolio and equation \eqref{eq:myopic_control_c} for the market-neutral constrained portfolio, a myopic wealth process is computed utilising equation \eqref{eq:dW} with the optimal portfolio $\boldsymbol{\pi}^{*}$ substituted by $\boldsymbol{\pi}_{m}^{*}$. Correspondingly, the wealth of the portfolio that is calculated utilising $\bar{\boldsymbol{\pi}}_{m}^{*}$ is called the myopic wealth of steady-state portfolio $W_{t}\left(\bar{\boldsymbol{\pi}}^{*}_{m}\right)$.

We perform sliding window in-sample training and out-of-sample testing, see Figure \ref{fig:sliding_window}, for both of the unconstrained and constrained portfolios. In each in-sample training window, we identify and select co-integrated stocks, estimate parameters, and then solve the equations for steady-states values $\bar{\mathbf{C}}$ and $\bar{\mathbf{b}}$. Afterwards, in each out-of-sample testing window, we utilise the estimated parameters along with $\bar{\mathbf{C}}$ and $\bar{\mathbf{b}}$, to calculate myopic control $\bar{\boldsymbol{\pi}}_{m}^{*}$ and optimal control $\bar{\boldsymbol{\pi}}^{*}$. Over time, we compute and record the myopic wealth $W_{t}\left(\bar{\boldsymbol{\pi}}_{m}^{*}\right)$ and the optimal wealth $W_{t}\left(\bar{\boldsymbol{\pi}}^{*}\right)$.

\begin{figure}[h]
\begin{centering}
\includegraphics[scale=0.2]{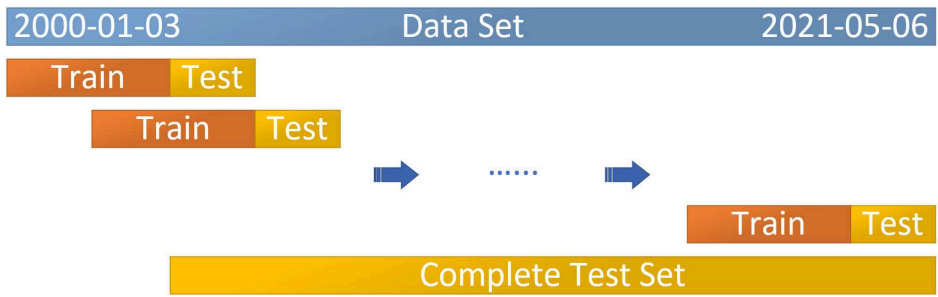}
\par\end{centering}
\caption{Sliding window in-sample training and out-of-sample testing for both of unconstrained and constrained portfolios.\label{fig:sliding_window}}
\end{figure}

We analyse the wealth of portfolios to demonstrate the long-term profitability of the multiple co-integrated stocks strategies proposed in Section \ref{sec:model_construction}. Figure \ref{fig:sliding_window_U} and Figure \ref{fig:sliding_window_C} illustrate the myopic wealth and optimal wealth with respect to unconstrained portfolio and market-neutral constrained portfolio with a fixed parameter setting, respectively. Figure \ref{fig:wealth_u} and Figure \ref{fig:wealth_c} exhibit the optimal wealth of unconstrained portfolio and market-neutral constrained portfolio for varying sliding-window lengths. Table \ref{tab:statistics_u} and Table \ref{tab:statistics_c} display the annualised statistics of expected return, volatility, profit percentage, Sharpe ratio, and maximum drawdown of myopic and optimal wealth for unconstrained and market-neutral constrained portfolios with differing sliding-window lengths. Table \ref{tab:statistics_u_1} to Table \ref{tab:statistics_u_4} and Table \ref{tab:statistics_c_1} to Table \ref{tab:statistics_c_4} show the annualised statistics of myopic and optimal wealth for unconstrained and market-neutral constrained portfolios among four different sub-intervals from year 2000 to 2021. 

As we can observe from these figures and tables, for different in-sample training window and out-of-sample testing window measured in days, comparing with the myopic portfolios, the optimal portfolios usually have bigger profits, higher expected returns, larger volatilities, and better Sharpe ratios\footnote{In this article, the annual return that is utilised for calculating Sharpe ratio is the annualised profit: $\left(1+\mathrm{Profit}\right)^{\nicefrac{365}{T}}-1$.}. But their maximum drawdowns are worse than those of myopic portfolios, which can be attributed to their higher volatility. These empirical results demonstrate that the non-myopic component in the optimal control variable $\boldsymbol{\pi}^{*}$ provides improvement to the profits and expected returns. It is also clear that the optimal market-neutral constrained portfolios, when compared to the optimal unconstrained portfolios, have lower volatilites, smaller profits and expected returns, weaker Sharpe ratios, and lower maximum drawdowns. We can also notice that the annualised statistics are sensitive to both the lengths of in-sample training window and out-of-sample testing window. This is an indication that portfolios are sensitive to parameter estimation, as window length affects the procedures for co-integration selection and parameter estimation. The results also show that trading in multiple co-integrated stocks can generate significant profits during the time periods of high volatility, for example the post Internet bubble period among 2000-2003, the financial crisis during 2007-2008, the European sovereign debt crisis between 2010-2012, and the coronavirus pandemic of 2020. The slowdown in statistical arbitrage performance in the post-bubble period running up to 2007 was analysed in \citet{khandani2011happened}. Furthermore, from the tables and figures, we can also observe that although the portfolios have high risk aversion $\gamma=-70$, the portfolios we have constructed still out-preform the SPDR S\&P 500 Trust ETF during those time periods of high volatility mentioned above. These results are a general indication that statistical arbitrage strategies have their greatest potential for profit generation when the overall level of market volatility is higher than normal.

\begin{figure}[h] 
\begin{centering}
\includegraphics[scale=0.3]{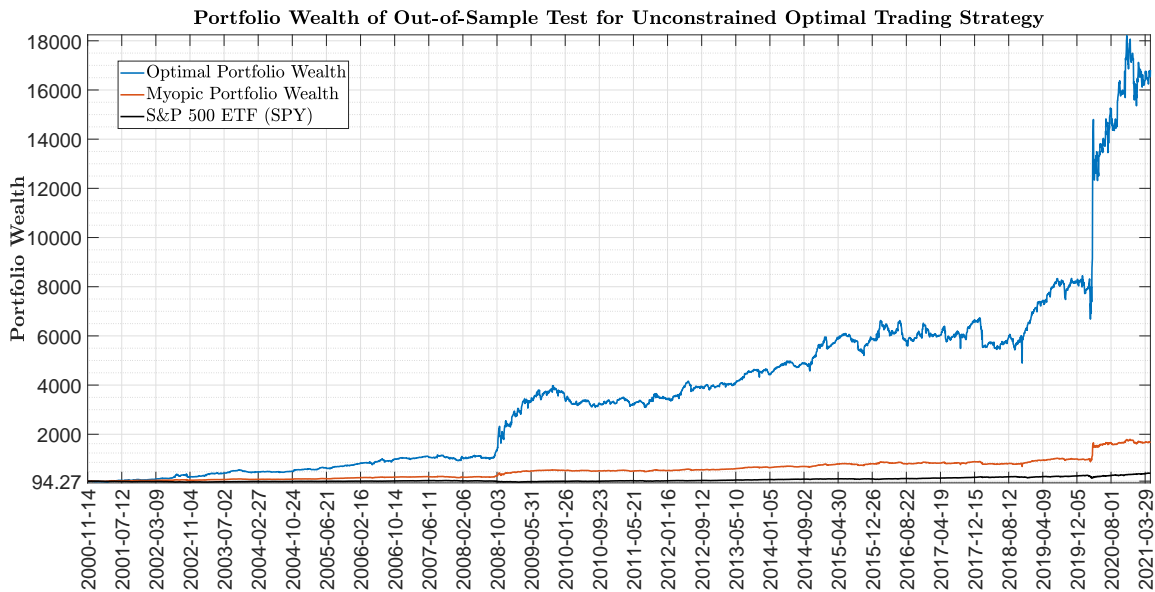}
\par\end{centering}
\caption{Sliding window out-of-sample testing for unconstrained portfolio. Training window is 220 days, testing window is 15 days, $\gamma = -100$, factor number is 6, and interest rate is $1\%$. The annualised statistics for myopic wealth are profit of $1703.167\%$, expected return of $15.458\%$, volatility of $0.163$, Sharpe ratio of $1.335$, maximum drawdown of $0.231$. The statistics for optimal wealth are profit of $17735.254\%$, expected return of $30.011\%$, volatility of $0.306$, Sharpe ratio of $1.417$, maximum drawdown of $0.445$. The statistics for the S\&P 500 ETF are profit of $340.999\%$, expected return of $9.196\%$, volatility of $0.196$, Sharpe ratio of $0.514$, maximum drawdown of $0.552$. \label{fig:sliding_window_U}}
\end{figure}

\begin{figure}[h]
\begin{centering}
\includegraphics[scale=0.3]{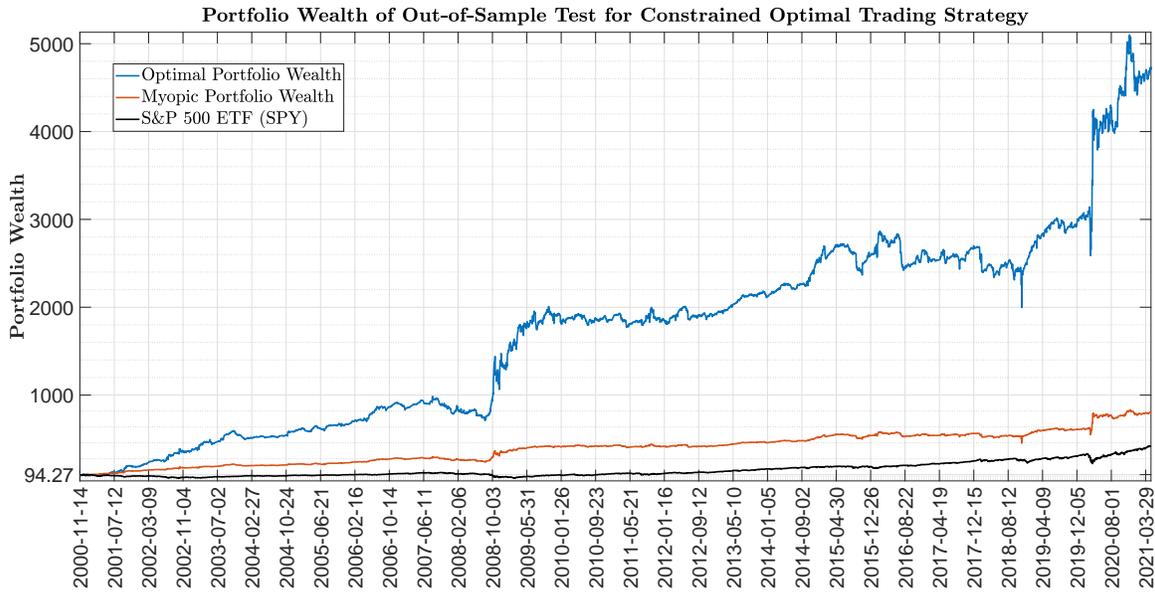}
\par\end{centering}
\caption{Sliding window out-of-sample testing for constrained portfolio to be market neutral. Training window is 220 days, testing window is 15 days, and $\gamma = -100$, factor number is 6, and interest rate is $1\%$. The annualised statistics for myopic wealth are profit of $753.696\%$, expected return of $11.195\%$, volatility of $0.119$, Sharpe ratio of $1.296$, maximum drawdown of $0.221$. The statistics for optimal wealth are profit of $4922.455\%$, expected return of $21.679\%$, volatility of $0.226$, Sharpe ratio of $1.375$, maximum drawdown of $0.303$. The statistics for the S\&P 500 ETF are profit of $340.999\%$, expected return of $9.196\%$, volatility of $0.196$, Sharpe ratio of $0.514$, maximum drawdown of $0.552$.\label{fig:sliding_window_C}}
\end{figure}

\begin{figure}[ht]
\begin{centering}
\begin{minipage}[t]{0.47\columnwidth}%
\includegraphics[angle=90,width=7.2cm,height=10cm]{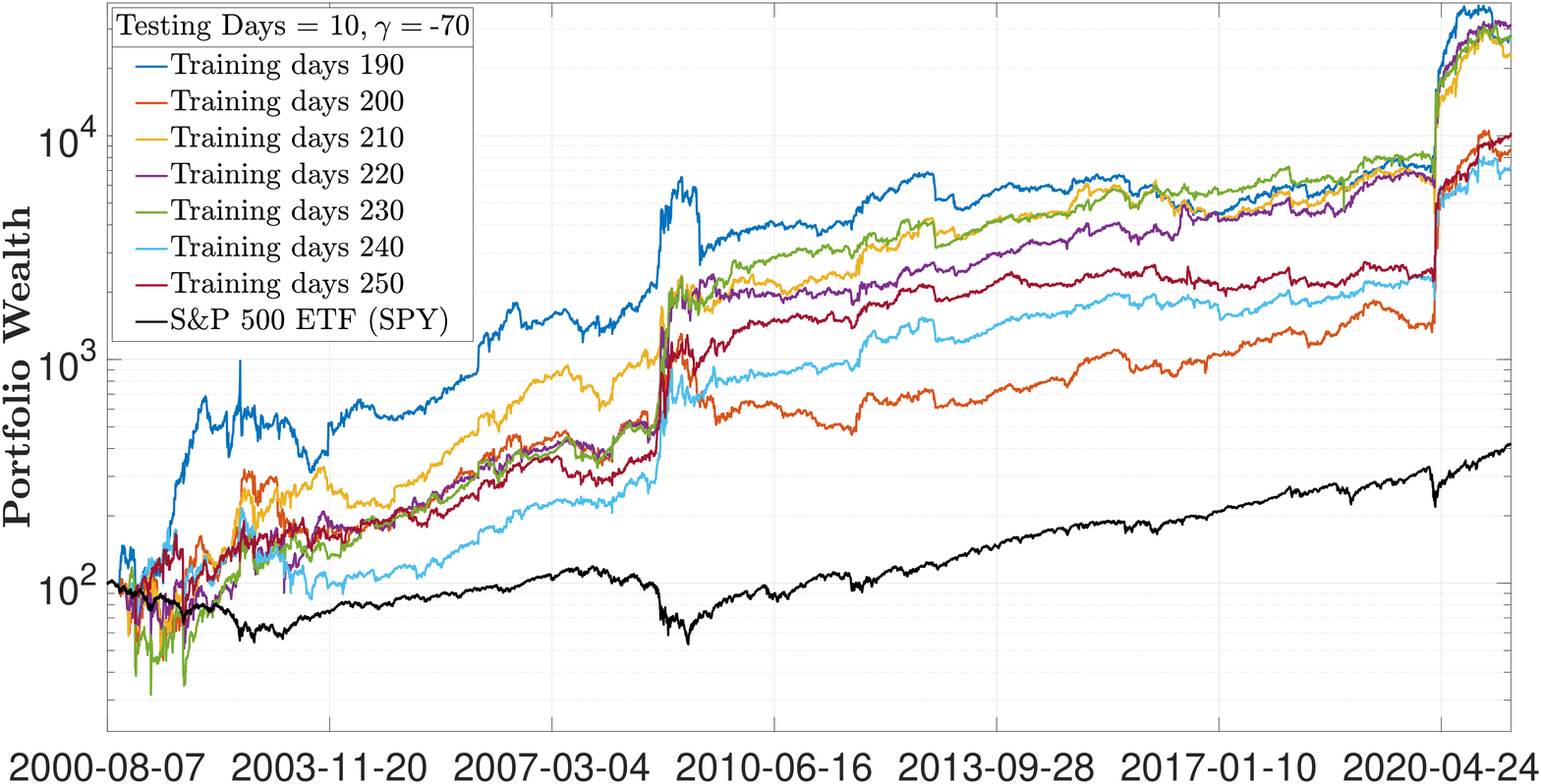}%
\end{minipage}\hspace{0.2cm}%
\begin{minipage}[t]{0.47\columnwidth}%
\includegraphics[angle=90,width=7.2cm,height=10cm]{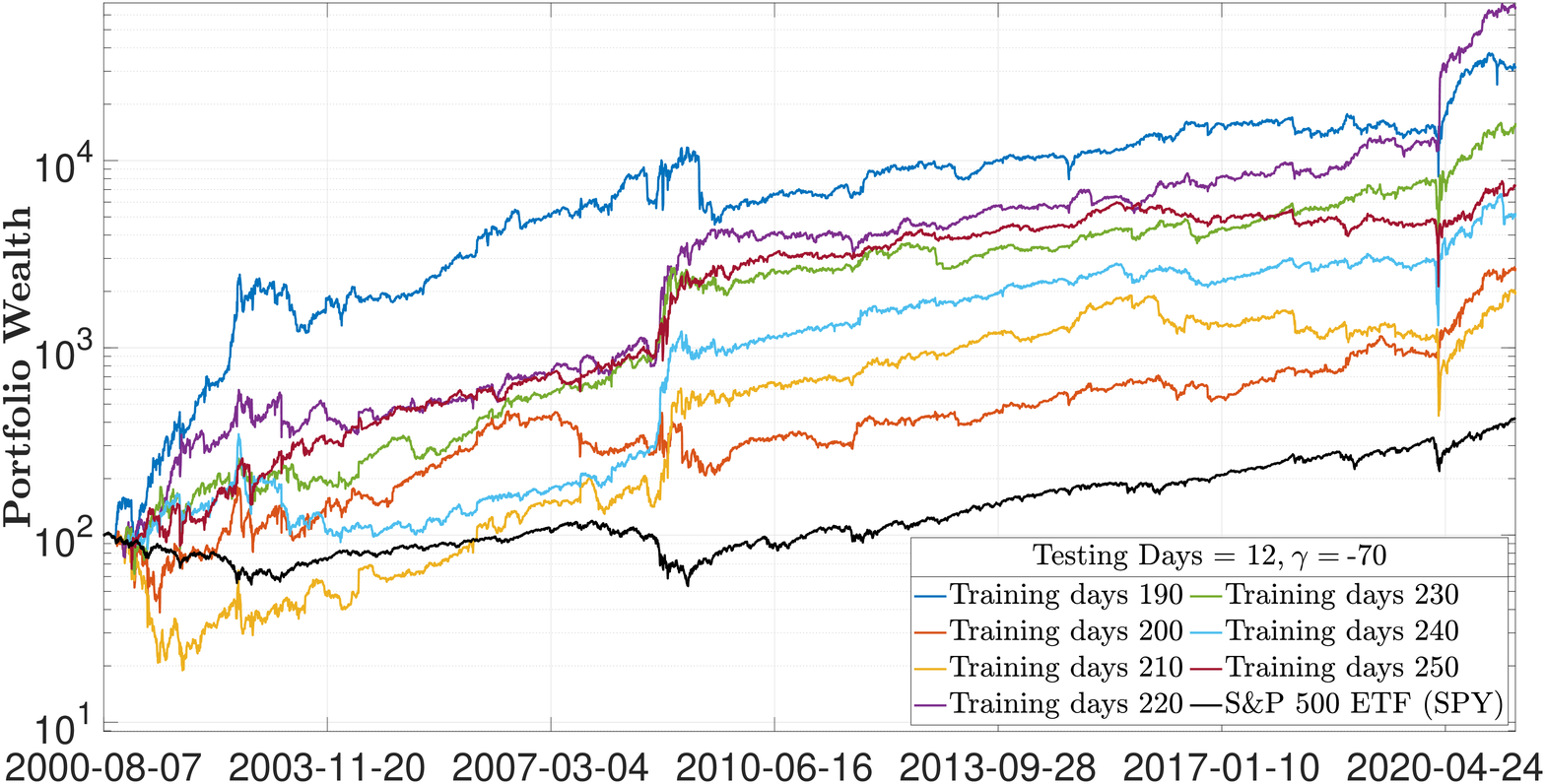}%
\end{minipage}
\par\end{centering}
\begin{centering}
\begin{minipage}[t]{0.47\columnwidth}%
\includegraphics[angle=90,width=7.2cm,height=10cm]{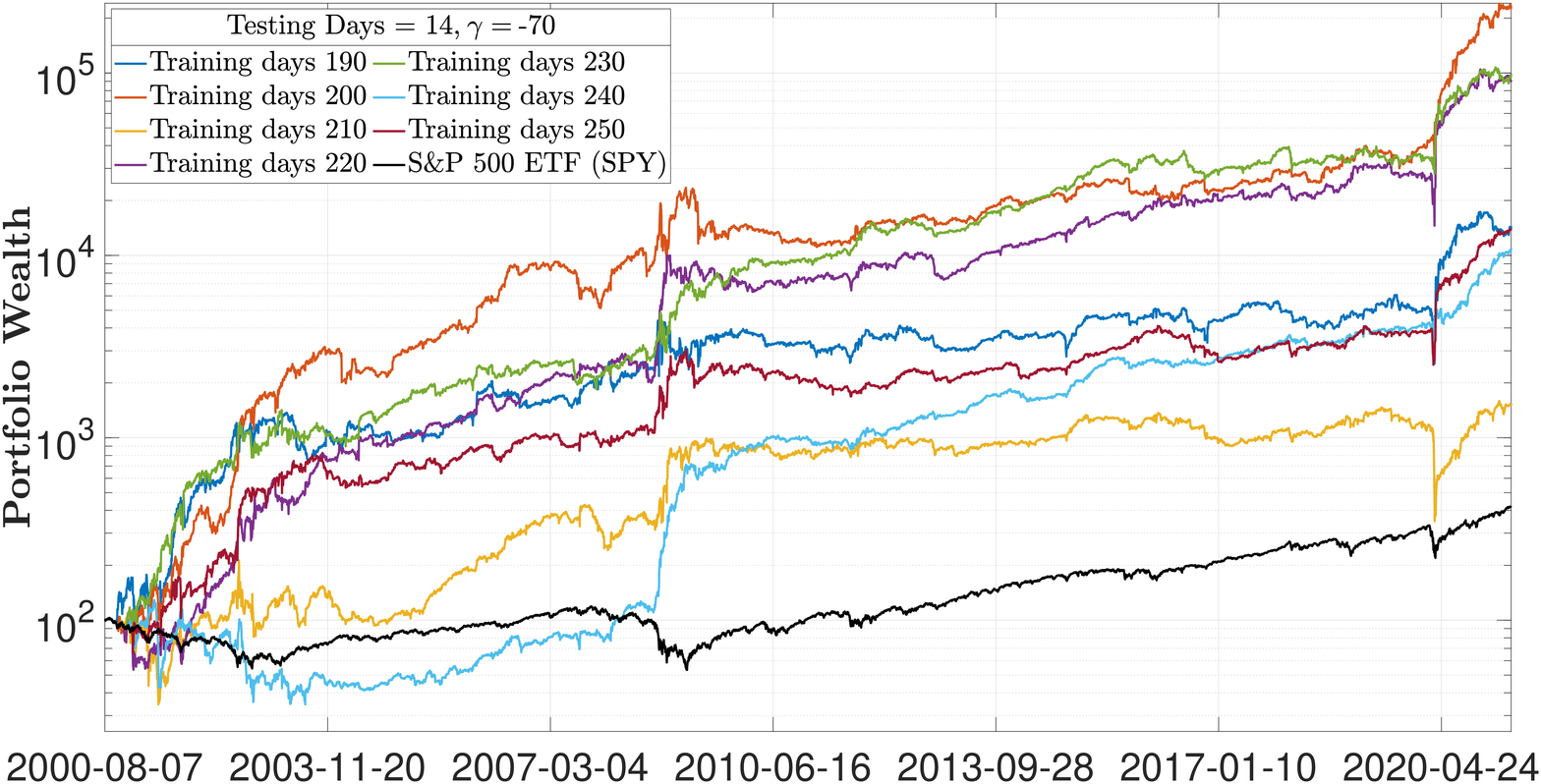}%
\end{minipage}\hspace{0.2cm}%
\begin{minipage}[t]{0.47\columnwidth}%
\includegraphics[angle=90,width=7.2cm,height=10cm]{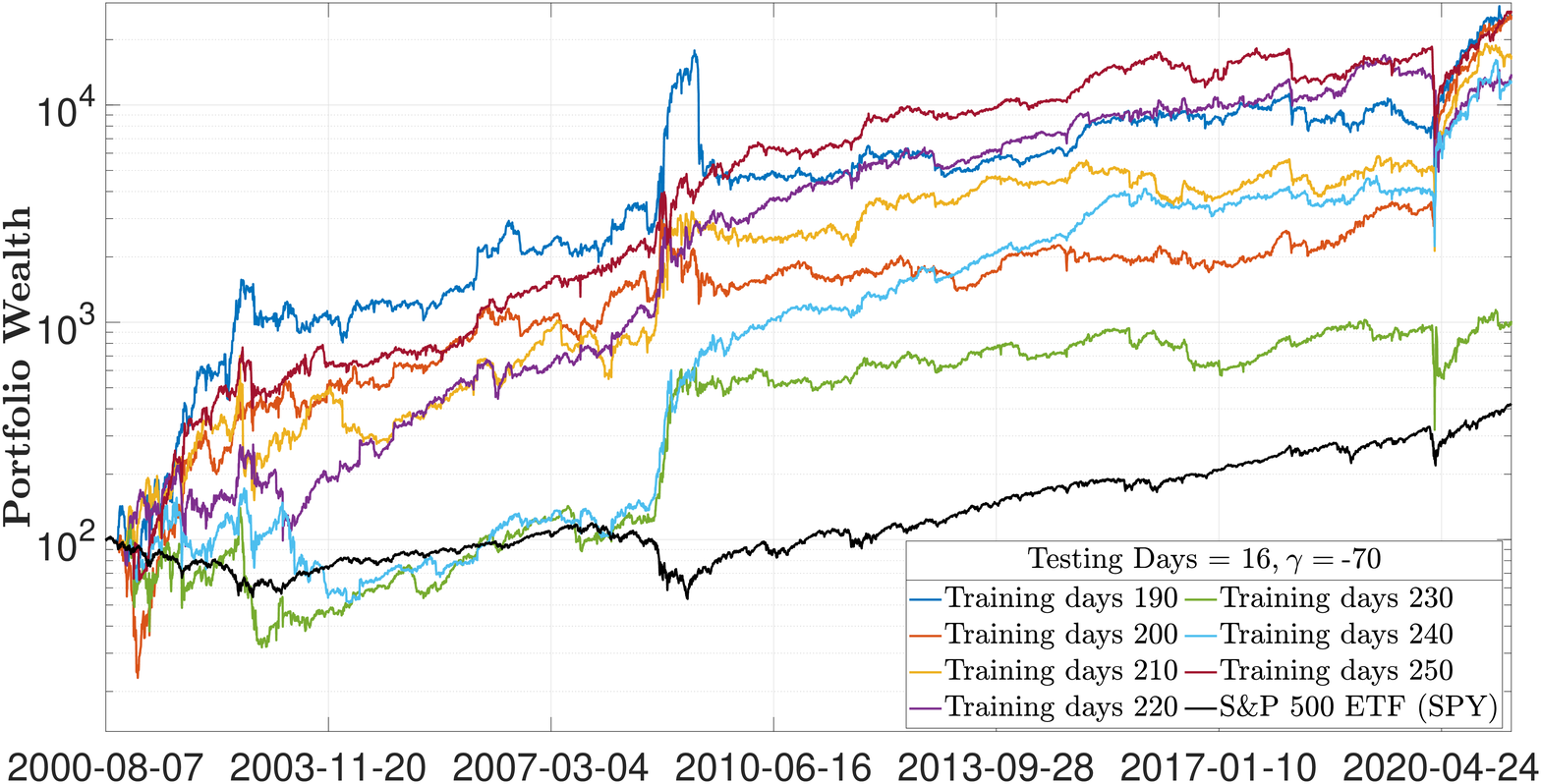}%
\end{minipage}
\par\end{centering}
\centering{}\caption{Optimal wealth trajectories for $\gamma=-70$ of the unconstrained portfolio. The vertical axis is in logarithmic scale with base 10. Interest rate is $r=1\%$. Factor number is 6.
\label{fig:wealth_u}}
\end{figure}

\begin{figure}[ht]
\begin{centering}
\begin{minipage}[t]{0.47\columnwidth}%
\includegraphics[angle=90,width=7.2cm,height=10cm]{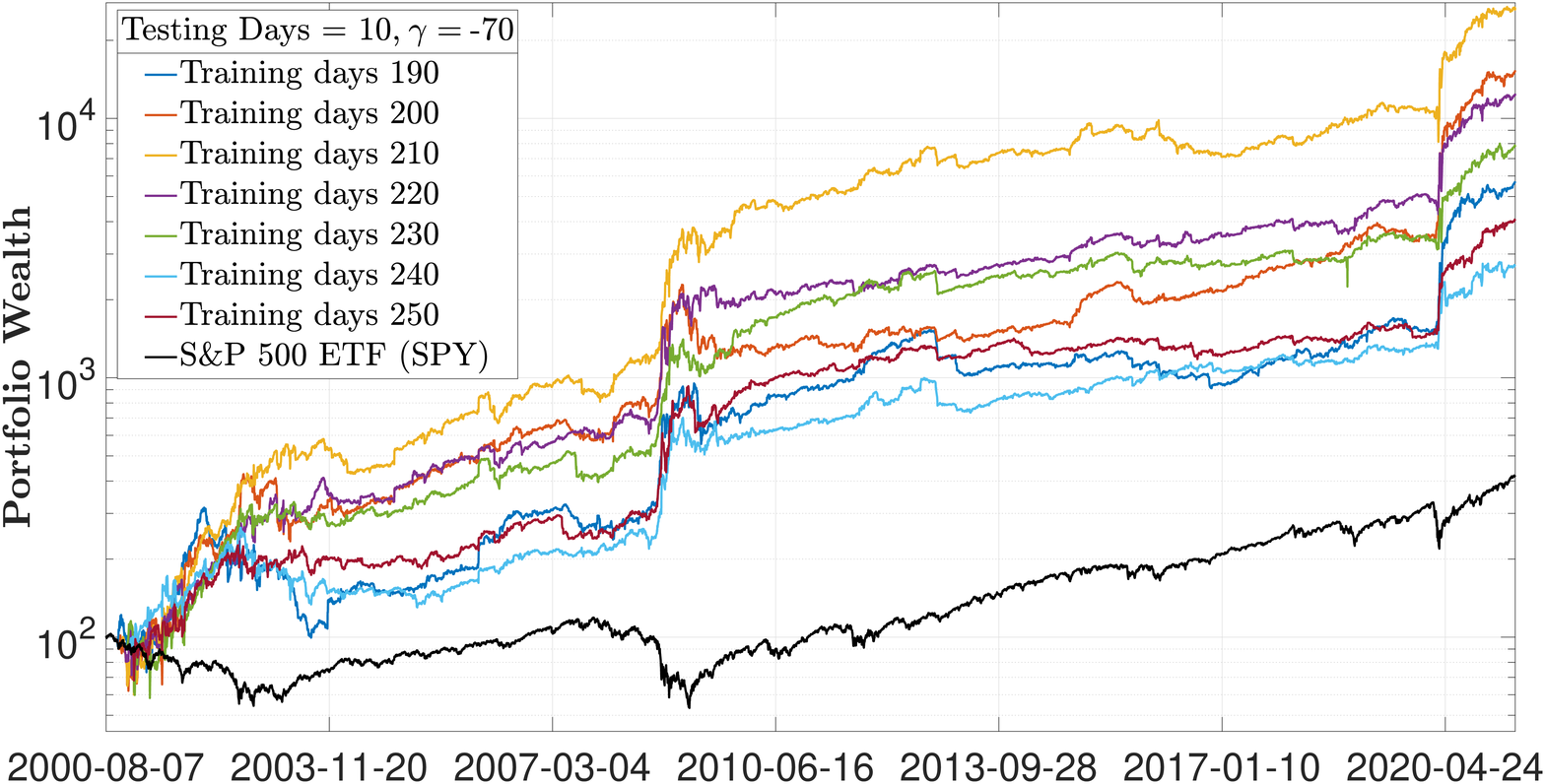}%
\end{minipage}\hspace{0.2cm}%
\begin{minipage}[t]{0.47\columnwidth}%
\includegraphics[angle=90,width=7.2cm,height=10cm]{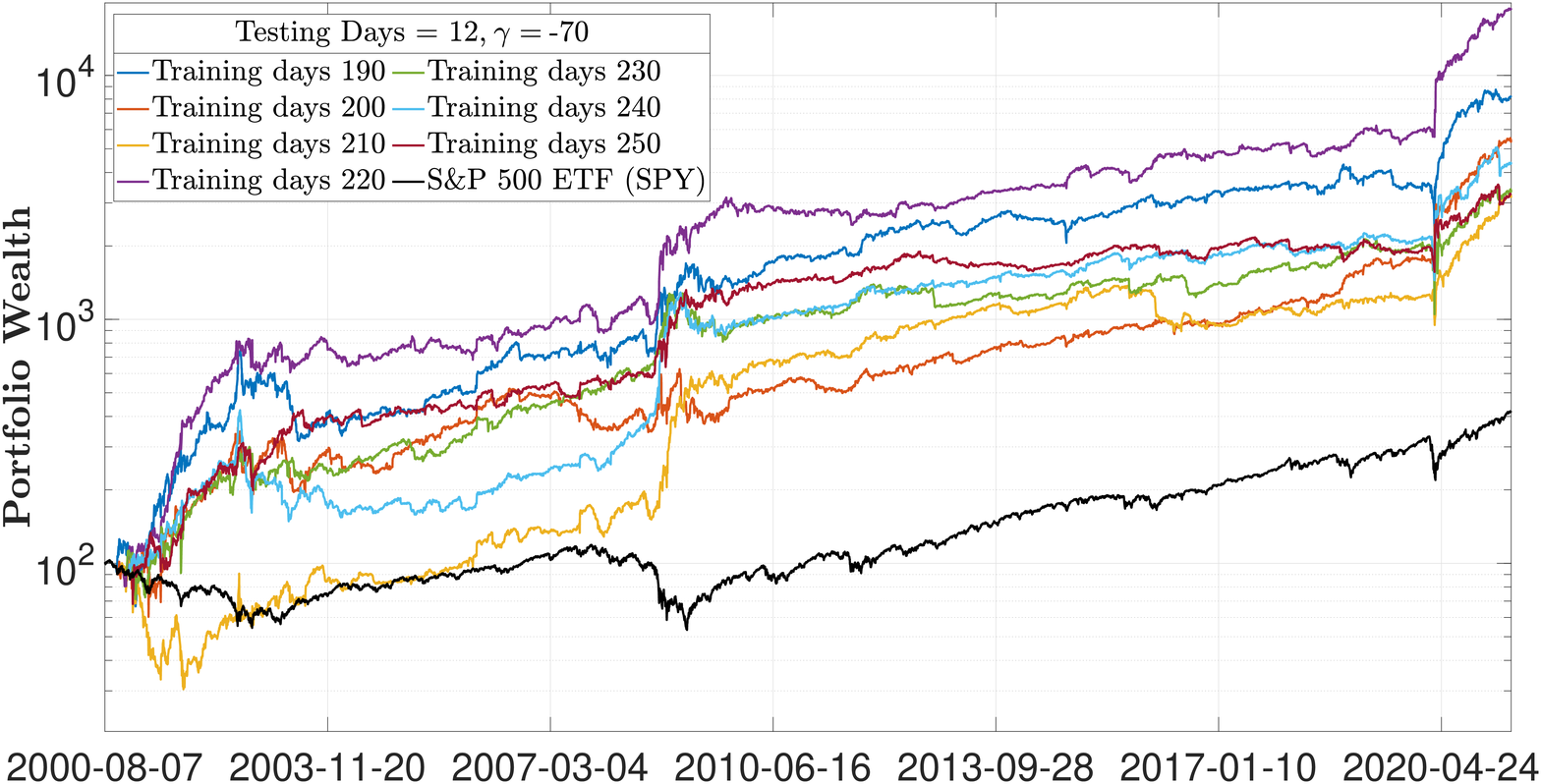}%
\end{minipage}
\par\end{centering}
\begin{centering}
\begin{minipage}[t]{0.47\columnwidth}%
\includegraphics[angle=90,width=7.2cm,height=10cm]{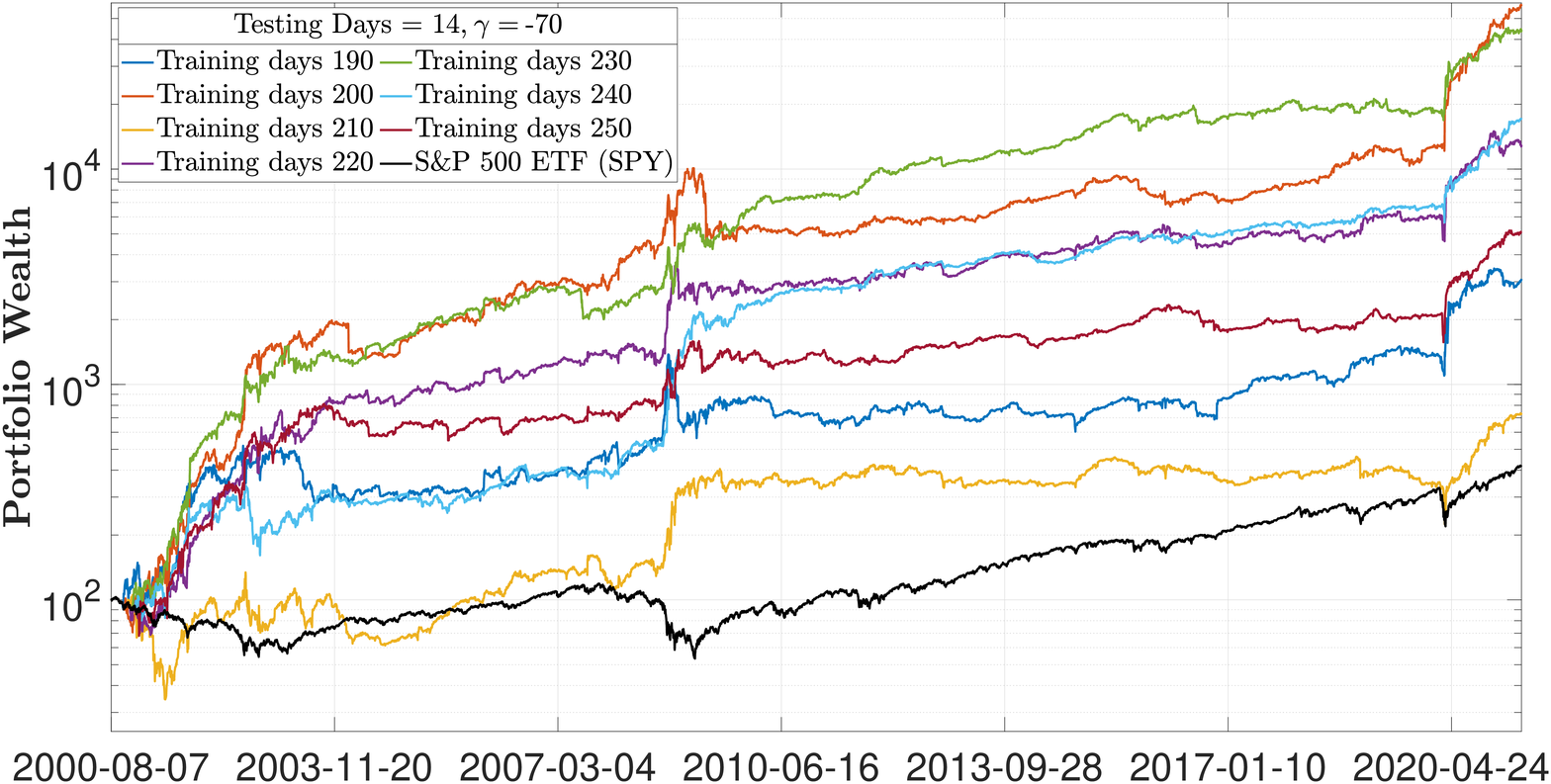}%
\end{minipage}\hspace{0.2cm}%
\begin{minipage}[t]{0.47\columnwidth}%
\includegraphics[angle=90,width=7.2cm,height=10cm]{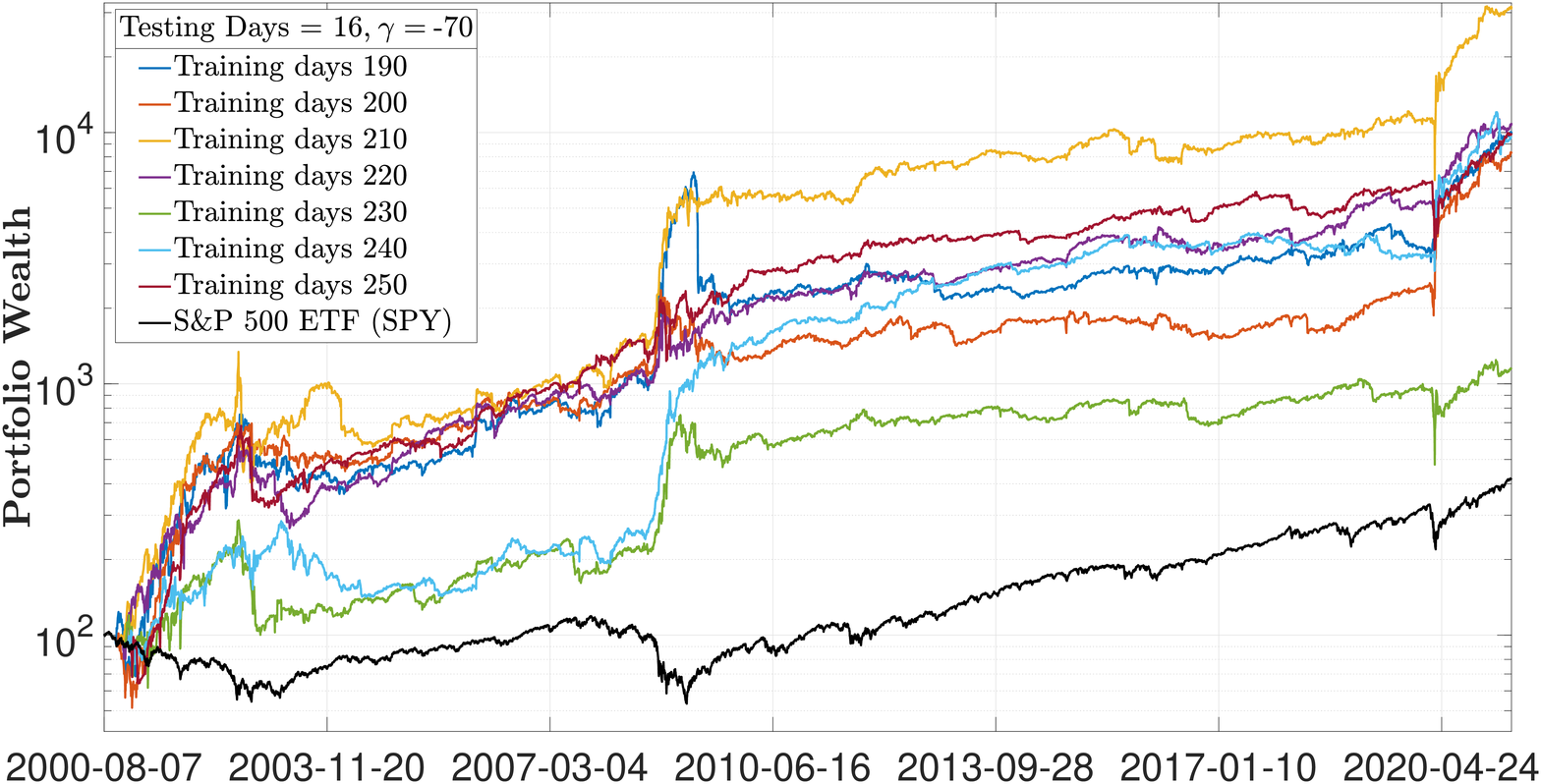}%
\end{minipage}
\par\end{centering}
\caption{Optimal wealth trajectories for $\gamma=-70$ of the constrained portfolio to be market neutral. The vertical axis is in logarithmic scale with base 10. Interest rate is $r=1\%$. Factor number is 6.
\label{fig:wealth_c}}
\end{figure}

\begin{sidewaystable}[ht]
\begin{centering}
\begin{tabular}{cc|cc|cc|cc|cc|cc}
\hline 
\multicolumn{2}{c|}{\diagbox[width=2.8cm]{Windows}{Statistics}} & \multicolumn{2}{c|}{Profit $\left(\%\right)$} & \multicolumn{2}{c|}{Volatility} & \multicolumn{2}{c|}{Expected Return $\left(\%\right)$} & \multicolumn{2}{c|}{Sharpe Ratio} & \multicolumn{2}{c}{Maximum Drawdown}\tabularnewline
\multicolumn{2}{c|}{} & \multicolumn{2}{c|}{$100\ensuremath{\left(\frac{W_{T}-W_{0}}{W_{0}}\right)}$} & \multicolumn{2}{c|}{$\ensuremath{\frac{1}{\sqrt{\Delta t}}\sigma\left(\frac{\Delta W_{t}}{W_{t}}\right)}$} & \multicolumn{2}{c|}{$\ensuremath{\frac{100}{\Delta t}\mathbb{E}\left(\frac{\Delta W_{t}}{W_{t}}\right)}$} & \multicolumn{2}{c|}{$\frac{0.01\left(\mathrm{Return}-r\right)}{\mathrm{Volatility}}$} & \multicolumn{2}{c}{$\underset{t}{\max}\frac{\max_{s\leq t}W_{s}-W_{t}}{\max_{s\leq t}W_{s}}$}\tabularnewline
\hline 
Train & Test & Myopic & Optimal & Myopic & Optimal & Myopic & Optimal & Myopic & Optimal & Myopic & Optimal\tabularnewline
\hline 
190 & 10 & 5614.599 & 28430.844 & 0.193 & 0.413 & 21.507 & 36.079 & 1.658 & 1.162 & 0.264 & 0.683 \tabularnewline
200 & 10 & 3374.815 & 9468.664 & 0.206 & 0.433 & 19.350 & 31.453 & 1.334 & 0.854 & 0.252 & 0.647 \tabularnewline
210 & 10 & 3676.837 & 23666.221 & 0.208 & 0.390 & 19.817 & 34.231 & 1.361 & 1.186 & 0.303 & 0.586 \tabularnewline
220 & 10 & 5610.682 & 33097.813 & 0.191 & 0.401 & 21.577 & 36.662 & 1.686 & 1.245 & 0.213 & 0.509 \tabularnewline
230 & 10 & 5290.267 & 30401.797 & 0.200 & 0.396 & 21.492 & 36.018 & 1.588 & 1.241 & 0.288 & 0.731 \tabularnewline
240 & 10 & 3652.486 & 7649.431 & 0.181 & 0.356 & 19.416 & 27.663 & 1.568 & 0.990 & 0.230 & 0.610 \tabularnewline
250 & 10 & 2845.183 & 11246.119 & 0.167 & 0.323 & 18.029 & 28.594 & 1.569 & 1.212 & 0.243 & 0.606 \tabularnewline
\hline 
190 & 12 & 4420.283 & 32891.053 & 0.264 & 0.471 & 21.678 & 38.674 & 1.130 & 1.050 & 0.473 & 0.612 \tabularnewline
200 & 12 & 2746.609 & 2819.259 & 0.218 & 0.435 & 18.566 & 25.718 & 1.177 & 0.596 & 0.327 & 0.644 \tabularnewline
210 & 12 & 1128.691 & 1948.895 & 0.207 & 0.417 & 14.431 & 23.536 & 0.889 & 0.548 & 0.634 & 0.820 \tabularnewline
220 & 12 & 7008.658 & 70308.312 & 0.197 & 0.368 & 22.741 & 38.813 & 1.741 & 1.582 & 0.208 & 0.443 \tabularnewline
230 & 12 & 5259.303 & 17427.372 & 0.240 & 0.418 & 22.252 & 33.835 & 1.323 & 1.036 & 0.541 & 0.650 \tabularnewline
240 & 12 & 2828.002 & 5578.698 & 0.179 & 0.343 & 18.182 & 25.737 & 1.460 & 0.941 & 0.444 & 0.735 \tabularnewline
250 & 12 & 1840.287 & 8077.588 & 0.196 & 0.354 & 16.401 & 27.798 & 1.152 & 1.013 & 0.467 & 0.645 \tabularnewline
\hline 
190 & 14 & 2414.485 & 14856.244 & 0.321 & 0.551 & 20.182 & 38.205 & 0.764 & 0.750 & 0.514 & 0.581 \tabularnewline
200 & 14 & 12354.895 & 255582.935 & 0.233 & 0.456 & 26.111 & 48.335 & 1.697 & 1.602 & 0.350 & 0.582 \tabularnewline
210 & 14 & 1764.904 & 1488.156 & 0.221 & 0.425 & 16.736 & 22.622 & 0.997 & 0.485 & 0.641 & 0.762 \tabularnewline
220 & 14 & 9537.631 & 96750.772 & 0.228 & 0.395 & 24.794 & 41.083 & 1.632 & 1.564 & 0.432 & 0.552 \tabularnewline
230 & 14 & 8419.730 & 109588.211 & 0.195 & 0.376 & 23.674 & 41.291 & 1.854 & 1.688 & 0.259 & 0.397 \tabularnewline
240 & 14 & 5833.636 & 11760.715 & 0.205 & 0.371 & 22.012 & 30.122 & 1.593 & 1.064 & 0.338 & 0.732 \tabularnewline
250 & 14 & 3297.981 & 15333.644 & 0.187 & 0.361 & 19.065 & 31.266 & 1.474 & 1.171 & 0.266 & 0.444  \tabularnewline
\hline 
190 & 16 & 4506.404 & 26938.729 & 0.299 & 0.555 & 22.667 & 41.564 & 1.004 & 0.855 & 0.476 & 0.771 \tabularnewline
200 & 16 & 4562.667 & 28012.313 & 0.234 & 0.447 & 21.398 & 37.352 & 1.288 & 1.071 & 0.402 & 0.779 \tabularnewline
210 & 16 & 5674.993 & 17314.723 & 0.319 & 0.517 & 24.260 & 37.600 & 1.009 & 0.832 & 0.553 & 0.783 \tabularnewline
220 & 16 & 5116.692 & 14598.738 & 0.231 & 0.456 & 21.968 & 34.896 & 1.360 & 0.908 & 0.538 & 0.709 \tabularnewline
230 & 16 & 1478.791 & 1021.100 & 0.247 & 0.472 & 16.477 & 23.045 & 0.835 & 0.375 & 0.600 & 0.772 \tabularnewline
240 & 16 & 5442.467 & 13908.472 & 0.227 & 0.411 & 22.187 & 32.524 & 1.415 & 1.002 & 0.420 & 0.703 \tabularnewline
250 & 16 & 4273.897 & 29776.589 & 0.189 & 0.364 & 20.417 & 34.671 & 1.580 & 1.351 & 0.501 & 0.622  \tabularnewline
\hline 
\end{tabular}
\par\end{centering}
\centering{}\caption{Annualised statistics of time interval [2000-01-03, 2021-05-06] of myopic wealth and optimal wealth for unconstrained portfolio with $\gamma=-70$. Interest rate is $r=1\%$. Factor number is 6. Among all the different testing-training window combinations, the statistics for the S\&P 500 ETF, which has the largest profit are profit of $359.795\%$, volatility of $0.196$, expected return of $9.414\%$, Sharpe ratio of $0.532$, and maximum drawdown of $0.552$.}\label{tab:statistics_u}
\end{sidewaystable}

\begin{sidewaystable}[ht]
\begin{centering}
\begin{tabular}{cc|cc|cc|cc|cc|cc}
\hline 
\multicolumn{2}{c|}{\diagbox[width=2.8cm]{Windows}{Statistics}} & \multicolumn{2}{c|}{Profit $\left(\%\right)$} & \multicolumn{2}{c|}{Volatility} & \multicolumn{2}{c|}{Expected Return $\left(\%\right)$} & \multicolumn{2}{c|}{Sharpe Ratio} & \multicolumn{2}{c}{Maximum Drawdown}\tabularnewline
\multicolumn{2}{c|}{} & \multicolumn{2}{c|}{$100\ensuremath{\left(\frac{W_{T}-W_{0}}{W_{0}}\right)}$} & \multicolumn{2}{c|}{$\ensuremath{\frac{1}{\sqrt{\Delta t}}\sigma\left(\frac{\Delta W_{t}}{W_{t}}\right)}$} & \multicolumn{2}{c|}{$\ensuremath{\frac{100}{\Delta t}\mathbb{E}\left(\frac{\Delta W_{t}}{W_{t}}\right)}$} & \multicolumn{2}{c|}{$\frac{0.01\left(\mathrm{Return}-r\right)}{\mathrm{Volatility}}$} & \multicolumn{2}{c}{$\underset{t}{\max}\frac{\max_{s\leq t}W_{s}-W_{t}}{\max_{s\leq t}W_{s}}$}\tabularnewline
\hline 
Train & Test & Myopic & Optimal & Myopic & Optimal & Myopic & Optimal & Myopic & Optimal & Myopic & Optimal\tabularnewline
\hline 
190 & 10 & 269.189 & 762.973 & 0.202 & 0.599 & 27.514 & 60.595 & 2.153 & 1.379 & 0.236 & 0.683 \tabularnewline
200 & 10 & 175.525 & 239.890 & 0.224 & 0.652 & 22.285 & 45.156 & 1.435 & 0.618 & 0.252 & 0.604 \tabularnewline
210 & 10 & 151.609 & 377.430 & 0.198 & 0.532 & 20.026 & 45.037 & 1.455 & 1.029 & 0.226 & 0.586 \tabularnewline
220 & 10 & 161.404 & 217.664 & 0.201 & 0.599 & 20.855 & 41.913 & 1.506 & 0.631 & 0.213 & 0.509 \tabularnewline
230 & 10 & 126.343 & 222.963 & 0.199 & 0.593 & 18.022 & 41.636 & 1.263 & 0.650 & 0.231 & 0.731 \tabularnewline
240 & 10 & 105.592 & 64.080 & 0.181 & 0.494 & 15.820 & 22.042 & 1.203 & 0.286 & 0.230 & 0.610 \tabularnewline
250 & 10 & 156.828 & 189.011 & 0.179 & 0.460 & 20.217 & 32.144 & 1.673 & 0.747 & 0.243 & 0.606 \tabularnewline
\hline 
190 & 12 & 434.238 & 3042.101 & 0.232 & 0.616 & 35.367 & 86.106 & 2.560 & 2.652 & 0.268 & 0.573 \tabularnewline
200 & 12 & 285.006 & 215.004 & 0.203 & 0.625 & 28.381 & 42.271 & 2.237 & 0.597 & 0.212 & 0.644 \tabularnewline
210 & 12 & 65.927 & -12.171 & 0.219 & 0.598 & 12.299 & 15.673 & 0.657 & -0.077 & 0.244 & 0.820 \tabularnewline
220 & 12 & 198.079 & 476.881 & 0.218 & 0.521 & 23.721 & 48.340 & 1.619 & 1.217 & 0.208 & 0.443 \tabularnewline
230 & 12 & 187.898 & 300.986 & 0.241 & 0.573 & 23.615 & 43.701 & 1.413 & 0.828 & 0.206 & 0.531 \tabularnewline
240 & 12 & 114.100 & 43.943 & 0.173 & 0.475 & 16.471 & 18.614 & 1.341 & 0.209 & 0.205 & 0.735 \tabularnewline
250 & 12 & 204.540 & 544.735 & 0.172 & 0.467 & 23.447 & 48.015 & 2.120 & 1.483 & 0.178 & 0.435  \tabularnewline
\hline 
190 & 14 & 203.056 & 1184.984 & 0.245 & 0.649 & 24.675 & 71.424 & 1.459 & 1.609 & 0.346 & 0.552 \tabularnewline
200 & 14 & 450.696 & 4240.781 & 0.244 & 0.634 & 36.294 & 93.593 & 2.496 & 2.985 & 0.302 & 0.422 \tabularnewline
210 & 14 & 253.394 & 92.338 & 0.239 & 0.607 & 27.449 & 31.511 & 1.755 & 0.319 & 0.269 & 0.673 \tabularnewline
220 & 14 & 260.782 & 1365.562 & 0.200 & 0.503 & 27.155 & 65.187 & 2.144 & 2.251 & 0.233 & 0.465 \tabularnewline
230 & 14 & 430.667 & 1917.827 & 0.195 & 0.511 & 34.681 & 72.071 & 3.069 & 2.622 & 0.171 & 0.368 \tabularnewline
240 & 14 & 94.496 & -38.989 & 0.207 & 0.523 & 15.209 & 4.384 & 0.958 & -0.270 & 0.217 & 0.732 \tabularnewline
250 & 14 & 274.299 & 855.966 & 0.200 & 0.527 & 28.009 & 58.516 & 2.242 & 1.697 & 0.239 & 0.405  \tabularnewline
\hline 
190 & 16 & 168.095 & 1400.592 & 0.253 & 0.677 & 22.445 & 76.018 & 1.226 & 1.678 & 0.276 & 0.567 \tabularnewline
200 & 16 & 175.055 & 825.837 & 0.245 & 0.635 & 22.742 & 63.564 & 1.309 & 1.362 & 0.402 & 0.779 \tabularnewline
210 & 16 & 361.257 & 426.935 & 0.242 & 0.653 & 32.849 & 54.512 & 2.197 & 0.906 & 0.320 & 0.783 \tabularnewline
220 & 16 & 267.150 & 467.307 & 0.229 & 0.649 & 28.128 & 55.330 & 1.900 & 0.964 & 0.282 & 0.638 \tabularnewline
230 & 16 & 210.867 & -9.694 & 0.230 & 0.690 & 24.924 & 22.927 & 1.613 & -0.056 & 0.269 & 0.772 \tabularnewline
240 & 16 & 63.550 & -11.986 & 0.245 & 0.577 & 12.623 & 14.244 & 0.573 & -0.079 & 0.338 & 0.703 \tabularnewline
250 & 16 & 425.794 & 1088.499 & 0.200 & 0.509 & 34.719 & 61.585 & 2.975 & 1.998 & 0.165 & 0.474  \tabularnewline
\hline 
\end{tabular}
\par\end{centering}
\centering{}\caption{Annualised statistics of sub-interval [2000-08-07, 2005-10-13] of myopic wealth and optimal wealth for unconstrained portfolio with $\gamma=-70$. Factor number is 6. Interest rate is $r=1\%$. Among all the different testing-training window combinations, the statistics for the S\&P 500 ETF, which has the largest profit are profit of $5.078\%$, volatility of $0.186$, expected return of $2.693\%$, Sharpe ratio of $0.023$, and maximum drawdown of $0.421$.}\label{tab:statistics_u_1}
\end{sidewaystable}

\begin{sidewaystable}[ht]
\begin{centering}
\begin{tabular}{cc|cc|cc|cc|cc|cc}
\hline 
\multicolumn{2}{c|}{\diagbox[width=2.8cm]{Windows}{Statistics}} & \multicolumn{2}{c|}{Profit $\left(\%\right)$} & \multicolumn{2}{c|}{Volatility} & \multicolumn{2}{c|}{Expected Return $\left(\%\right)$} & \multicolumn{2}{c|}{Sharpe Ratio} & \multicolumn{2}{c}{Maximum Drawdown}\tabularnewline
\multicolumn{2}{c|}{} & \multicolumn{2}{c|}{$100\ensuremath{\left(\frac{W_{T}-W_{0}}{W_{0}}\right)}$} & \multicolumn{2}{c|}{$\ensuremath{\frac{1}{\sqrt{\Delta t}}\sigma\left(\frac{\Delta W_{t}}{W_{t}}\right)}$} & \multicolumn{2}{c|}{$\ensuremath{\frac{100}{\Delta t}\mathbb{E}\left(\frac{\Delta W_{t}}{W_{t}}\right)}$} & \multicolumn{2}{c|}{$\frac{0.01\left(\mathrm{Return}-r\right)}{\mathrm{Volatility}}$} & \multicolumn{2}{c}{$\underset{t}{\max}\frac{\max_{s\leq t}W_{s}-W_{t}}{\max_{s\leq t}W_{s}}$}\tabularnewline
\hline 
Train & Test & Myopic & Optimal & Myopic & Optimal & Myopic & Optimal & Myopic & Optimal & Myopic & Optimal\tabularnewline
\hline 
190 & 10 & 328.705 & 373.889 & 0.211 & 0.430 & 30.559 & 39.425 & 2.357 & 1.259 & 0.196 & 0.595 \tabularnewline
200 & 10 & 121.789 & 70.556 & 0.218 & 0.413 & 17.902 & 18.758 & 1.110 & 0.370 & 0.240 & 0.608 \tabularnewline
210 & 10 & 141.629 & 384.428 & 0.225 & 0.387 & 19.781 & 38.253 & 1.219 & 1.429 & 0.303 & 0.370 \tabularnewline
220 & 10 & 246.856 & 555.269 & 0.216 & 0.392 & 26.671 & 44.349 & 1.912 & 1.771 & 0.186 & 0.338 \tabularnewline
230 & 10 & 362.727 & 946.501 & 0.192 & 0.347 & 31.923 & 52.030 & 2.795 & 2.709 & 0.250 & 0.324 \tabularnewline
240 & 10 & 287.123 & 478.952 & 0.201 & 0.373 & 28.649 & 41.468 & 2.285 & 1.714 & 0.161 & 0.322 \tabularnewline
250 & 10 & 226.053 & 470.042 & 0.187 & 0.340 & 25.026 & 40.038 & 2.096 & 1.860 & 0.190 & 0.371 
 \tabularnewline
\hline 
190 & 12 & 205.653 & 115.748 & 0.241 & 0.474 & 24.623 & 26.048 & 1.493 & 0.490 & 0.264 & 0.612 \tabularnewline
200 & 12 & 61.391 & 8.194 & 0.242 & 0.429 & 12.243 & 10.782 & 0.556 & 0.029 & 0.282 & 0.546 \tabularnewline
210 & 12 & 241.410 & 685.224 & 0.232 & 0.428 & 26.634 & 48.988 & 1.750 & 1.830 & 0.237 & 0.361 \tabularnewline
220 & 12 & 241.752 & 625.207 & 0.224 & 0.386 & 26.510 & 45.952 & 1.819 & 1.929 & 0.168 & 0.298 \tabularnewline
230 & 12 & 353.907 & 624.522 & 0.186 & 0.352 & 31.415 & 44.989 & 2.836 & 2.117 & 0.108 & 0.308 \tabularnewline
240 & 12 & 358.201 & 823.563 & 0.186 & 0.330 & 31.641 & 48.959 & 2.858 & 2.650 & 0.155 & 0.291 \tabularnewline
250 & 12 & 217.307 & 427.378 & 0.206 & 0.355 & 24.851 & 38.956 & 1.848 & 1.683 & 0.226 & 0.369  \tabularnewline
\hline 
190 & 14 & 208.058 & 155.534 & 0.403 & 0.675 & 28.880 & 37.660 & 0.903 & 0.434 & 0.345 & 0.504 \tabularnewline
200 & 14 & 195.598 & 192.751 & 0.242 & 0.466 & 24.077 & 31.892 & 1.439 & 0.740 & 0.350 & 0.582 \tabularnewline
210 & 14 & 232.307 & 335.514 & 0.243 & 0.430 & 26.355 & 37.557 & 1.624 & 1.178 & 0.327 & 0.436 \tabularnewline
220 & 14 & 274.081 & 451.882 & 0.225 & 0.386 & 28.314 & 40.697 & 1.974 & 1.588 & 0.180 & 0.371 \tabularnewline
230 & 14 & 166.951 & 397.124 & 0.229 & 0.423 & 21.891 & 40.281 & 1.364 & 1.342 & 0.259 & 0.397 \tabularnewline
240 & 14 & 463.927 & 1513.044 & 0.190 & 0.333 & 35.805 & 60.064 & 3.296 & 3.593 & 0.172 & 0.195 \tabularnewline
250 & 14 & 115.122 & 125.401 & 0.219 & 0.374 & 17.422 & 22.873 & 1.072 & 0.671 & 0.177 & 0.379  \tabularnewline
\hline 
190 & 16 & 412.601 & 216.950 & 0.365 & 0.662 & 37.871 & 42.750 & 1.578 & 0.566 & 0.277 & 0.771 \tabularnewline
200 & 16 & 138.883 & 97.803 & 0.251 & 0.458 & 20.097 & 23.752 & 1.070 & 0.442 & 0.237 & 0.591 \tabularnewline
210 & 16 & 240.030 & 402.973 & 0.292 & 0.470 & 28.095 & 42.166 & 1.384 & 1.212 & 0.341 & 0.462 \tabularnewline
220 & 16 & 253.144 & 722.470 & 0.234 & 0.390 & 27.461 & 48.807 & 1.793 & 2.070 & 0.267 & 0.339 \tabularnewline
230 & 16 & 214.321 & 505.117 & 0.208 & 0.391 & 24.634 & 42.820 & 1.802 & 1.682 & 0.327 & 0.381 \tabularnewline
240 & 16 & 484.534 & 1273.849 & 0.215 & 0.377 & 37.044 & 58.652 & 2.989 & 2.912 & 0.195 & 0.358 \tabularnewline
250 & 16 & 216.752 & 486.400 & 0.218 & 0.381 & 25.075 & 41.991 & 1.744 & 1.698 & 0.261 & 0.425  \tabularnewline
\hline 
\end{tabular}
\par\end{centering}
\centering{}\caption{Annualised statistics of sub-interval [2005-10-14, 2010-12-17] of myopic wealth and optimal wealth for unconstrained portfolio with $\gamma=-70$. Interest rate is $r=1\%$. Factor number is 6. Among all the different testing-training window combinations, the statistics for the S\&P 500 ETF, which has the largest profit are profit of $17.725\%$, volatility of $0.247$, expected return of $6.232\%$, Sharpe ratio of $0.152$, and maximum drawdown of $0.552$.}\label{tab:statistics_u_2}
\end{sidewaystable}

\begin{sidewaystable}[ht]
\begin{centering}
\begin{tabular}{cc|cc|cc|cc|cc|cc}
\hline 
\multicolumn{2}{c|}{\diagbox[width=2.8cm]{Windows}{Statistics}} & \multicolumn{2}{c|}{Profit $\left(\%\right)$} & \multicolumn{2}{c|}{Volatility} & \multicolumn{2}{c|}{Expected Return $\left(\%\right)$} & \multicolumn{2}{c|}{Sharpe Ratio} & \multicolumn{2}{c}{Maximum Drawdown}\tabularnewline
\multicolumn{2}{c|}{} & \multicolumn{2}{c|}{$100\ensuremath{\left(\frac{W_{T}-W_{0}}{W_{0}}\right)}$} & \multicolumn{2}{c|}{$\ensuremath{\frac{1}{\sqrt{\Delta t}}\sigma\left(\frac{\Delta W_{t}}{W_{t}}\right)}$} & \multicolumn{2}{c|}{$\ensuremath{\frac{100}{\Delta t}\mathbb{E}\left(\frac{\Delta W_{t}}{W_{t}}\right)}$} & \multicolumn{2}{c|}{$\frac{0.01\left(\mathrm{Return}-r\right)}{\mathrm{Volatility}}$} & \multicolumn{2}{c}{$\underset{t}{\max}\frac{\max_{s\leq t}W_{s}-W_{t}}{\max_{s\leq t}W_{s}}$}\tabularnewline
\hline 
Train & Test & Myopic & Optimal & Myopic & Optimal & Myopic & Optimal & Myopic & Optimal & Myopic & Optimal\tabularnewline
\hline 
190 & 10 & 14.969 & 31.075 & 0.113 & 0.164 & 3.364 & 6.628 & 0.266 & 0.422 & 0.242 & 0.328 \tabularnewline
200 & 10 & 30.881 & 67.641 & 0.114 & 0.176 & 5.908 & 11.620 & 0.603 & 0.835 & 0.173 & 0.219 \tabularnewline
210 & 10 & 83.136 & 132.421 & 0.118 & 0.186 & 12.539 & 18.225 & 1.495 & 1.393 & 0.168 & 0.215 \tabularnewline
220 & 10 & 64.793 & 87.488 & 0.111 & 0.169 & 10.407 & 13.744 & 1.278 & 1.093 & 0.160 & 0.203 \tabularnewline
230 & 10 & 59.327 & 81.535 & 0.109 & 0.157 & 9.748 & 12.943 & 1.202 & 1.113 & 0.170 & 0.254 \tabularnewline
240 & 10 & 71.362 & 102.627 & 0.103 & 0.157 & 11.131 & 15.124 & 1.508 & 1.354 & 0.152 & 0.236 \tabularnewline
250 & 10 & 32.537 & 54.446 & 0.091 & 0.131 & 5.970 & 9.426 & 0.809 & 0.934 & 0.126 & 0.163  \tabularnewline
\hline 
190 & 12 & 83.201 & 99.816 & 0.125 & 0.177 & 12.579 & 15.042 & 1.411 & 1.165 & 0.182 & 0.271 \tabularnewline
200 & 12 & 80.920 & 112.728 & 0.117 & 0.170 & 12.257 & 16.176 & 1.474 & 1.338 & 0.097 & 0.155 \tabularnewline
210 & 12 & 83.273 & 108.217 & 0.115 & 0.158 & 12.515 & 15.597 & 1.543 & 1.401 & 0.212 & 0.293 \tabularnewline
220 & 12 & 62.768 & 82.227 & 0.106 & 0.153 & 10.108 & 12.927 & 1.304 & 1.150 & 0.171 & 0.248 \tabularnewline
230 & 12 & 71.306 & 74.294 & 0.108 & 0.153 & 11.154 & 12.081 & 1.445 & 1.056 & 0.183 & 0.271 \tabularnewline
240 & 12 & 86.698 & 112.837 & 0.108 & 0.161 & 12.868 & 16.154 & 1.704 & 1.425 & 0.161 & 0.212 \tabularnewline
250 & 12 & 46.957 & 78.187 & 0.086 & 0.125 & 7.962 & 12.181 & 1.232 & 1.349 & 0.095 & 0.128  \tabularnewline
\hline 
190 & 14 & 28.173 & 47.050 & 0.144 & 0.213 & 5.879 & 9.761 & 0.433 & 0.493 & 0.205 & 0.323 \tabularnewline
200 & 14 & 51.696 & 78.919 & 0.121 & 0.176 & 8.872 & 12.891 & 0.947 & 0.961 & 0.199 & 0.254 \tabularnewline
210 & 14 & 52.760 & 57.415 & 0.126 & 0.188 & 9.088 & 10.642 & 0.929 & 0.674 & 0.162 & 0.203 \tabularnewline
220 & 14 & 127.425 & 153.183 & 0.124 & 0.184 & 16.867 & 19.897 & 2.044 & 1.583 & 0.188 & 0.291 \tabularnewline
230 & 14 & 152.067 & 239.617 & 0.107 & 0.150 & 18.739 & 25.149 & 2.709 & 2.702 & 0.132 & 0.202 \tabularnewline
240 & 14 & 117.853 & 188.757 & 0.104 & 0.145 & 15.863 & 21.920 & 2.288 & 2.358 & 0.111 & 0.145 \tabularnewline
250 & 14 & 64.242 & 96.591 & 0.101 & 0.143 & 10.298 & 14.353 & 1.402 & 1.418 & 0.123 & 0.184  \tabularnewline
\hline 
190 & 16 & 60.928 & 94.575 & 0.132 & 0.188 & 10.138 & 14.728 & 1.016 & 1.047 & 0.195 & 0.279 \tabularnewline
200 & 16 & 28.547 & 30.225 & 0.140 & 0.193 & 5.886 & 7.009 & 0.453 & 0.350 & 0.263 & 0.309 \tabularnewline
210 & 16 & 30.035 & 43.531 & 0.137 & 0.191 & 6.080 & 8.888 & 0.490 & 0.512 & 0.249 & 0.336 \tabularnewline
220 & 16 & 111.749 & 124.274 & 0.127 & 0.186 & 15.504 & 17.544 & 1.793 & 1.333 & 0.120 & 0.204 \tabularnewline
230 & 16 & 46.677 & 62.443 & 0.121 & 0.170 & 8.256 & 10.966 & 0.868 & 0.813 & 0.150 & 0.196 \tabularnewline
240 & 16 & 138.608 & 211.928 & 0.109 & 0.153 & 17.705 & 23.555 & 2.488 & 2.434 & 0.120 & 0.175 \tabularnewline
250 & 16 & 98.097 & 151.178 & 0.098 & 0.137 & 13.959 & 19.097 & 2.102 & 2.129 & 0.118 & 0.132  \tabularnewline
\hline 
\end{tabular}
\par\end{centering}
\centering{}\caption{Annualised statistics of sub-interval [2010-12-20, 2016-02-25] of myopic wealth and optimal wealth for unconstrained portfolio with $\gamma=-70$. Interest rate is $r=1\%$. Factor number is 6. Among all the different testing-training window combinations, the statistics for the S\&P 500 ETF, which has the largest profit are profit of $75.089\%$, volatility of $0.156$, expected return of $12.066\%$, Sharpe ratio of $1.028$, and maximum drawdown of $0.186$.}\label{tab:statistics_u_3}
\end{sidewaystable}

\begin{sidewaystable}[ht]
\begin{centering}
\begin{tabular}{cc|cc|cc|cc|cc|cc}
\hline 
\multicolumn{2}{c|}{\diagbox[width=2.8cm]{Windows}{Statistics}} & \multicolumn{2}{c|}{Profit $\left(\%\right)$} & \multicolumn{2}{c|}{Volatility} & \multicolumn{2}{c|}{Expected Return $\left(\%\right)$} & \multicolumn{2}{c|}{Sharpe Ratio} & \multicolumn{2}{c}{Maximum Drawdown}\tabularnewline
\multicolumn{2}{c|}{} & \multicolumn{2}{c|}{$100\ensuremath{\left(\frac{W_{T}-W_{0}}{W_{0}}\right)}$} & \multicolumn{2}{c|}{$\ensuremath{\frac{1}{\sqrt{\Delta t}}\sigma\left(\frac{\Delta W_{t}}{W_{t}}\right)}$} & \multicolumn{2}{c|}{$\ensuremath{\frac{100}{\Delta t}\mathbb{E}\left(\frac{\Delta W_{t}}{W_{t}}\right)}$} & \multicolumn{2}{c|}{$\frac{0.01\left(\mathrm{Return}-r\right)}{\mathrm{Volatility}}$} & \multicolumn{2}{c}{$\underset{t}{\max}\frac{\max_{s\leq t}W_{s}-W_{t}}{\max_{s\leq t}W_{s}}$}\tabularnewline
\hline 
Train & Test & Myopic & Optimal & Myopic & Optimal & Myopic & Optimal & Myopic & Optimal & Myopic & Optimal\tabularnewline
\hline 
190 & 10 & 212.181 & 426.093 & 0.225 & 0.333 & 24.517 & 37.519 & 1.634 & 1.756 & 0.233 & 0.317 \tabularnewline
200 & 10 & 339.630 & 901.051 & 0.242 & 0.351 & 31.567 & 50.656 & 2.102 & 2.580 & 0.203 & 0.280 \tabularnewline
210 & 10 & 233.906 & 331.616 & 0.263 & 0.376 & 26.641 & 35.009 & 1.506 & 1.332 & 0.264 & 0.375 \tabularnewline
220 & 10 & 280.673 & 748.468 & 0.216 & 0.320 & 28.341 & 46.668 & 2.090 & 2.575 & 0.135 & 0.202 \tabularnewline
230 & 10 & 221.258 & 390.567 & 0.267 & 0.361 & 26.206 & 37.282 & 1.430 & 1.552 & 0.288 & 0.354 \tabularnewline
240 & 10 & 172.723 & 298.623 & 0.218 & 0.318 & 21.933 & 31.884 & 1.470 & 1.486 & 0.162 & 0.230 \tabularnewline
250 & 10 & 163.417 & 341.385 & 0.193 & 0.269 & 20.793 & 32.623 & 1.596 & 1.921 & 0.141 & 0.259  \tabularnewline
\hline 
190 & 12 & 50.150 & 140.883 & 0.388 & 0.503 & 14.086 & 27.401 & 0.286 & 0.538 & 0.473 & 0.535 \tabularnewline
200 & 12 & 161.066 & 318.941 & 0.277 & 0.391 & 22.017 & 34.466 & 1.085 & 1.248 & 0.327 & 0.470 \tabularnewline
210 & 12 & 18.386 & 41.694 & 0.238 & 0.361 & 6.337 & 13.828 & 0.163 & 0.259 & 0.602 & 0.726 \tabularnewline
220 & 12 & 326.129 & 815.175 & 0.215 & 0.312 & 30.553 & 47.938 & 2.311 & 2.761 & 0.165 & 0.222 \tabularnewline
230 & 12 & 141.034 & 248.719 & 0.354 & 0.473 & 23.007 & 34.789 & 0.772 & 0.877 & 0.541 & 0.650 \tabularnewline
240 & 12 & 57.033 & 96.510 & 0.228 & 0.334 & 11.441 & 18.870 & 0.557 & 0.604 & 0.444 & 0.584 \tabularnewline
250 & 12 & 35.746 & 33.755 & 0.272 & 0.379 & 9.271 & 11.964 & 0.297 & 0.202 & 0.467 & 0.617 \tabularnewline
\hline 
190 & 14 & 108.210 & 206.520 & 0.411 & 0.541 & 21.159 & 33.870 & 0.533 & 0.666 & 0.514 & 0.581 \tabularnewline
200 & 14 & 410.313 & 1038.297 & 0.291 & 0.426 & 35.483 & 55.307 & 1.970 & 2.294 & 0.273 & 0.352 \tabularnewline
210 & 14 & 3.799 & 20.057 & 0.250 & 0.365 & 4.093 & 10.793 & 0.002 & 0.118 & 0.641 & 0.762 \tabularnewline
220 & 14 & 211.682 & 368.207 & 0.320 & 0.435 & 26.752 & 38.448 & 1.155 & 1.238 & 0.432 & 0.552 \tabularnewline
230 & 14 & 128.878 & 203.056 & 0.224 & 0.318 & 18.620 & 26.568 & 1.139 & 1.130 & 0.225 & 0.306 \tabularnewline
240 & 14 & 147.784 & 317.318 & 0.281 & 0.382 & 21.181 & 34.181 & 1.013 & 1.287 & 0.338 & 0.392 \tabularnewline
250 & 14 & 157.352 & 265.662 & 0.206 & 0.286 & 20.604 & 29.470 & 1.454 & 1.526 & 0.266 & 0.386  \tabularnewline
\hline 
190 & 16 & 108.200 & 193.025 & 0.378 & 0.546 & 20.262 & 32.932 & 0.581 & 0.629 & 0.476 & 0.580 \tabularnewline
200 & 16 & 463.357 & 1106.716 & 0.277 & 0.386 & 37.297 & 55.606 & 2.237 & 2.617 & 0.189 & 0.233 \tabularnewline
210 & 16 & 181.377 & 354.925 & 0.496 & 0.623 & 29.935 & 44.782 & 0.665 & 0.842 & 0.553 & 0.634 \tabularnewline
220 & 16 & 91.467 & 42.404 & 0.299 & 0.475 & 16.982 & 18.267 & 0.643 & 0.201 & 0.538 & 0.709 \tabularnewline
230 & 16 & 10.268 & 25.775 & 0.366 & 0.485 & 8.172 & 15.457 & 0.049 & 0.118 & 0.600 & 0.688 \tabularnewline
240 & 16 & 140.840 & 267.735 & 0.296 & 0.420 & 21.255 & 33.522 & 0.927 & 1.044 & 0.420 & 0.525 \tabularnewline
250 & 16 & 29.985 & 66.953 & 0.215 & 0.326 & 7.602 & 15.702 & 0.314 & 0.451 & 0.501 & 0.622  \tabularnewline
\hline 
\end{tabular}
\par\end{centering}
\centering{}\caption{Annualised statistics of sub-interval [2016-02-26, 2021-05-05] of myopic wealth and optimal wealth for unconstrained portfolio with $\gamma=-70$. Interest rate is $r=1\%$. Factor number is 6. Among all the different testing-training window combinations, the statistics for the S\&P 500 ETF, which has the largest profit are profit of $135.484\%$, volatility of $0.185$, expected return of $18.257\%$, Sharpe ratio of $1.405$, and maximum drawdown of $0.337$.}\label{tab:statistics_u_4}
\end{sidewaystable}

\begin{sidewaystable}[ht]
\begin{centering}
\begin{tabular}{cc|cc|cc|cc|cc|cc}
\hline 
\multicolumn{2}{c|}{\diagbox[width=2.8cm]{Windows}{Statistics}} & \multicolumn{2}{c|}{Profit $\left(\%\right)$} & \multicolumn{2}{c|}{Volatility} & \multicolumn{2}{c|}{Expected Return $\left(\%\right)$} & \multicolumn{2}{c|}{Sharpe Ratio} & \multicolumn{2}{c}{Maximum Drawdown}\tabularnewline
\multicolumn{2}{c|}{} & \multicolumn{2}{c|}{$100\ensuremath{\left(\frac{W_{T}-W_{0}}{W_{0}}\right)}$} & \multicolumn{2}{c|}{$\ensuremath{\frac{1}{\sqrt{\Delta t}}\sigma\left(\frac{\Delta W_{t}}{W_{t}}\right)}$} & \multicolumn{2}{c|}{$\ensuremath{\frac{100}{\Delta t}\mathbb{E}\left(\frac{\Delta W_{t}}{W_{t}}\right)}$} & \multicolumn{2}{c|}{$\frac{0.01\left(\mathrm{Return}-r\right)}{\mathrm{Volatility}}$} & \multicolumn{2}{c}{$\underset{t}{\max}\frac{\max_{s\leq t}W_{s}-W_{t}}{\max_{s\leq t}W_{s}}$}\tabularnewline
\hline 
Train & Test & Myopic & Optimal & Myopic & Optimal & Myopic & Optimal & Myopic & Optimal & Myopic & Optimal\tabularnewline
\hline 
190 & 10 & 2746.855 & 5819.132 & 0.146 & 0.323 & 17.336 & 25.042 & 1.760 & 0.999 & 0.290 & 0.685 \tabularnewline
200 & 10 & 4065.223 & 16689.747 & 0.152 & 0.340 & 19.316 & 30.657 & 1.913 & 1.251 & 0.170 & 0.481 \tabularnewline
210 & 10 & 2732.678 & 27413.431 & 0.141 & 0.286 & 17.306 & 31.468 & 1.826 & 1.671 & 0.187 & 0.401 \tabularnewline
220 & 10 & 2331.543 & 13117.227 & 0.137 & 0.291 & 16.550 & 28.117 & 1.776 & 1.389 & 0.175 & 0.384 \tabularnewline
230 & 10 & 1620.735 & 8605.462 & 0.142 & 0.296 & 14.951 & 26.280 & 1.502 & 1.229 & 0.224 & 0.495 \tabularnewline
240 & 10 & 1408.360 & 2823.655 & 0.120 & 0.263 & 14.051 & 19.985 & 1.688 & 0.994 & 0.196 & 0.507 \tabularnewline
250 & 10 & 1308.429 & 4409.846 & 0.112 & 0.238 & 13.646 & 21.571 & 1.763 & 1.267 & 0.174 & 0.337 \tabularnewline 
\hline 
190 & 12 & 1597.937 & 8332.285 & 0.184 & 0.340 & 15.366 & 27.089 & 1.144 & 1.051 & 0.415 & 0.575 \tabularnewline
200 & 12 & 2958.950 & 5846.749 & 0.152 & 0.327 & 17.809 & 25.264 & 1.736 & 0.992 & 0.249 & 0.487 \tabularnewline
210 & 12 & 1680.458 & 3204.479 & 0.140 & 0.304 & 15.042 & 21.687 & 1.540 & 0.891 & 0.254 & 0.720 \tabularnewline
220 & 12 & 2595.199 & 19867.240 & 0.135 & 0.262 & 17.025 & 29.302 & 1.870 & 1.702 & 0.154 & 0.283 \tabularnewline
230 & 12 & 1522.264 & 3647.359 & 0.163 & 0.316 & 14.957 & 22.681 & 1.282 & 0.897 & 0.410 & 0.514 \tabularnewline
240 & 12 & 1962.002 & 4665.097 & 0.136 & 0.262 & 15.776 & 22.386 & 1.697 & 1.169 & 0.295 & 0.650 \tabularnewline
250 & 12 & 803.322 & 3501.733 & 0.119 & 0.240 & 11.531 & 20.497 & 1.348 & 1.174 & 0.194 & 0.360 \tabularnewline 
\hline 
190 & 14 & 1005.513 & 3101.713 & 0.203 & 0.380 & 13.619 & 23.735 & 0.861 & 0.701 & 0.353 & 0.592 \tabularnewline
200 & 14 & 5436.736 & 63513.307 & 0.169 & 0.353 & 20.963 & 37.623 & 1.883 & 1.606 & 0.268 & 0.546 \tabularnewline
210 & 14 & 1030.400 & 657.399 & 0.152 & 0.317 & 12.989 & 14.868 & 1.164 & 0.454 & 0.350 & 0.672 \tabularnewline
220 & 14 & 2317.590 & 13602.197 & 0.157 & 0.290 & 16.804 & 28.201 & 1.546 & 1.403 & 0.221 & 0.338 \tabularnewline
230 & 14 & 3814.400 & 48869.753 & 0.140 & 0.285 & 18.968 & 34.377 & 2.049 & 1.904 & 0.245 & 0.391 \tabularnewline
240 & 14 & 3262.232 & 18622.598 & 0.144 & 0.273 & 18.282 & 29.349 & 1.901 & 1.614 & 0.169 & 0.525 \tabularnewline
250 & 14 & 1214.144 & 5526.874 & 0.125 & 0.268 & 13.463 & 23.417 & 1.531 & 1.204 & 0.195 & 0.329 \tabularnewline 
\hline 
190 & 16 & 2494.671 & 10308.829 & 0.220 & 0.421 & 18.056 & 30.941 & 1.128 & 0.896 & 0.337 & 0.729 \tabularnewline
200 & 16 & 2288.028 & 9070.563 & 0.177 & 0.348 & 17.016 & 28.031 & 1.361 & 1.052 & 0.208 & 0.495 \tabularnewline
210 & 16 & 5381.788 & 32628.602 & 0.210 & 0.373 & 21.643 & 35.060 & 1.509 & 1.329 & 0.387 & 0.697 \tabularnewline
220 & 16 & 1837.231 & 11473.759 & 0.165 & 0.320 & 15.855 & 28.267 & 1.355 & 1.222 & 0.304 & 0.528 \tabularnewline
230 & 16 & 779.969 & 1180.806 & 0.167 & 0.354 & 12.029 & 18.761 & 0.943 & 0.533 & 0.439 & 0.650 \tabularnewline
240 & 16 & 2598.835 & 10224.917 & 0.155 & 0.301 & 17.363 & 27.222 & 1.641 & 1.265 & 0.290 & 0.502 \tabularnewline
250 & 16 & 1843.139 & 10834.226 & 0.148 & 0.287 & 15.725 & 27.300 & 1.526 & 1.348 & 0.373 & 0.527 \tabularnewline 
\hline 
\end{tabular}
\par\end{centering}
\centering{}\caption{Annualised statistics of time interval [2000-01-03, 2021-05-06] of myopic wealth and optimal wealth for constrained portfolio to be market neutral with $\gamma=-70$. Interest rate is $r=1\%$. Factor number is 6. Among all the different testing-training window combinations, the statistics for the S\&P 500 ETF, which has the largest profit are profit of $359.795\%$, volatility of $0.196$, expected return of $9.414\%$, Sharpe ratio of $0.532$, and maximum drawdown of $0.552$.}\label{tab:statistics_c}
\end{sidewaystable}

\begin{sidewaystable}[ht]
\begin{centering}
\begin{tabular}{cc|cc|cc|cc|cc|cc}
\hline 
\multicolumn{2}{c|}{\diagbox[width=2.8cm]{Windows}{Statistics}} & \multicolumn{2}{c|}{Profit $\left(\%\right)$} & \multicolumn{2}{c|}{Volatility} & \multicolumn{2}{c|}{Expected Return $\left(\%\right)$} & \multicolumn{2}{c|}{Sharpe Ratio} & \multicolumn{2}{c}{Maximum Drawdown}\tabularnewline
\multicolumn{2}{c|}{} & \multicolumn{2}{c|}{$100\ensuremath{\left(\frac{W_{T}-W_{0}}{W_{0}}\right)}$} & \multicolumn{2}{c|}{$\ensuremath{\frac{1}{\sqrt{\Delta t}}\sigma\left(\frac{\Delta W_{t}}{W_{t}}\right)}$} & \multicolumn{2}{c|}{$\ensuremath{\frac{100}{\Delta t}\mathbb{E}\left(\frac{\Delta W_{t}}{W_{t}}\right)}$} & \multicolumn{2}{c|}{$\frac{0.01\left(\mathrm{Return}-r\right)}{\mathrm{Volatility}}$} & \multicolumn{2}{c}{$\underset{t}{\max}\frac{\max_{s\leq t}W_{s}-W_{t}}{\max_{s\leq t}W_{s}}$}\tabularnewline
\hline 
Train & Test & Myopic & Optimal & Myopic & Optimal & Myopic & Optimal & Myopic & Optimal & Myopic & Optimal\tabularnewline
\hline 
190 & 10 & 147.205 & 92.481 & 0.152 & 0.472 & 18.799 & 24.060 & 1.853 & 0.409 & 0.261 & 0.685 \tabularnewline
200 & 10 & 236.604 & 437.831 & 0.171 & 0.514 & 25.146 & 46.024 & 2.338 & 1.164 & 0.170 & 0.465 \tabularnewline
210 & 10 & 170.163 & 657.454 & 0.137 & 0.386 & 20.384 & 47.097 & 2.303 & 1.982 & 0.161 & 0.401 \tabularnewline
220 & 10 & 183.112 & 419.189 & 0.141 & 0.415 & 21.388 & 41.089 & 2.366 & 1.410 & 0.103 & 0.384 \tabularnewline
230 & 10 & 145.382 & 350.275 & 0.141 & 0.442 & 18.635 & 39.548 & 1.986 & 1.184 & 0.140 & 0.495 \tabularnewline
240 & 10 & 97.543 & 79.146 & 0.129 & 0.383 & 14.221 & 18.740 & 1.585 & 0.445 & 0.155 & 0.507 \tabularnewline
250 & 10 & 130.837 & 165.659 & 0.123 & 0.346 & 17.252 & 25.254 & 2.110 & 0.900 & 0.082 & 0.256 \tabularnewline 
\hline 
190 & 12 & 135.862 & 416.451 & 0.178 & 0.467 & 18.319 & 42.871 & 1.481 & 1.241 & 0.277 & 0.575 \tabularnewline
200 & 12 & 282.812 & 319.205 & 0.167 & 0.512 & 27.606 & 41.145 & 2.702 & 0.956 & 0.171 & 0.487 \tabularnewline
210 & 12 & 92.010 & 3.476 & 0.161 & 0.457 & 14.061 & 11.261 & 1.196 & -0.001 & 0.177 & 0.720 \tabularnewline
220 & 12 & 202.504 & 679.238 & 0.148 & 0.359 & 22.786 & 46.688 & 2.420 & 2.172 & 0.131 & 0.283 \tabularnewline
230 & 12 & 151.525 & 288.733 & 0.202 & 0.474 & 20.097 & 37.816 & 1.438 & 0.972 & 0.174 & 0.392 \tabularnewline
240 & 12 & 117.597 & 84.197 & 0.129 & 0.363 & 16.132 & 18.643 & 1.837 & 0.496 & 0.214 & 0.650 \tabularnewline
250 & 12 & 141.794 & 413.747 & 0.130 & 0.345 & 18.254 & 38.258 & 2.132 & 1.697 & 0.141 & 0.360  \tabularnewline
\hline 
190 & 14 & 90.001 & 212.098 & 0.194 & 0.497 & 14.410 & 34.790 & 0.973 & 0.741 & 0.274 & 0.504 \tabularnewline
200 & 14 & 350.254 & 2035.505 & 0.201 & 0.517 & 31.393 & 73.036 & 2.589 & 2.640 & 0.268 & 0.356 \tabularnewline
210 & 14 & 120.863 & 2.746 & 0.175 & 0.470 & 17.005 & 11.534 & 1.379 & -0.005 & 0.255 & 0.672 \tabularnewline
220 & 14 & 249.528 & 979.734 & 0.157 & 0.394 & 25.747 & 54.276 & 2.655 & 2.420 & 0.129 & 0.302 \tabularnewline
230 & 14 & 374.764 & 2090.374 & 0.159 & 0.415 & 31.863 & 69.140 & 3.438 & 3.363 & 0.088 & 0.250 \tabularnewline
240 & 14 & 183.131 & 236.359 & 0.167 & 0.394 & 21.862 & 31.605 & 2.001 & 1.021 & 0.169 & 0.525 \tabularnewline
250 & 14 & 195.180 & 652.399 & 0.149 & 0.409 & 22.454 & 48.109 & 2.367 & 1.880 & 0.195 & 0.323  \tabularnewline
\hline 
190 & 16 & 107.599 & 457.913 & 0.215 & 0.542 & 16.551 & 48.192 & 1.018 & 1.134 & 0.337 & 0.518 \tabularnewline
200 & 16 & 190.655 & 638.836 & 0.187 & 0.488 & 22.584 & 51.002 & 1.828 & 1.535 & 0.208 & 0.494 \tabularnewline
210 & 16 & 305.670 & 716.582 & 0.188 & 0.503 & 29.188 & 54.052 & 2.531 & 1.595 & 0.321 & 0.697 \tabularnewline
220 & 16 & 249.729 & 648.011 & 0.189 & 0.460 & 26.289 & 49.805 & 2.206 & 1.651 & 0.295 & 0.528 \tabularnewline
230 & 16 & 157.616 & 90.372 & 0.181 & 0.547 & 20.229 & 27.898 & 1.649 & 0.349 & 0.211 & 0.650 \tabularnewline
240 & 16 & 80.214 & 68.784 & 0.168 & 0.422 & 12.976 & 19.168 & 1.030 & 0.357 & 0.290 & 0.502 \tabularnewline
250 & 16 & 260.325 & 703.142 & 0.166 & 0.416 & 26.666 & 49.793 & 2.593 & 1.927 & 0.203 & 0.527  \tabularnewline
\hline 
\end{tabular}
\par\end{centering}
\centering{}\caption{Annualised statistics of sub-interval [2000-08-07, 2005-10-13] of myopic wealth and optimal wealth for constrained portfolio to be market neutral with $\gamma=-70$. Interest rate is $r=1\%$. Factor number is 6. Among all the different testing-training window combinations, the statistics for the S\&P 500 ETF, which has the largest profit are profit of $5.078\%$, volatility of $0.186$, expected return of $2.693\%$, Sharpe ratio of $0.023$, and maximum drawdown of $0.421$.}\label{tab:statistics_c_1}
\end{sidewaystable}

\begin{sidewaystable}[ht]
\begin{centering}
\begin{tabular}{cc|cc|cc|cc|cc|cc}
\hline 
\multicolumn{2}{c|}{\diagbox[width=2.8cm]{Windows}{Statistics}} & \multicolumn{2}{c|}{Profit $\left(\%\right)$} & \multicolumn{2}{c|}{Volatility} & \multicolumn{2}{c|}{Expected Return $\left(\%\right)$} & \multicolumn{2}{c|}{Sharpe Ratio} & \multicolumn{2}{c}{Maximum Drawdown}\tabularnewline
\multicolumn{2}{c|}{} & \multicolumn{2}{c|}{$100\ensuremath{\left(\frac{W_{T}-W_{0}}{W_{0}}\right)}$} & \multicolumn{2}{c|}{$\ensuremath{\frac{1}{\sqrt{\Delta t}}\sigma\left(\frac{\Delta W_{t}}{W_{t}}\right)}$} & \multicolumn{2}{c|}{$\ensuremath{\frac{100}{\Delta t}\mathbb{E}\left(\frac{\Delta W_{t}}{W_{t}}\right)}$} & \multicolumn{2}{c|}{$\frac{0.01\left(\mathrm{Return}-r\right)}{\mathrm{Volatility}}$} & \multicolumn{2}{c}{$\underset{t}{\max}\frac{\max_{s\leq t}W_{s}-W_{t}}{\max_{s\leq t}W_{s}}$}\tabularnewline
\hline 
Train & Test & Myopic & Optimal & Myopic & Optimal & Myopic & Optimal & Myopic & Optimal & Myopic & Optimal\tabularnewline
\hline 
190 & 10 & 271.079 & 391.537 & 0.182 & 0.347 & 27.155 & 36.831 & 2.398 & 1.606 & 0.154 & 0.417 \tabularnewline
200 & 10 & 174.447 & 191.263 & 0.177 & 0.336 & 21.214 & 26.255 & 1.810 & 1.021 & 0.150 & 0.481 \tabularnewline
210 & 10 & 233.614 & 579.354 & 0.167 & 0.308 & 24.942 & 42.130 & 2.379 & 2.308 & 0.158 & 0.263 \tabularnewline
220 & 10 & 180.011 & 368.381 & 0.164 & 0.302 & 21.509 & 34.689 & 2.003 & 1.789 & 0.175 & 0.291 \tabularnewline
230 & 10 & 162.804 & 391.854 & 0.141 & 0.260 & 19.974 & 34.642 & 2.173 & 2.168 & 0.214 & 0.280 \tabularnewline
240 & 10 & 193.044 & 307.372 & 0.144 & 0.285 & 22.172 & 31.540 & 2.409 & 1.692 & 0.089 & 0.285 \tabularnewline
250 & 10 & 158.641 & 350.665 & 0.141 & 0.268 & 19.719 & 33.233 & 2.139 & 1.962 & 0.174 & 0.337  \tabularnewline
\hline 
190 & 12 & 172.776 & 249.840 & 0.178 & 0.326 & 21.102 & 29.625 & 1.787 & 1.269 & 0.178 & 0.293 \tabularnewline
200 & 12 & 54.100 & 43.971 & 0.170 & 0.301 & 9.902 & 11.784 & 0.705 & 0.327 & 0.249 & 0.403 \tabularnewline
210 & 12 & 256.209 & 615.959 & 0.168 & 0.321 & 26.237 & 43.442 & 2.515 & 2.296 & 0.172 & 0.259 \tabularnewline
220 & 12 & 137.000 & 279.802 & 0.165 & 0.297 & 18.243 & 30.436 & 1.621 & 1.517 & 0.154 & 0.235 \tabularnewline
230 & 12 & 172.518 & 213.405 & 0.121 & 0.257 & 20.427 & 25.758 & 2.644 & 1.452 & 0.090 & 0.366 \tabularnewline
240 & 12 & 262.011 & 531.115 & 0.146 & 0.266 & 26.358 & 39.670 & 2.953 & 2.555 & 0.118 & 0.340 \tabularnewline
250 & 12 & 98.504 & 215.748 & 0.127 & 0.242 & 14.326 & 25.582 & 1.623 & 1.565 & 0.141 & 0.230  \tabularnewline
\hline 
190 & 14 & 114.012 & 125.642 & 0.280 & 0.463 & 18.300 & 25.151 & 0.818 & 0.535 & 0.328 & 0.592 \tabularnewline
200 & 14 & 156.631 & 155.613 & 0.188 & 0.381 & 20.123 & 25.552 & 1.573 & 0.771 & 0.214 & 0.546 \tabularnewline
210 & 14 & 190.472 & 243.485 & 0.180 & 0.338 & 22.452 & 29.643 & 1.899 & 1.208 & 0.203 & 0.303 \tabularnewline
220 & 14 & 134.003 & 202.423 & 0.189 & 0.320 & 18.388 & 26.629 & 1.388 & 1.120 & 0.164 & 0.338 \tabularnewline
230 & 14 & 124.876 & 269.697 & 0.167 & 0.315 & 17.305 & 30.525 & 1.494 & 1.400 & 0.245 & 0.391 \tabularnewline
240 & 14 & 271.538 & 784.733 & 0.163 & 0.292 & 27.114 & 46.956 & 2.715 & 2.916 & 0.154 & 0.222 \tabularnewline
250 & 14 & 70.851 & 95.526 & 0.143 & 0.269 & 11.567 & 16.824 & 1.088 & 0.746 & 0.149 & 0.288  \tabularnewline
\hline 
190 & 16 & 353.246 & 327.648 & 0.264 & 0.490 & 32.633 & 39.707 & 1.978 & 1.013 & 0.201 & 0.729 \tabularnewline
200 & 16 & 143.586 & 124.098 & 0.200 & 0.367 & 19.345 & 22.452 & 1.378 & 0.670 & 0.180 & 0.495 \tabularnewline
210 & 16 & 311.656 & 609.489 & 0.187 & 0.343 & 29.403 & 43.993 & 2.578 & 2.135 & 0.140 & 0.288 \tabularnewline
220 & 16 & 101.771 & 237.584 & 0.187 & 0.323 & 15.503 & 29.019 & 1.124 & 1.243 & 0.228 & 0.384 \tabularnewline
230 & 16 & 136.717 & 270.919 & 0.169 & 0.320 & 18.334 & 30.741 & 1.585 & 1.380 & 0.283 & 0.379 \tabularnewline
240 & 16 & 376.044 & 1035.434 & 0.162 & 0.301 & 31.979 & 52.187 & 3.397 & 3.277 & 0.166 & 0.240 \tabularnewline
250 & 16 & 156.792 & 344.392 & 0.160 & 0.291 & 19.871 & 33.590 & 1.866 & 1.788 & 0.298 & 0.428  \tabularnewline
\hline 
\end{tabular}
\par\end{centering}
\centering{}\caption{Annualised statistics of sub-interval [2005-10-14, 2010-12-17] of myopic wealth and optimal wealth for constrained portfolio to be market neutral with $\gamma=-70$. Interest rate is $r=1\%$. Factor number is 6. Among all the different testing-training window combinations, the statistics for the S\&P 500 ETF, which has the largest profit are profit of $17.725\%$, volatility of $0.247$, expected return of $6.232\%$, Sharpe ratio of $0.152$, and maximum drawdown of $0.552$.}\label{tab:statistics_c_2}
\end{sidewaystable}

\begin{sidewaystable}[ht]
\begin{centering}
\begin{tabular}{cc|cc|cc|cc|cc|cc}
\hline 
\multicolumn{2}{c|}{\diagbox[width=2.8cm]{Windows}{Statistics}} & \multicolumn{2}{c|}{Profit $\left(\%\right)$} & \multicolumn{2}{c|}{Volatility} & \multicolumn{2}{c|}{Expected Return $\left(\%\right)$} & \multicolumn{2}{c|}{Sharpe Ratio} & \multicolumn{2}{c}{Maximum Drawdown}\tabularnewline
\multicolumn{2}{c|}{} & \multicolumn{2}{c|}{$100\ensuremath{\left(\frac{W_{T}-W_{0}}{W_{0}}\right)}$} & \multicolumn{2}{c|}{$\ensuremath{\frac{1}{\sqrt{\Delta t}}\sigma\left(\frac{\Delta W_{t}}{W_{t}}\right)}$} & \multicolumn{2}{c|}{$\ensuremath{\frac{100}{\Delta t}\mathbb{E}\left(\frac{\Delta W_{t}}{W_{t}}\right)}$} & \multicolumn{2}{c|}{$\frac{0.01\left(\mathrm{Return}-r\right)}{\mathrm{Volatility}}$} & \multicolumn{2}{c}{$\underset{t}{\max}\frac{\max_{s\leq t}W_{s}-W_{t}}{\max_{s\leq t}W_{s}}$}\tabularnewline
\hline 
Train & Test & Myopic & Optimal & Myopic & Optimal & Myopic & Optimal & Myopic & Optimal & Myopic & Optimal\tabularnewline
\hline 
190 & 10 & 6.066 & 13.841 & 0.085 & 0.120 & 1.510 & 3.253 & 0.080 & 0.227 & 0.253 & 0.337 \tabularnewline
200 & 10 & 24.633 & 40.880 & 0.083 & 0.117 & 4.644 & 7.377 & 0.652 & 0.782 & 0.143 & 0.180 \tabularnewline
210 & 10 & 46.065 & 65.069 & 0.088 & 0.127 & 7.805 & 10.609 & 1.167 & 1.126 & 0.128 & 0.174 \tabularnewline
220 & 10 & 34.125 & 44.128 & 0.078 & 0.109 & 6.055 & 7.764 & 0.991 & 0.907 & 0.097 & 0.124 \tabularnewline
230 & 10 & 30.275 & 38.710 & 0.080 & 0.113 & 5.516 & 7.071 & 0.850 & 0.771 & 0.153 & 0.189 \tabularnewline
240 & 10 & 39.881 & 54.280 & 0.077 & 0.108 & 6.901 & 9.120 & 1.172 & 1.119 & 0.196 & 0.265 \tabularnewline
250 & 10 & 16.668 & 20.130 & 0.064 & 0.091 & 3.247 & 4.029 & 0.543 & 0.481 & 0.101 & 0.140  \tabularnewline
\hline 
190 & 12 & 54.444 & 70.014 & 0.093 & 0.132 & 8.901 & 11.212 & 1.299 & 1.146 & 0.182 & 0.260 \tabularnewline
200 & 12 & 65.483 & 83.743 & 0.083 & 0.114 & 10.175 & 12.529 & 1.729 & 1.555 & 0.074 & 0.102 \tabularnewline
210 & 12 & 29.749 & 38.434 & 0.087 & 0.120 & 5.474 & 7.083 & 0.766 & 0.722 & 0.213 & 0.282 \tabularnewline
220 & 12 & 55.893 & 64.379 & 0.075 & 0.108 & 8.985 & 10.326 & 1.649 & 1.306 & 0.101 & 0.152 \tabularnewline
230 & 12 & 36.468 & 32.553 & 0.082 & 0.113 & 6.443 & 6.181 & 1.007 & 0.648 & 0.140 & 0.195 \tabularnewline
240 & 12 & 50.922 & 67.277 & 0.081 & 0.112 & 8.428 & 10.749 & 1.405 & 1.318 & 0.110 & 0.144 \tabularnewline
250 & 12 & 19.989 & 27.933 & 0.062 & 0.087 & 3.783 & 5.237 & 0.705 & 0.721 & 0.127 & 0.171  \tabularnewline
\hline 
190 & 14 & 11.391 & 21.686 & 0.111 & 0.152 & 2.722 & 4.983 & 0.188 & 0.308 & 0.202 & 0.277 \tabularnewline
200 & 14 & 27.382 & 39.251 & 0.090 & 0.122 & 5.134 & 7.208 & 0.672 & 0.723 & 0.234 & 0.283 \tabularnewline
210 & 14 & 22.972 & 19.323 & 0.100 & 0.138 & 4.543 & 4.408 & 0.504 & 0.299 & 0.161 & 0.221 \tabularnewline
220 & 14 & 60.816 & 63.612 & 0.090 & 0.132 & 9.718 & 10.525 & 1.487 & 1.055 & 0.091 & 0.145 \tabularnewline
230 & 14 & 90.702 & 132.600 & 0.079 & 0.107 & 12.998 & 17.159 & 2.411 & 2.438 & 0.136 & 0.169 \tabularnewline
240 & 14 & 51.498 & 67.160 & 0.073 & 0.100 & 8.438 & 10.603 & 1.584 & 1.482 & 0.092 & 0.132 \tabularnewline
250 & 14 & 52.219 & 70.980 & 0.071 & 0.098 & 8.536 & 11.061 & 1.658 & 1.579 & 0.095 & 0.138 
 \tabularnewline
\hline
190 & 16 & 17.520 & 24.548 & 0.098 & 0.135 & 3.623 & 5.185 & 0.374 & 0.399 & 0.200 & 0.278 \tabularnewline
200 & 16 & 25.900 & 24.594 & 0.108 & 0.147 & 5.075 & 5.373 & 0.532 & 0.368 & 0.188 & 0.236 \tabularnewline
210 & 16 & 28.896 & 37.806 & 0.100 & 0.137 & 5.465 & 7.218 & 0.647 & 0.619 & 0.196 & 0.251 \tabularnewline
220 & 16 & 54.134 & 68.870 & 0.096 & 0.131 & 8.936 & 11.124 & 1.259 & 1.147 & 0.089 & 0.139 \tabularnewline
230 & 16 & 21.650 & 30.518 & 0.086 & 0.119 & 4.222 & 5.935 & 0.548 & 0.578 & 0.119 & 0.142 \tabularnewline
240 & 16 & 67.361 & 98.955 & 0.082 & 0.113 & 10.471 & 14.172 & 1.797 & 1.826 & 0.092 & 0.124 \tabularnewline
250 & 16 & 36.773 & 48.408 & 0.071 & 0.098 & 6.430 & 8.269 & 1.171 & 1.112 & 0.075 & 0.110  \tabularnewline
\hline 
\end{tabular}
\par\end{centering}
\centering{}\caption{Annualised statistics of sub-interval [2010-12-20, 2016-02-25] of myopic wealth and optimal wealth for constrained portfolio to be market neutral with $\gamma=-70$. Interest rate is $r=1\%$. Factor number is 6. Among all the different testing-training window combinations, the statistics for the S\&P 500 ETF, which has the largest profit are profit of $75.089\%$, volatility of $0.156$, expected return of $12.066\%$, Sharpe ratio of $1.028$, and maximum drawdown of $0.186$.}\label{tab:statistics_c_3}
\end{sidewaystable}

\begin{sidewaystable}[ht]
\begin{centering}
\begin{tabular}{cc|cc|cc|cc|cc|cc}
\hline 
\multicolumn{2}{c|}{\diagbox[width=2.8cm]{Windows}{Statistics}} & \multicolumn{2}{c|}{Profit $\left(\%\right)$} & \multicolumn{2}{c|}{Volatility} & \multicolumn{2}{c|}{Expected Return $\left(\%\right)$} & \multicolumn{2}{c|}{Sharpe Ratio} & \multicolumn{2}{c}{Maximum Drawdown}\tabularnewline
\multicolumn{2}{c|}{} & \multicolumn{2}{c|}{$100\ensuremath{\left(\frac{W_{T}-W_{0}}{W_{0}}\right)}$} & \multicolumn{2}{c|}{$\ensuremath{\frac{1}{\sqrt{\Delta t}}\sigma\left(\frac{\Delta W_{t}}{W_{t}}\right)}$} & \multicolumn{2}{c|}{$\ensuremath{\frac{100}{\Delta t}\mathbb{E}\left(\frac{\Delta W_{t}}{W_{t}}\right)}$} & \multicolumn{2}{c|}{$\frac{0.01\left(\mathrm{Return}-r\right)}{\mathrm{Volatility}}$} & \multicolumn{2}{c}{$\underset{t}{\max}\frac{\max_{s\leq t}W_{s}-W_{t}}{\max_{s\leq t}W_{s}}$}\tabularnewline
\hline 
Train & Test & Myopic & Optimal & Myopic & Optimal & Myopic & Optimal & Myopic & Optimal & Myopic & Optimal\tabularnewline
\hline 
190 & 10 & 189.723 & 441.757 & 0.146 & 0.248 & 21.716 & 35.779 & 2.324 & 2.418 & 0.122 & 0.226 \tabularnewline
200 & 10 & 263.226 & 664.242 & 0.160 & 0.270 & 26.379 & 43.121 & 2.690 & 2.832 & 0.091 & 0.205 \tabularnewline
210 & 10 & 112.127 & 217.387 & 0.158 & 0.260 & 15.857 & 25.720 & 1.439 & 1.444 & 0.180 & 0.293 \tabularnewline
220 & 10 & 128.399 & 276.686 & 0.150 & 0.250 & 17.261 & 28.968 & 1.689 & 1.783 & 0.153 & 0.285 \tabularnewline
230 & 10 & 103.289 & 179.688 & 0.187 & 0.272 & 15.565 & 23.665 & 1.139 & 1.209 & 0.224 & 0.274 \tabularnewline
240 & 10 & 85.314 & 158.637 & 0.121 & 0.193 & 12.841 & 20.505 & 1.509 & 1.557 & 0.142 & 0.192 \tabularnewline
250 & 10 & 102.094 & 213.065 & 0.105 & 0.165 & 14.387 & 23.783 & 2.010 & 2.268 & 0.093 & 0.135  \tabularnewline
\hline 
190 & 12 & 68.565 & 169.894 & 0.253 & 0.346 & 12.917 & 24.384 & 0.586 & 0.902 & 0.415 & 0.490 \tabularnewline
200 & 12 & 218.773 & 448.018 & 0.168 & 0.250 & 23.922 & 36.071 & 2.242 & 2.426 & 0.119 & 0.163 \tabularnewline
210 & 12 & 100.140 & 221.550 & 0.130 & 0.209 & 14.386 & 24.965 & 1.591 & 1.824 & 0.193 & 0.284 \tabularnewline
220 & 12 & 139.880 & 307.505 & 0.136 & 0.213 & 18.022 & 29.684 & 1.998 & 2.245 & 0.118 & 0.164 \tabularnewline
230 & 12 & 73.414 & 131.901 & 0.210 & 0.309 & 12.899 & 21.012 & 0.756 & 0.839 & 0.410 & 0.514 \tabularnewline
240 & 12 & 70.614 & 140.306 & 0.170 & 0.245 & 11.902 & 20.149 & 0.907 & 1.114 & 0.295 & 0.359 \tabularnewline
250 & 12 & 56.659 & 73.136 & 0.140 & 0.211 & 9.769 & 12.929 & 0.903 & 0.754 & 0.194 & 0.279  \tabularnewline
\hline 
190 & 14 & 142.118 & 270.243 & 0.189 & 0.307 & 18.905 & 29.880 & 1.440 & 1.417 & 0.199 & 0.269 \tabularnewline
200 & 14 & 279.372 & 746.858 & 0.174 & 0.268 & 27.408 & 45.009 & 2.572 & 3.053 & 0.156 & 0.195 \tabularnewline
210 & 14 & 43.664 & 80.953 & 0.140 & 0.221 & 8.050 & 14.041 & 0.701 & 0.779 & 0.350 & 0.514 \tabularnewline
220 & 14 & 83.293 & 155.135 & 0.175 & 0.249 & 13.349 & 21.340 & 1.015 & 1.183 & 0.221 & 0.273 \tabularnewline
230 & 14 & 86.514 & 148.774 & 0.138 & 0.205 & 13.158 & 19.912 & 1.324 & 1.392 & 0.151 & 0.204 \tabularnewline
240 & 14 & 110.856 & 277.187 & 0.152 & 0.220 & 15.747 & 28.341 & 1.485 & 2.042 & 0.161 & 0.180 \tabularnewline
250 & 14 & 71.984 & 125.540 & 0.123 & 0.196 & 11.423 & 17.904 & 1.271 & 1.280 & 0.187 & 0.320  \tabularnewline
\hline 
190 & 16 & 132.600 & 247.545 & 0.261 & 0.398 & 19.283 & 30.601 & 0.990 & 1.029 & 0.222 & 0.299 \tabularnewline
200 & 16 & 169.264 & 348.527 & 0.197 & 0.299 & 21.193 & 33.532 & 1.583 & 1.733 & 0.187 & 0.256 \tabularnewline
210 & 16 & 153.224 & 305.587 & 0.311 & 0.410 & 22.454 & 34.850 & 0.934 & 1.158 & 0.387 & 0.468 \tabularnewline
220 & 16 & 79.892 & 175.559 & 0.171 & 0.274 & 12.928 & 23.485 & 0.998 & 1.177 & 0.304 & 0.425 \tabularnewline
230 & 16 & 18.357 & 38.188 & 0.206 & 0.293 & 5.332 & 10.419 & 0.190 & 0.294 & 0.439 & 0.544 \tabularnewline
240 & 16 & 87.000 & 169.719 & 0.187 & 0.285 & 13.969 & 23.350 & 0.988 & 1.108 & 0.217 & 0.321 \tabularnewline
250 & 16 & 52.808 & 105.563 & 0.170 & 0.251 & 9.889 & 17.552 & 0.695 & 0.869 & 0.373 & 0.468  \tabularnewline
\hline 
\end{tabular}
\par\end{centering}
\centering{}\caption{Annualised statistics of sub-interval [2016-02-26, 2021-05-05] of myopic wealth and optimal wealth for constrained portfolio to be market neutral with $\gamma=-70$. Interest rate is $r=1\%$. Factor number is 6. Among all the different testing-training window combinations, the statistics for the S\&P 500 ETF, which has the largest profit are profit of $135.484\%$, volatility of $0.185$, expected return of $18.257\%$, Sharpe ratio of $1.405$, and maximum drawdown of $0.337$.}\label{tab:statistics_c_4}
\end{sidewaystable}

Implementation of these optimal portfolios involves the selection of five hyper-parameters: risk aversion coefficient $\gamma$, number of factors $m$, number of co-integrated stocks $d$, in-sample training window length, and out-of-sample testing window length in each testing and training window. Tables \ref{tab:statistics_u} to Table \ref{tab:statistics_c_4} show considerable variations in portfolio performances as the sliding window lengths change, which generally indicates that determining hyper-parameters appropriately is challenging. In our studies, we prearrange five sets for the hyper-parameters, which contain different values for each, and then evaluate the performances of all hyper-parameter combinations. The indication is that selection of these parameters, prior to real-time trading, is a considerable challenge for practical implementation of these portfolios. In order to achieve better portfolio performance, for every in-sample training and out-of-sample testing window, these hyper-parameters should be selected dynamically, for example utilising the $R^2$ screening method of \citet{YeoPapanicolaou2017}. 

Finally, we should mention the effects of transactions costs and liquidity, which we have not included in these sliding window backtests because our analyses have focused on the robustness of the multiple co-integrated models that we propose with respect to hyper-parameters and the comparisons of performance across different time periods. However, these backtests can rerun with a five or ten basis points penalty on trading and the results will still be reasonable from the point-of-view of a practitioner. Liquidity is a deeper issue because brokers put considerable restraint on traders' leverage and short sales during times of crisis, which means potential returns from statistical arbitrage strategies in 2008 and in 2020 will require a broker who allows trading while most other brokers are freezing their clients' trades. To summarise, backtesting of statistical arbitrage based on historical data, such as we present in this article, are an informative feasibility study, but considerations given to transactions costs and liquidity are essential for practical implementation.

\section{Conclusion}
\label{sec:conclusion}

In this article, we have presented a multiple stocks model for co-integration that can be utilised to analyse statistical arbitrage opportunities. We compute optimal portfolios for both an unconstrained stochastic optimal control problem and a market-neutral-constrained stochastic optimal control problem. The optimal value functions are the solutions to Hamilton-Jacobi-Bellman equations, which for power utility function, can be solved with an exponential ansatz that leads to a system of ordinary differential equations. This system consists of a matrix Riccati equation and two first-order linear ordinary differential equations. We have presented the stability analyses for the solutions to these ordinary differential equations for both the unconstrained and constrained cases. We then apply the optimal formulae to historical data and estimate the model parameters, and then perform sliding window backtests utilising the optimal portfolios. The results of these backtests indicate that statistical arbitrage portfolios have profit-generating potential during periods of higher overall market volatility, but are sensitive to parameter estimation. In particular, we show that profits and Sharpe ratios can be quite good but will vary as we change the lengths of the sliding windows for in-sample training and out-of-sample testing. The conclusion drawn from these backtests is that a priori selection of window-length parameters will be a significant challenge for implementation of statistical arbitrage in practice. 

\FloatBarrier
\renewcommand\refname{References}
\phantomsection
\addcontentsline{toc}{section}{\refname}
\bibliography{bibliography.bib}

\end{document}